\newcounter{fig}
\newcommand{\myself}{\author{Gianluca Cassese}
                     \address{Universit\`{a} Milano Bicocca and University of Lugano}
                     \email{gianluca.cassese@unimib.it}
                     \curraddr{Department of Economics, Statistics and Management, 
                                  Building U7, Room 2097, via Bicocca 
                                  degli Arcimboldi 8, 20126 Milano - Italy}}
\newtheorem{theorem}{Theorem}
\theoremstyle{plain}
\newtheorem{corollary}{Corollary}
\newtheorem{assumption}{Assumption}
\newtheorem{definition}{Definition}
\newtheorem{example}{Example}
\newtheorem{lemma}{Lemma}
\newcommand{\Prob}{\mathbb{P}}
\newcommand{\Sim}{\mathscr{S}}
\newcommand{\B}{\mathfrak{B}} 
\newcommand{\A}{\mathscr{A}} 
\newcommand{\R}{\mathbb{R}} 
\newcommand{\N}{\mathbb{N}}
\newcommand{\RR}{\overline{\R}}
\newcommand{\quot}[1]{\textit{``#1''}}
\newcommand{\abs}[1]{\vert #1\vert} 
\newcommand{\dabs}[1]{\left\vert #1\right\vert} 
\newcommand{\dDer}[2]{\dfrac{d #1}{d #2}}
\newcommand{\net}[3]{\langle #1_{#2}\rangle_{#2\in #3} } 
\newcommand{\neta}[1]{\net{#1}{\alpha}{\mathfrak A}} 
\newcommand{\nnet}[3]{\langle #1\rangle_{#2\in #3} } 
\newcommand{\seq}[2]{\net{#1}{#2}{\mathbb{N}}} 
\newcommand{\sseq}[2]{\nnet{#1}{#2}{\mathbb{N}}} 
\newcommand{\seqn}[1]{\seq{#1}{n}} 
\newcommand{\sseqn}[1]{\sseq{#1}{n}}
\newcommand{\norm}[1]{\Vert #1\Vert}
\newcommand{\set}[1]{\mathbf{1}_{#1}}
\newcommand{\sset}[1]{\mathbf{1}_{\{#1\}}}
\newcommand{\emp}{\varnothing}
\newcommand{\iref}[1]{(\textit{\ref{#1}})}
\newcommand{\imply}[2]{\iref{#1}$\Rightarrow$\iref{#2}}
\newcommand {\X}{\mathscr X}
\newcommand {\Neg}{\mathcal N}
\newcommand {\Fun}{\mathfrak F}
\newcommand {\FA}{\mathfrak F(\mathfrak A)}
\newcommand {\TA}{\mathfrak A}
\newcommand {\FFA}{\mathfrak F_0(\mathfrak A)}
\newcommand {\K}{\mathscr K}
\newcommand {\La}{\mathscr L}
\newcommand {\M}{\mathscr M}
\begin{document} 
\title[Asset Pricing]{Asset Pricing in an Imperfect World}
\myself 
\date\today 
\subjclass{G12, C14.} 
\keywords{Arbitrage, Bid/Ask spreads, Bubbles, Coherence, Risk-neutral probability, 
Transaction costs.}
\thanks{I am grateful to Frank Riedel for discussion on the topic of the paper. }

\begin{abstract} 
In a model with no given probability measure, we consider asset pricing in the 
presence of frictions and other imperfections and characterize the property of 
coherent pricing, a notion related to (but much weaker than) the no arbitrage 
property. We show that prices are coherent if and only if the set of pricing 
measures is non empty, i.e. if pricing by expectation is possible. We then obtain 
a decomposition of coherent prices highlighting the role of bubbles. Eventually 
we show that under very weak conditions the coherent pricing of options 
allows for a very clear representation which allows, as in Breeden and 
Litzenberger \cite{breeden litzenberger}, to extract the implied probability. 
\end{abstract}

\maketitle

\section{Introduction} 
In this paper we study asset pricing for economies in which trading is prone 
to a wide variety of restrictions and costs and in which, in addition, agents 
need not possess the degree of sophistication required in order to assess 
uncertain outcomes via a probability function -- no matter how general we 
interpret this concept. Yet we obtain an exact relationship between prices 
and integrals and a representation of option prices in terms of the implied 
probability.

The assumption of an exogenously given probability contributes substantially 
to traditional financial models. In the first place, it sets the ambient space with
quantities being defined up to null sets rather than as a detailed list of 
characteristics, as in Arrow's notion of a contingent good. Second, in continuous 
time working with null sets is essential for a mathematically sound definition 
of the process of gains from trade and, more generally, for the construction of 
a rich enough set of admissible trading strategies so as to guarantee the opportunity 
of hedging many different derivatives. But even more importantly, the assumption of 
a given probability is crucial for the core principle of modern financial theory, i.e. 
risk neutral pricing. Although many an author inclines to believe that this basic 
principle rests on the simple tenet asserting that markets populated by rational 
economic agents cannot admit arbitrage opportunities, the proof of this claim, 
the \textit{fundamental theorem of asset pricing}, has long been a challenge 
for mathematical economists, from Kreps \cite{kreps} to Delbaen and 
Schachermayer \cite{delbaen schachermayer}. In fact it requires a much 
more stringent condition than absence of arbitrage in which probability is 
needed to induce an appropriate topology.

A convincing amount of empirical and experimental observations seems, however, 
to document attitudes of investors towards uncertainty which are deeply at odds 
with probabilistic rationality%
\footnote{
See \cite{MAFI} and references therein for a brief survey of the experimental
psychology literature and its relation to finance.
}. 
Moreover, some frequently observed facts, as reported e.g. by Lamont and Thaler 
in \cite{lamont thaler} and \cite{lamont thaler JEP}, or as embodied in the high ratio 
of option prices violating some basic no arbitrage condition, are often hard to 
reconcile with classical financial theory. Eventually, the tradition of subjective 
probability suggests that probabilities should be regarded more as the outcome 
of choice than as an input to it.  We take these considerations as the starting 
motivation for our choice in favor of an asset pricing model with no probability 
assumption, a step first made in \cite{MAFI} and then by a few papers which 
include 
\cite{riedel} and \cite{vovk}%
\footnote{
See the paper of Vovk \cite{vovk} for a general introduction to this topic and
for further references.
}.

Once this crucial choice is made a natural possibility to explore is whether
it is possible to retain the full fledged mathematical power of traditional 
models by detecting some reference probability that may employed with 
no additional assumption and without affecting the underlying economic 
model. This is clearly the case with a countable state space, in which any 
strictly positive probability may be introduced at no modeling cost. With a 
general state space, however, this is hardly possible as any probability 
would introduce many more null sets than the empty one, thus distorting 
the original financial model. Following this path it is then unavoidable to 
replace the assumption of a given probability with one guaranteeing some 
countable structure of the state space, as done in \cite{riedel} where the 
state space is a complete separable metric space and assets payoff are 
taken to be continuous.

We take here a radically different and fully general approach with no mathematical 
assumption, neither topological nor measure theoretic, to start with. Our \textit{a priori}
is rather a partial order providing a qualitative description of how uncertain outcomes 
are ranked by agents. We propose a reasonable set of axioms which cover several 
possible situations of interest, including the classical model which embeds thus in our
framework as a particularly interesting special case, another one being that
introduced in \cite{MAFI} in which agents base their choice on a class of
negligible events. We show in Theorem \ref{th risk} that a partial order 
satisfying our axioms may also be interpreted in terms of a corresponding 
coherent risk measure.

Another distinguishing feature of this paper is a rather general description of the market 
mechanism. Not only we consider bid/ask spreads and fixed trading costs but 
we also allow for several additional restrictions to trade. First, agents may be 
prevented from forming portfolios of arbitrarily large size -- it may thus not be 
possible to run money pumps, in case they exist. Arbitrage phenomena have 
as a consequence a minor impact on economic equilibrium and the no arbitrage 
principle looses part of its appeal. Second, not all portfolios may be shorted, and 
not just as a result of trading restrictions. In the presence of credit risk, taking 
long or short positions should be considered as two separate investments, given 
that the implicit level of risk depends on the reliability of the investor playing the 
short side. Third, we do not identify assets with their payoff so that it is possible 
to have two assets promising the same payoff at different prices. Fourth, the 
opportunity to invest in some asset may be available only if combined with other 
assets, e.g. with some collateral. The impact of margins when shorting options, 
an empirical fact accommodated in our model, has rarely been considered in 
asset pricing. Eventually, we don't assume the existence of a riskless asset but 
rather of a \textit{num\'eraire} whose non negative payoff is used as the discount 
factor. In contrast with the basic principles of neoclassical economics, the choice 
of the \textit{num\'eraire}, given the absence of a reference probability, is non 
neutral and an arbitrage opportunity arising with a given \textit{num\'eraire} 
may no longer be such with a different discount factor. 

In the framework outlined above we discuss three distinct notions of market 
rationality: coherence, efficiency and absence of arbitrage opportunities. We 
obtain in Theorem \ref{th coherent} a characterization of coherent prices in 
terms of a set of pricing measures. Our result is near in spirit to that obtained 
in the pioneering work of Jouini and Kallal \cite{jouini kallal} (in an $L^2(P)$ 
setting) but departs from it is several ways. First, pricing measures do not
apply but to claims with limited discounted losses; second, pricing measures
are just finitely additive. We obtain in Theorem \ref{th NFL} exact necessary 
and sufficient conditions for countable additivity which justify 
the conclusion that this additional property should be regarded more as a 
mathematical artifact than an economic implication. In Theorem \ref{th beta} 
we decompose coherent prices highlighting the role of bubbles. 
It is also possible to represent pricing measures with a capacity, similarly to 
what assumed in the work of Chateauneuf et al. \cite{chateauneuf}. The 
connection with capacities is also explored in Cerreia-Vioglio et al. \cite{cerreia}.

Of course, over the years several authors have investigated restrictions to 
trading similar to those considered here. Leaving aside the microstructure 
literature, in which transaction costs are the heart of the matter, the first 
papers have been those of Bensaid et al. \cite{bensaid} and, most of all, of 
Jouini and Kallal \cite{jouini kallal}. More recent papers include Bouchard 
\cite{bouchard}, Napp \cite{napp} (who first models trading restrictions 
via closed convex cones),  Jouini and Napp \cite{jouini napp} (who describe 
investments as cash flows with convex cone constraints and assume no 
\textit{num\'eraire}), Kabanov and Stricker \cite{kabanov stricker} (who consider 
very general forms of costs) and Schachermayer \cite{schachermayer}. With no 
claim to completeness, one should also mention the work of Amihud and Mendelson 
\cite{amihud}, Prisman \cite{prisman} and, most recently, Roux \cite{roux}.

This paper is structured as follows. In section \ref{sec markets} we describe 
markets and introduce the partial order needed to rank uncertain outcomes. In section 
\ref{sec coherence} we discuss the properties of coherence, efficiency and of 
absence of arbitrage and, in the following section \ref{sec coherent}, we obtain 
an explicit characterization of coherent prices in terms of pricing measures. In 
section \ref{sec decomposition} we develop a decomposition of coherent asset 
prices emphasizing the existence of bubbles. We then consider option markets 
in section \ref{sec option} where we prove a general representation for prices of 
convex derivatives, involving bubbles and an implicit pricing measure. Some 
indications for applied work are also given. Auxiliary results and some of the 
proofs will be found in the Appendix.

\subsection{Notation}

Throughout the paper we adopt the following mathematical symbols and conventions.
$\Fun(X)$ denotes the collection of real-valued functions on some space $X$ and 
$\Fun_0(X)$ designates those $f\in\Fun(X)$ whose support $\{x\in X:f(x)\ne0\}$ is a 
finite set. If $f\in\Fun(X)$ and if $-X\subset X$, the symbol $f^c$ will be used to 
denote the conjugate of $f$ defined as
$
f^c(x)=-f(-x)\qquad x\in X
$.
We set conventionally $0/0=0$, $\sup\emp=-\infty$, $\inf\emp=\infty$ and 
$\sum\emp=0$. $\RR$ denotes the extended real numbers.

When a given set $\Omega$ (resp. a family $\A$ of subsets of some set 
$\Omega$) is given, $ba$ (resp. $ba(\A)$) denotes the family of bounded, 
finitely additive set functions defined on all subsets of $\Omega$ (resp. on 
all sets in $\A$). $\Sim(\A)$ denotes the family of $\A$ simple functions. 
When $\mu\in ba$ and $f\in\Fun(\Omega)$ we define its integral as
\begin{equation}
\int fd\mu=\lim_n\int [(f^+\wedge n)-(f^-\wedge n)]d\mu
\end{equation}
if such limit exists in $\RR$, or else $\int fd\mu=\infty$. It is easily seen that 
this definition coincides with \cite[III.2.17]{bible} whenever $f$ is $\mu$ 
integrable, i.e. $f\in L^1(\mu)$. 

\section{Markets, Prices, Investors}
\label{sec markets}
Assets traded on the market are identified with a \quot{ticker}, $\alpha\in\TA$, 
and are associated with a corresponding payoff, $X(\alpha)$. The latter is 
modeled simply as a function on some given space $\Omega$, i.e. an 
element of $\Fun(\Omega)$. As discussed in the Introduction, no mathematical 
structure is imposed on the set of traded payoffs. Although it is natural to 
interpret $\Omega$ as the sample space and $X(\alpha)$ as a random 
quantity (which makes our model look intrinsically static) we may
as well choose $\Omega=S\times\R_+$, with $S$ the sample space and 
$\R_+$ the time domain -- and thus give to our construction  a full fledged 
dynamical structure.

\subsection{Trading Strategies}
Investors trade claims by taking a finite number of either long or short positions, 
in respect of the restrictions imposed by the market. A trading rule is then just 
an element of the space $\FFA$. The trading rule which consists solely of one 
unit of the claim $\alpha\in\TA$ will be denoted by $\delta_\alpha$. To each 
trading rule $\theta$ corresponds the final gain 
\begin{equation}
X(\theta)=\sum_{\alpha\in\TA}\theta(\alpha)X(\alpha)
\end{equation}
Of course, $X(\delta_\alpha)=X(\alpha)$. 

Inspired by real markets, one may imagine several restrictions to asset trading, 
further to the constraints of respecting the balance of budget and of forming 
\textit{finite} portfolios. These include short selling prohibitions or margin 
requirements and others that ultimately aim at enforcing some 
form of bound to losses. The symbol $\Theta$, that denotes hereafter the set 
of all admissible trades, specifies all restrictions to asset trading. We assume 
the following:

\begin{assumption}
\label{ass Theta}
$\Theta$ is a convex subset of $\FFA$ containing the origin.
\end{assumption}

Under Assumption 1 investors need not be permitted to take positions of either sign,
long or short. This is consistent with the restriction to losses recalled above. Moreover,
investors may encounter restrictions in the choice of the scale of the investment. On 
this point we depart significantly from much of the literature on asset pricing with or 
without transaction costs, see e.g. \cite{jouini kallal}, \cite{kabanov stricker} or 
\cite{luttmer}. A possible relaxation of this restriction (on which we shall return) is 
to allow $\lambda\theta\in\Theta$ for all $\lambda\ge0$ whenever $\theta\in\Theta$ 
satisfies $X(\theta)\ge0$. A major implication of this is that arbitrage opportunities, 
when available, may not have a disruptive impact on market equilibrium and the no 
arbitrage principle, as a consequence, may no longer be crucial. Eventually, we do not 
require that $\delta_\alpha\in\Theta$, i.e. that each asset may be traded individually 
due, e.g., to the requirement of putting up margins when taking positions on derivative
markets.

\subsection{Prices and Costs}

For each $\alpha\in\TA$ we denote by $q^a(\alpha)$ and $q^b(\alpha)$ its \textit{ask} 
and \textit{bid} price respectively. 

\begin{assumption}
\label{ass b/a}
The functions $q^a,q^b\in\FA$ are such that $q^a(\alpha)\ge q^b(\alpha)$ 
for all $\alpha\in\TA$.
\end{assumption}

We highlight that in our model financial prices are not defined as functions of the 
asset payoff but depend rather of its name. This apparently innocuous detail makes 
our approach compatible with some pricing anomalies reporting that the trading 
of one same asset at different market locations or simply under different names, 
may produce different prices (see the examples on close-end funds or of twin 
stocks reported in Lamont and Thaler \cite{lamont thaler JEP}).

In order to form a given trading strategy $\theta\in\Theta$ an investor pays an 
ask price for each long position and earns a bid price for each short one. The
corresponding cost amounts thus to
\begin{equation}
\label{q}
q(\theta)
=
\sum_{\alpha\in\TA}\left[\theta(\alpha)^+q^a(\alpha)-\theta(\alpha)^-q^b(\alpha)\right]
\qquad\theta\in\Theta
\end{equation}
It is clear that $q(\delta_\alpha)=q^a(\alpha)$ and, if $-\delta_\alpha\in\Theta$, that 
$q^c(\delta_\alpha)=-q(-\delta_\alpha)=q^b(\alpha)$.

The basic assumption behind \eqref{q} is that each position in a portfolio is priced
separately. This is in accordance with trading anonymity prevailing in specialist markets
but is perhaps not an adequate description of OTC trading. Options markets, on
which we shall focus in the last sections, are quite well represented by \eqref{q}
at least for orders which fall below the size limits of the market maker. Large 
orders, instead, are in general processed on a separate track and the price is
set \textit{ad hoc}.

Market frictions include, further to the bid/ask spread, also some fixed costs, such 
as brokerage fees, in the form of a sunk payment due to have access to the 
market. Given that on each market investors may trade more than one asset, we 
may thus think of markets as a partition $\mathfrak M$ of subsets of $\TA$ and for 
each $M\in\mathfrak M$ we designate by $c(M)\ge0$ the corresponding fixed cost. 
For example, all options on a given underlying are traded on the same market, 
independently of the strike or maturity so that any option strategy will involve the 
same fees. Thus the fixed cost associated with an investment strategy is
\begin{equation}
\label{fee}
c(\theta)
=
\sum_{\{M\in\mathfrak M:\ \sup_{\alpha\in M}\abs{\theta(\alpha)}>0\}}c(M)
\qquad\theta\in\Theta
\end{equation}
A realistic modeling of fixed trading costs turns out to be quite difficult due to
their extremely various nature%
\footnote{
As a matter of fact transaction fees tend to be stepwise increasing with the order size
rather then fixed.
}.
\eqref{fee} is just one possible model.

The total cost associated with a trading strategy $\theta$ amounts to
\begin{equation}
\label{cost}
t(\theta)=q(\theta)+c(\theta)
\qquad
\theta\in\Theta
\end{equation}

It is easy to deduce from \eqref{q} and \eqref{cost} some elementary properties:

\begin{lemma}
\label{lemma subadditive}
The functional $q\in\Fun(\Theta)$ defined in \eqref{q} is (i) positively homogeneous, 
(ii) subadditive and (iii) such that 
\begin{equation}
\label{anonymity}
q(f+g)=q(f)+q(g)
\qquad\text{for all}\quad
f,g\in\Theta
\quad\text{with}\quad
fg\ge0
\end{equation}
The functional $t\in\Fun(\Theta)$ defined in \eqref{cost} is subadditive and 
satisfies $t(0)=0$.
\end{lemma}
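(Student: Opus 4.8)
The plan is to verify each asserted property of $q$ and $t$ directly from the defining formulas \eqref{q}, \eqref{fee}, \eqref{cost}, exploiting the fact that everything is a finite sum over $\TA$ and that the positive- and negative-part operators $x\mapsto x^\pm$ are themselves positively homogeneous and subadditive. Throughout, all the sums in play are genuinely finite because $\theta\in\Theta\subset\FFA$ has finite support, so there are no convergence issues to worry about.

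First I would treat the positive homogeneity (i) of $q$: for $\lambda\ge 0$ one has $(\lambda\theta)(\alpha)^\pm=\lambda\,\theta(\alpha)^\pm$, so each summand in \eqref{q} scales by $\lambda$ and hence $q(\lambda\theta)=\lambda q(\theta)$; one should note in passing that $\lambda\theta\in\Theta$ need not hold in general, so this identity is really a statement about the formula \eqref{q} read as a function on $\FFA$, or is invoked only when $\lambda\theta$ happens to lie in $\Theta$. Next, for subadditivity (ii), fix $f,g\in\Theta$ with $f+g\in\Theta$ and use the pointwise inequalities $(f+g)(\alpha)^+\le f(\alpha)^++g(\alpha)^+$ and $(f+g)(\alpha)^-\le f(\alpha)^-+g(\alpha)^-$; since $q^a(\alpha)\ge q^b(\alpha)$ by Assumption \ref{ass b/a}, the coefficient of $f(\alpha)^++g(\alpha)^+$ is $q^a(\alpha)$ and that of $f(\alpha)^-+g(\alpha)^-$ is $-q^b(\alpha)$, and combining termwise gives $q(f+g)\le q(f)+q(g)$. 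For \eqref{anonymity}, the hypothesis $fg\ge 0$ means that at every $\alpha$ the values $f(\alpha)$ and $g(\alpha)$ have the same sign (or one is zero), whence $(f+g)(\alpha)^+=f(\alpha)^++g(\alpha)^+$ and $(f+g)(\alpha)^-=f(\alpha)^-+g(\alpha)^-$ exactly, turning the inequality above into an equality.

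For the functional $t=q+c$, the claim $t(0)=0$ is immediate: $q(0)=0$ from \eqref{q}, and $c(0)=0$ from \eqref{fee} since the index set $\{M:\sup_{\alpha\in M}|0|>0\}$ is empty and $\sum\emp=0$ by convention. Subadditivity of $t$ then follows from subadditivity of $q$, already established, together with subadditivity of $c$; for the latter, observe that if $\sup_{\alpha\in M}|(f+g)(\alpha)|>0$ then $\sup_{\alpha\in M}|f(\alpha)|>0$ or $\sup_{\alpha\in M}|g(\alpha)|>0$, so the index set in \eqref{fee} for $f+g$ is contained in the union of the corresponding index sets for $f$ and for $g$; since every $c(M)\ge 0$, summing over the larger union only increases the total, giving $c(f+g)\le c(f)+c(g)$.

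The only genuinely delicate point — and the one I would state carefully rather than gloss over — is the status of positive homogeneity, since $q$ is literally only defined on $\Theta$, which by Assumption \ref{ass Theta} is merely convex and need not be a cone; so (i) should be read as: whenever $\theta\in\Theta$ and $\lambda\ge 0$ are such that $\lambda\theta\in\Theta$, then $q(\lambda\theta)=\lambda q(\theta)$. Likewise subadditivity is understood on the domain where $f,g$ and $f+g$ all lie in $\Theta$. Everything else is a routine termwise manipulation of finite sums with the sign conventions fixed in the Notation subsection, and I would present it compactly, perhaps relegating the fully spelled-out computations to a single displayed chain of (in)equalities for $q$ and a one-line set-inclusion argument for $c$.
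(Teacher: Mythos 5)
The only step in this lemma that requires an actual idea is the subadditivity of $q$, and that is precisely where your argument is not a valid derivation as written. You propose to take the pointwise inequalities $(f+g)(\alpha)^+\le f(\alpha)^++g(\alpha)^+$ and $(f+g)(\alpha)^-\le f(\alpha)^-+g(\alpha)^-$, attach the coefficients $q^a(\alpha)$ and $-q^b(\alpha)$, and ``combine termwise''; but multiplying the second inequality by $-q^b(\alpha)$ reverses it whenever $q^b(\alpha)>0$ (and Assumption \ref{ass b/a} imposes no sign on $q^a,q^b$, only $q^a\ge q^b$), so the two inequalities do not simply add up to $q(f+g)\le q(f)+q(g)$. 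What actually saves the argument is that the two slacks coincide: setting $u(\alpha)=f(\alpha)^++g(\alpha)^+-(f+g)(\alpha)^+$, one also has $u(\alpha)=f(\alpha)^-+g(\alpha)^--(f+g)(\alpha)^-$, since both expressions being subtracted have the same difference of positive and negative parts, namely $f(\alpha)+g(\alpha)$; hence
\[
q(f)+q(g)-q(f+g)=\sum_{\alpha\in\TA}u(\alpha)\bigl(q^a(\alpha)-q^b(\alpha)\bigr)\ge0 .
\]
Equivalently --- and this is the route the paper takes --- rewrite each summand of \eqref{q} as $\theta(\alpha)q^a(\alpha)+\theta(\alpha)^-\bigl(q^a(\alpha)-q^b(\alpha)\bigr)$ using $x^+=x+x^-$: the first piece is linear in $\theta$, and the second is subadditive because $x\mapsto x^-$ is subadditive and $q^a-q^b\ge0$. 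You do cite $q^a\ge q^b$ at the right moment, so you clearly know where the assumption enters, but the inequality itself is asserted rather than derived, and the naive termwise reading would fail.

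Everything else is correct. Positive homogeneity, the anonymity identity (same sign pointwise forces $(f+g)^\pm=f^\pm+g^\pm$ exactly, turning the inequality into an equality), $t(0)=0$, and the subadditivity of the fixed-cost term $c$ via the inclusion of index sets together with $c(M)\ge0$ are all routine and correctly handled; in fact the paper's own proof deals only with $q$ and is silent on $t$, so your explicit treatment of $c$ is a welcome addition. Your caveat about the domain --- that $\Theta$ is merely convex, so homogeneity and subadditivity must be read either as statements about the formula \eqref{q} on $\FFA$ or on the subset of $\Theta$ where the relevant combinations remain admissible --- is also a legitimate point that the paper glosses over by simply computing with arbitrary finitely supported $f,g$.
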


In many a paper on asset pricing with frictions, starting with the seminal paper of
Jouini and Kallal \cite{jouini kallal}, subadditivity is the only distinguishing feature 
characterizing the existence of bid/ask spreads. Another paper following this choice 
is that of  Luttmer \cite{luttmer}. In Chateauneuf et al. 
\cite{chateauneuf}, the pricing functional is represented via a capacity and is then 
not only subadditive but even comonotonic, a property somehow akin to 
\eqref{anonymity}. In these papers an explicit description of the costs of trading 
is omitted (a remarkable exception is \cite{kabanov stricker}) and in so doing, we 
claim, one risks to miss important details of the price mechanism and to mix up 
effects that may actually originate from different sources, e.g. bounded rationality 
or restrictions to market participation. We will refer to property \eqref{anonymity}
as \textit{anonymity} and, although in the following Theorems \ref{th coherent} 
through \ref{th beta} this plays no role, it will be important when dealing with option prices 
for which, we believe, it is perfectly adequate.

\subsection{The \textit{Num\'eraire} Asset}
Financial models (with the noteworthy exception of \cite{jouini napp}) commonly 
assume the existence of a riskless asset, often interpreted as a bond, that may 
equally well serve the purpose of borrowing or  lending. This assumption plays 
three distinct roles. First, it enables agents to move wealth back and forth in time 
in a safe way and thus provides a firm basis to define the present value of future 
wealth. Second, it allows to identify explicitly the \textit{num\'eraire} of the 
economy, removing the arbitrariness that arises whenever several assets may 
play that same role. Third, if the investment in the bond is unrestricted then this 
asset plays a residual role in portfolio models, guaranteeing the effectiveness 
of portfolio constraints.

This assumption is however not only strongly counterfactual but more troublesome 
than it appears at first sight. First, if the bond is not fully free of risk but just evolves 
in a predictable way (as is often the case in continuous time models) then the role 
of the discounting asset is no longer neutral as its implicit risk entwines with the one 
originating from the underlying asset. El Karoui and Ravanelli \cite{el karoui ravanelli} 
discuss this point at length and show  that risk measures are affected by discounting 
in a significant way, when the discount factor is risky. Moreover, in equilibrium models, 
such as those considering the role of \textit{noise trading} (see \cite{DSSW} or 
\cite{shleifer vishny}), the riskless nature of the bond may not survive Walras law 
unless its elasticity of supply is infinite%
\footnote{See the criticisms to this assumption made by Lowenstein and Willard
\cite{lowenstein willard 2006}.}. 
Eventually if investors are prone to credit risk one should consider borrowing and lending 
as two different financial contracts, given that the final payoff is ultimately a function of 
the reliability of the two intervening parts. 

We summarize the preceding discussion in the following:

\begin{assumption}
\label{ass safe}
There exists $\alpha_0\in\TA$ such that 
(i) $1\ge X(\alpha_0)>0$, 
(ii) if $\theta\in\Theta$ then $\theta+\lambda\delta_{\alpha_0}\in\Theta$ if and 
only if $\lambda\ge0$,
(iii) $c(\delta_{\alpha_0})=0$,
(iv) $q(\alpha_0)>0$.
\end{assumption}


We shall refer to $\alpha_0$ as the \textit{num\'eraire} asset and simplify  
$\delta_{\alpha_0}$, $X(\alpha_0)$ and $q(\alpha_0)$ as $\delta_0$, $X_0$ 
and $q_0$. Assumption \ref{ass safe}.(\textit i) is fairly general to allow for virtually 
all sorts of dynamics but it excludes the occurrence of default with no recovery 
value. If one visualizes the \textit{num\'eraire} asset as government bonds one 
may perhaps consider this restriction not too far from reality, given that in market 
economies government bonds have always been redeemed at some positive value. 
Alternatively, identifying the \textit{num\'eraire} asset with the bank account one 
may argue, likewise, that the bank account is in most countries assisted by some 
form of deposit insurance, maybe just in the form of the role of lender of last 
resort played by the Central Bank. As in the real world, in our model investors are 
unrestricted in deciding the amount to invest in the \textit{num\'eraire} asset, 
but they cannot take negative positions as this would more appropriately be considered a
different asset, as argued above. Property (\textit{iii}) may perhaps be seen as
the outcome of competition among banks, while (\textit{iv}) justifies referring
to $\alpha_0$ as the \textit{num\'eraire}.

We define normalized payoffs as
\begin{equation}
\label{normalized}
\bar X(\theta)=X(\theta)\slash X_0
\end{equation}

\subsection{Stochastic ordering}
An agent's decision to invest in a given trading strategy $\theta\in\Theta$ is motivated, 
we assume, by the payoff $X(\theta)$ that it generates. However, a full description of 
this quantity for each possible future state $\omega\in\Omega$ is not necessarily a 
correct model of choice as economic agents often do not regard future outcomes as 
\textit{functions} but rather as \textit{equivalence classes}. This is clearly the case 
in expected utility theory and, more generally, in all probability models. Equivalence, 
however, is not only the outcome of an accurate 
probabilistic assessment -- as the classical model implicitly suggests -- but it often 
emerges from the inability of individuals to fully compare events or from their attitude 
to focus attention on scenarios selectively, a fact often documented in empirical decision 
theory and experimental psychology.

These remarks suggest to treat stochastic order as an explicit a priori
of our model and to model it via a binary relation $\ge_*$ on $\Fun(\Omega)$%
\footnote{
We reserve the notation $f\ge g$, $f\vee g$, $f\wedge g$ or $\abs f$ to 
pointwise ordering.
}.
We assume to this end:

\begin{assumption}
\label{ass preorder}
The binary relation $\ge_*$ on $\Fun(\Omega)$ is reflexive, transitive
and satisfies:\\
\begin{tabular}{ll}
(TRIV)& $0\not\ge_*1$;\\
(CONE)&$f_i\ge_*g_i$ and $a_i\in\R_+$ for $i=1,2,$ imply
$a_1f_1+a_2f_2\ge_*a_1g_1+a_2g_2$;\\
(CERT)&$f\ge0$ implies $f\ge_*0$;\\
(APPR)&$f+2^{-n}\ge_*0$ for $n=1,2,\ldots$ implies $f\ge_*0$;\\
(REST)&$f\ge_*0$ and $A\subset\Omega$ imply $f\set A\ge_*0$.
\end{tabular}
\end{assumption}

(\textit{TRIV}) prevents trivial situations in which all bounded functions are 
equivalent to $0$.
(\textit{CONE}) guarantees that non negative elements define a convex cone, a 
crucial property when forming portfolios. (\textit{CERT}) states that the order 
does not contradict certainty; (\textit{APPR}) establishes, more interestingly, 
that a quantity that may be approximated by one considered as non negative by 
an amount which is arbitrarily small for all practical purposes should itself be 
considered as non negative. Eventually, (\textit{REST}) implies that non negativity 
is a global assessment and is preserved when passing to subsets.

We also write $f>_*g$ (resp. $f=_*g$) when $f\ge_* g$ but $g\not\ge_* f$ 
(resp. and $g\ge_*f$) and define
\begin{equation}
\label{f*}
f_*=\sup\{\alpha\in\R:f\ge_*\alpha\}
\quad\text{and}\quad
f^*=-(-f)_*
\end{equation}

A partial order on $\Fun(\Omega)$ satisfying Assumption \ref{ass preorder} 
will be referred to as a regular stochastic order. The following are examples 
such order.

\begin{example}[Certainty and probability]
\label{ex obvious}
Define $f\succeq_0g$ and $f\succeq_Pg$ to mean $f\ge g$ and $P(f\ge g)=1$
if, given a (countably additive) probability $P$.  Both $\succeq_0$ and $\succeq_P$
are regular stochastic orders, the first often being referred to as zero-th order
stochastic dominance. First order stochastic dominance is not a regular stochastic
order as it fails to satisfy (\textit{CONV}). Our setting therefore covers the case of
certainty and probabilistic sophistication.
\end{example}

\begin{example}[Qualitative probability]
\label{ex qualitative}
In his pioneering work, de Finetti \cite{de finetti} introduced the idea of modeling 
the qualitative judgment \quot{event A is more likely than B} as a binary relation, 
$A\succeq B$ satisfying the axioms: 
(\textit a) completeness, 
(\textit b) transitivity, 
(\textit c) $\Omega\succeq A\succeq\emp$ for all events $A$ and 
(\textit d) if  $C\cap(A\cup B)=\emp$ then $A\succeq B$ if and only 
if $A\cup C\succeq B\cup C$. One may define 
\begin{equation}
\label{qualitative preorder}
f\succeq_{dF}g
\quad\text{if and only if}\quad
\{f-g\le-\eta\}\preceq\emp
\qquad\text{for all}\quad 
\eta>0
\end{equation}
It is easily seen that, adding the axiom (e) $\emp\not\succeq\Omega$, then 
$\succeq_{dF}$ is a regular stochastic order exactly because the collection 
$\{A\subset\Omega:A\preceq\emp\}$ is an ideal of subsets of $\Omega$. A 
generalization of this idea was introduced in \cite{MAFI} where an ideal $\Neg$ 
of so-called negligible events (not including $\Omega$) was taken as a primitive 
and a corresponding order $\succeq_\Neg$ was defined as in \eqref{qualitative 
preorder}, i.e.
\begin{equation}
f\succeq_\Neg g
\quad\text{if and only if}\quad
\{f-g\le-\eta\}\in\Neg
\qquad\text{for all}\quad 
\eta>0
\end{equation}
\end{example}

Regular stochastic orders induced by ideals of sets can be characterized as follows:

\begin{lemma}
\label{lemma neg}
A regular stochastic order $\succeq$ is induced by an ideal of subsets of $\Omega$
if and only if there exists a weak$^*$ compact set $\Prob_\succeq\subset\Prob$
such that $\sup\{\mu(A):\mu\in\Prob_\succeq\}\in\{0,1\}$ for all $A\subset\Omega$
and
\begin{equation}
\label{P}
f\succeq g
\quad\text{is equivalent to}\quad
\inf_{\mu\in\Prob_\succeq}\int(f-g)d\mu\ge0
\end{equation}
\end{lemma}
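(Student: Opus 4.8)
The plan is to prove the two implications separately, with the ``only if'' direction being essentially an identification of the relevant ideal with a class of finitely additive probabilities, and the ``if'' direction a routine check of the axioms.

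First I would handle the \emph{only if} direction. Suppose $\succeq$ is induced by an ideal $\Neg$ of subsets of $\Omega$ not containing $\Omega$, via $f \succeq g$ iff $\{f-g \le -\eta\} \in \Neg$ for all $\eta > 0$. Let $\mathfrak{A}_\Neg$ be the algebra (or just the family) generated by $\Neg$; the crucial observation is that $A \mapsto \sset{A} \in \Neg$ and $A \mapsto \sset{A} \notin \Neg$ behave like a $\{0,1\}$-valued content once we pass to the complements: define $\Prob_\Neg$ to be the set of all $\mu \in \Prob$ (finitely additive probabilities on all subsets of $\Omega$) such that $\mu(A) = 0$ for every $A \in \Neg$. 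Because $\Omega \notin \Neg$ and $\Neg$ is an ideal, the family $\{A^c : A \in \Neg\}$ is a filter, hence extends (Zorn's lemma / ultrafilter lemma) to an ultrafilter $\mathcal{U}$, and the associated $\{0,1\}$-valued $\mu_\mathcal{U}$ lies in $\Prob_\Neg$ and is nonzero, so $\Prob_\Neg \ne \emptyset$. Weak$^*$ compactness of $\Prob_\Neg$ is immediate: it is a weak$^*$-closed subset of the weak$^*$-compact set $\Prob \subset \Pba$ (unit ball of $ba(\Omega)$), being an intersection of the closed half-spaces $\{\mu : \mu(A) = 0\}$ over $A \in \Neg$. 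The condition $\sup\{\mu(A) : \mu \in \Prob_\Neg\} \in \{0,1\}$ is then a dichotomy: if $A \in \Neg$ the sup is $0$ by construction; if $A \notin \Neg$ then $A^c$ together with $\Neg$ still generates a proper filter (here one uses that $\Neg$ is an ideal and $A \notin \Neg$), so some ultrafilter containing $A$ avoids $\Neg$, giving a $\mu \in \Prob_\Neg$ with $\mu(A) = 1$. Finally one must verify the representation formula: $f \succeq g$ means $\{f - g \le -\eta\} \in \Neg$ for all $\eta > 0$, i.e. $\mu(\{f-g \le -\eta\}) = 0$ for all $\mu \in \Prob_\Neg$ and all $\eta > 0$; I would show this is equivalent to $\int (f-g)\, d\mu \ge 0$ for all $\mu \in \Prob_\Neg$. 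The forward direction uses that $\mu(\{f-g \le -\eta\})=0$ forces $f-g \ge -\eta$ $\mu$-a.e.\ for every $\eta$, hence $\int(f-g)d\mu \ge -\eta$ for all $\eta$; the reverse uses that if $\mu(\{f-g\le-\eta\}) = \varepsilon > 0$ for some $\eta>0$ and some $\mu \in \Prob_\Neg$, then testing against the ultrafilter-type measure concentrated appropriately (or using (REST) to restrict to that set) produces a witness with negative integral --- this is where I would lean on the $\{0,1\}$-valued structure to reduce to indicator functions.

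For the \emph{if} direction, suppose such a weak$^*$-compact $\Prob_\succeq$ is given with the stated sup condition and representation. Set $\Neg = \{A \subset \Omega : \mu(A) = 0 \text{ for all } \mu \in \Prob_\succeq\} = \{A : \sup_{\mu \in \Prob_\succeq}\mu(A) = 0\}$. One checks directly that $\Neg$ is an ideal (closed under subsets and finite unions --- finite additivity handles the union, since $\mu(A \cup B) \le \mu(A) + \mu(B)$) and that $\Omega \notin \Neg$ (otherwise $\mu(\Omega) = 0$ for all $\mu$, contradicting $\mu \in \Prob$; here one also uses that $\Prob_\succeq$ is nonempty, which follows from (TRIV) via the representation, or should be assumed as part of ``$\Prob_\succeq \subset \Prob$''). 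Then I must show $\succeq$ coincides with $\succeq_\Neg$, i.e.\ that $\inf_{\mu \in \Prob_\succeq}\int(f-g)d\mu \ge 0$ is equivalent to $\{f-g \le -\eta\} \in \Neg$ for all $\eta > 0$ --- the same equivalence as above, now read in the other direction, and again the $\{0,1\}$-valued sup hypothesis is what makes ``small integral'' and ``null exceptional set'' interchangeable. (Strictly, one should double-check whether the lemma's statement intends ``induced by an ideal'' to mean ``is of the form $\succeq_\Neg$ for some ideal $\Neg$''; I will take that reading.)

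The main obstacle I anticipate is the equivalence between the $\eta$-quantified ``the exceptional set is negligible'' formulation and the ``integral is nonnegative'' formulation in the presence of only finite additivity and without any measurability. The clean way through is the $\{0,1\}$-valued hypothesis: it lets me write each $\mu \in \Prob_\succeq$, or at least enough of them, as a $0$-$1$ valued content, reducing integrals of bounded functions to $\essinf$-type quantities, so that $\int(f-g)d\mu \ge 0$ genuinely says $(f-g)_* \ge 0$ relative to the null sets of $\mu$. I would isolate this as a small lemma: for a $\{0,1\}$-valued $\mu \in \Prob$ and bounded $h$, $\int h\, d\mu = \sup\{t : \mu(h < t) = 0\}$, and then take suprema/infima over the family. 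The rest --- weak$^*$ compactness, the ideal axioms, nonemptiness via the ultrafilter lemma --- is standard and I would dispatch it quickly, citing \cite{bible} for the weak$^*$ compactness of $\Prob$ and for the elementary integration facts.
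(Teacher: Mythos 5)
Your proposal is correct and shares the paper's skeleton --- define $\Prob_\succeq$ as the annihilator of the ideal, establish the $\{0,1\}$ dichotomy for $\sup_{\mu}\mu(A)$, prove the integral/negligible-set equivalence, and conversely recover the ideal as the common null sets --- but it substitutes a genuinely different tool at the key existence step. Where you invoke the ultrafilter lemma to manufacture $\{0,1\}$-valued charges (one vanishing on $\Neg$ for nonemptiness, and for each $A\notin\Neg$ one with $\mu(A)=1$, which also serves as the witness with $\int(f-g)d\mu\le-\eta$ in the converse of the equivalence), the paper instead applies its separation result (Theorem \ref{th cone}) to the convex cone $\{a\set{A^c}:a\ge0\}$ to get $\mu\in\Prob_\succeq$ with $\mu(A^c)=0$, uses the conditioning map $\mu\mapsto\mu_A$ to force the dichotomy, and uses weak$^*$ compactness to attain $\sup_\mu\mu(f<-\eta)=1$ when producing the negative-integral witness. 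Your route is more elementary and self-contained for this lemma (no separation machinery) and makes the two-valued structure explicit; the paper's route reuses Theorem \ref{th cone}, which it needs elsewhere anyway. One caution on your side: in the \emph{if} direction the elements of a given $\Prob_\succeq$ need not themselves be $\{0,1\}$-valued --- only the supremum over the family is --- so your reduction lemma $\int h\,d\mu=\sup\{t:\mu(h<t)=0\}$ does not apply measure by measure there; what actually closes that direction is exactly the paper's move, namely weak$^*$ compactness (which you have as a hypothesis) to attain $\sup_\mu\mu(f-g\le-\eta)=1$ at some $\mu_0$, after which a truncation computation gives $\int(f-g)d\mu_0\le-\eta$ even for unbounded $f-g$. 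With that point made explicit, your argument is complete.
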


\begin{proof}
Assume that $\succeq$ is a regular stochastic order induced by an ideal
$\Neg$ and let
\begin{equation}
\Prob_\succeq
	=
\{\mu\in\Prob:\mu(A)=0\text{ for all }A\in\Neg\}
\end{equation}
Suppose that $\mu(A)>0$ for some $A\subset\Omega$ and $\mu\in\Prob_\succeq$.
Then $\mu_A$, the conditioning of $\mu$ to $A$, is again an element
of $\Prob_\succeq$ and $\mu_A(A)=1$. Thus
$\sup\{\mu(A):\mu\in\Prob_\succeq\}\in\{0,1\}$ for all $A\subset\Omega$.
Take $A\not\in\Neg$. The set $\{a\set{A^c}:a\ge0\}$ is a convex cone and 
$a\set{A^c}\succeq1$ would imply $0\succeq\set A$, a contradiction. By
Theorem \ref{th cone} there exists $\mu\in\Prob_\succeq$ such that 
$\mu(A^c)=0$. This proves that $\sup\{\mu(A):\mu\in\Prob_\succeq\}=1$
if and only if $A\notin\Neg$. Suppose that $f\succeq0$. Fix $\eta>0$ and observe 
that $-\eta\sset{f<-\eta}\ge f\sset{f<-\eta}\succeq0$, by (\textit{CERT}) and 
(\textit{REST}). It follows from (\textit{CONV}) that $\{f<-\eta\}\in\Neg$ so that
\begin{align*}
\inf_{\mu\in\Prob_\succeq}\int fd\mu
	=
\inf_{\mu\in\Prob_\succeq}\int_{\{f\ge-\eta\}} fd\mu
	\ge
-\eta
\end{align*}
and thus that $\inf_{\mu\in\Prob_\succeq}\int fd\mu\ge0$. Conversely, let 
$\inf_{\mu\in\Prob_\succeq}\int fd\mu\ge0$. If 
$\sup_{\mu\in\Prob_\succeq}\mu(f<-\eta)=1$ for some $\eta>0$, then, by weak$^*$
compactness there would exist $\mu_0\in\Prob_\succeq$ such that
$\mu_0(f<-\eta)=1$ and so $\int d\mu_0\le-\eta$, a contradiction.
Thus necessarily $\sup_{\mu\in\Prob_\succeq}\mu(f<-\eta)=0$,
i.e. $\{f<-\eta\}\in\Neg$ so that $f\succeq0$. Eventually, if $\Prob_\succeq$ 
has the above properties and \eqref{P} is taken as a definition of $\succeq$, define 
$\Neg=\{A\subset\Omega:\sup_{\mu\in\Prob_\succeq}\mu(A)=0\}$. It is
easily seen that $\Neg$ is an ideal of subsets of $\Omega$ and that
$\Omega\notin\Neg$. Moreover, if $\inf_{\mu\in\Prob_\succeq}\int(f-g)d\mu\ge0$
it must be that $\sup_{\mu\in\Prob_\succeq}\mu(f-g\le-\eta)=0$ for all
$\eta>0$ and so $\succeq$ is indeed a regular stochastic order induced by
an ideal.
\end{proof}

Despite the natural interpretation of $\ge_*$ as an element of subjective choice, 
it may well be interpreted in more applied terms. Let us recall \cite[Definition 3.1]
{el karoui ravanelli} that a map $\rho:\Fun(\Omega)\to\RR$ is a coherent, cash 
subadditive risk measure whenever $\rho$ is
(\textit a) positively homogeneous, 
(\textit b) subadditive, 
(\textit c) inversely monotone (i.e. $f\le g$ implies $\rho(f)\ge\rho(g)$) and 
(\textit d) cash-subadditive, (i.e. $\rho(f+\alpha)\ge\rho(f)-\alpha$ when $\alpha\in\R_+$). 
If, in addition, $\rho$ satisfies
(\textit e) $\rho(f)=\rho(f\wedge0)$,
then we refer to it as a \textit{loss measure}. It may be easily proved that
if $\rho$ is a coherent, cash subadditive risk measure, then 
$\hat\rho(f)=\rho(f\wedge0)$ is a coherent, cash subadditive loss measure.

\begin{theorem}
\label{th risk}
$\succeq$ is a regular stochastic order if and only if there exists a coherent, cash 
subadditive loss measure $\rho$ such that $\rho(-1)>0$ and 
\begin{equation}
\label{succ}
f\succeq g
\quad\text{if and only if}\quad
\rho(f-g)\le0
\qquad\text{for all }
f,g\in\Fun(\Omega)
\end{equation}
\end{theorem}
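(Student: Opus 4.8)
The plan is to prove the equivalence in two directions, building a dictionary between the order $\succeq$ and the loss measure $\rho$ via $\rho(f) = -(f\wedge 0)_*$, where $g_* = \sup\{\alpha\in\R : g \succeq_* \alpha\}$ as in \eqref{f*}. I would first verify that $\succeq$ is determined by its nonnegative cone (reflexivity and (CONE) give $f\succeq g \iff f-g\succeq 0$), so the whole order is encoded in the set $\mathscr{C} = \{h : h\succeq 0\}$, and moreover by (CERT), (REST) and (CONE) one sees $h\succeq 0$ iff $h\wedge 0\succeq 0$: indeed $h = (h\wedge 0) + h^+$ with $h^+\ge 0\succeq 0$ gives one direction, and for the converse I would use (REST) to restrict to $\{h<0\}$ together with (CERT) on the complement. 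Hence $f\succeq g$ iff $(f-g)\wedge 0\succeq 0$, which already suggests the loss-measure form.

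For the ``only if'' direction, given a regular stochastic order I define $\rho(f) = -((f\wedge 0)_*) = -\sup\{\alpha\in\R : f\wedge 0\succeq_*\alpha\}$ and check the axioms (a)--(e) for a coherent cash-subadditive loss measure. Property (e) is automatic from the definition; positive homogeneity (a) follows from (CONE) applied with scalars; subadditivity (b) is the main computation and uses (CONE) to add two approximating inequalities $f_i\wedge 0 \succeq_* \alpha_i$ together with the pointwise bound $(f_1+f_2)\wedge 0 \ge (f_1\wedge 0)+(f_2\wedge 0)$ and (CERT); inverse monotonicity (c) and cash-subadditivity (d) likewise follow from (CERT) and (CONE), the latter from $(f+\alpha)\wedge 0 \ge (f\wedge 0) - \alpha$ for $\alpha\ge 0$. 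The condition $\rho(-1)>0$ is exactly (TRIV): $\rho(-1) = -((-1)_*) = -\sup\{\alpha : -1\succeq_*\alpha\}$, and $\rho(-1)\le 0$ would force $-1\succeq_* 0$, hence (adding $1$, via (CONE)) $0\succeq_* 1$, contradicting (TRIV). Finally I verify \eqref{succ}: $\rho(f-g)\le 0$ means $\sup\{\alpha : (f-g)\wedge 0 \succeq_*\alpha\}\ge 0$, i.e. $(f-g)\wedge 0 \succeq_* -2^{-n}$ for all $n$; by (CONE) this gives $(f-g)\wedge 0 + 2^{-n}\succeq_* 0$, so (APPR) yields $(f-g)\wedge 0\succeq_* 0$, which by the first paragraph is equivalent to $f\succeq g$; the converse reading of the chain is immediate.

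For the ``if'' direction, given such a $\rho$ I define $f\succeq g$ by $\rho(f-g)\le 0$ and check Assumption \ref{ass preorder}. Reflexivity is $\rho(0)\le 0$, which follows from positive homogeneity ($\rho(0)=\rho(2\cdot 0)=2\rho(0)$). Transitivity uses subadditivity: $\rho(f-h)=\rho((f-g)+(g-h))\le \rho(f-g)+\rho(g-h)\le 0$. (TRIV) is $\rho(-1)>0$ by hypothesis. (CONE) follows from positive homogeneity and subadditivity of $\rho$ applied to $a_1(f_1-g_1)+a_2(f_2-g_2)$. (CERT): if $f\ge 0$ then $-f\le 0$, so $(-f)\wedge 0 = -f$ and, using (e) together with inverse monotonicity against $g\equiv 0$ (where $\rho(0)\le 0$), one gets $\rho(-f)=\rho((-f)\wedge 0)\le \rho(0\wedge 0)=\rho(0)\le 0$, i.e. $f\succeq 0$. (APPR): if $\rho(-(f+2^{-n}))\le 0$ for all $n$, cash-subadditivity with $\alpha=2^{-n}$ gives $\rho(-f) - 2^{-n} \le \rho(-f-2^{-n})\le 0$, so $\rho(-f)\le 2^{-n}$ for all $n$, whence $\rho(-f)\le 0$. (REST) is the place where property (e) is essential: if $f\succeq 0$ and $A\subset\Omega$, then $(-f\set A)\wedge 0$ and $(-f)\wedge 0$ agree on $A$ and the former is $0$ off $A$ while the latter is $\le 0$ everywhere, so $(-f\set A)\wedge 0 \ge (-f)\wedge 0$ pointwise; inverse monotonicity and (e) then give $\rho(-f\set A)=\rho((-f\set A)\wedge 0)\le \rho((-f)\wedge 0)=\rho(-f)\le 0$, i.e. $f\set A\succeq 0$.

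The main obstacle I anticipate is the bookkeeping around (APPR) versus cash-subadditivity: (APPR) is an ``archimedean'' closure statement about the order, while cash-subadditivity is the quantitative reflection of it, and one must be careful that the truncation $f\wedge 0$ in the definition of $\rho$ interacts correctly with the shift by $2^{-n}$ — this is exactly why the loss-measure axiom (e) and the approximation axiom (APPR) are paired in the statement. Everything else is a routine translation between the cone language of the order and the functional language of the risk measure.
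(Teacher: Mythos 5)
Your overall strategy coincides with the paper's: the loss measure you define, $\rho(f)=-\bigl((f\wedge0)_*\bigr)$, is exactly the paper's $\sigma_\succeq(f)=\inf\{\beta\in\R:(f\wedge0)+\beta\succeq0\}$, and both directions proceed by the same axiom-by-axiom verification, with the same use of (\textit{APPR}) to obtain \eqref{succ} and of (\textit{CERT}) together with (\textit{REST}) to reduce $f\succeq0$ to $(f\wedge0)\succeq0$. The forward direction is correct as written.

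The converse direction, however, carries a systematic sign error that invalidates several steps as stated. Under \eqref{succ}, $f\succeq0$ means $\rho(f)\le0$, yet throughout your third paragraph you test $\rho(-f)\le0$ instead. In (\textit{CERT}) this produces an inequality that is actually false: for $f\ge0$ one has $(-f)\wedge0=-f\le0=0\wedge0$, so inverse monotonicity gives $\rho((-f)\wedge0)\ge\rho(0\wedge0)$, the opposite of what you wrote; and even if your inequality held, $\rho(-f)\le0$ is not the statement $f\succeq0$. The same spurious minus sign appears in (\textit{APPR}) (the hypothesis $f+2^{-n}\succeq0$ translates to $\rho(f+2^{-n})\le0$, not to $\rho(-(f+2^{-n}))\le0$) and in (\textit{REST}). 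The repair is mechanical, namely delete the minus signs: for (\textit{CERT}), $f\ge0$ gives $f\wedge0=0$, hence $\rho(f)=\rho(f\wedge0)=\rho(0)\le0$; for (\textit{REST}), $(f\set A)\wedge0=\set A(f\wedge0)\ge f\wedge0$, so inverse monotonicity and (e) give $\rho(f\set A)\le\rho(f)\le0$; for (\textit{APPR}), cash subadditivity gives $\rho(f)\le\rho(f+2^{-n})+2^{-n}\le2^{-n}$ for all $n$. After this correction your argument coincides with the paper's proof.
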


\begin{proof}
Let $\succeq$ be a regular stochastic order. Define
\begin{equation}
\sigma_\succeq(f)=\inf\{\beta\in\R:(f\wedge0)+\beta\succeq0\}
\end{equation}
$\sigma_\succeq$ is subadditive and positively homogeneous by (\textit{CONE})
and inversely monotone by (\textit{CERT}). $\sigma_\succeq(-1)>0$ is obvious
given (\textit{TRIV}). To prove cash subadditivity we may restrict to the case of 
$f\in\Fun(\Omega)$ and $a>0$ such that $\sigma_\succeq(f+a)<\infty$ i.e. such 
that $((f+a)\wedge0)+\beta\succeq0$ for some $\beta\in\R$. But then
\begin{align*}
f\wedge0+(a+\beta)
	\ge
f\wedge -a+(a+\beta)
	\succeq
0
\end{align*}
so that, by (\textit{CERT}), $a+\beta\ge\sigma_\succeq(f)$ i.e. 
$\sigma_\succeq(f+a)\ge\sigma_\succeq(f)-a$. To see that $\succeq$ is 
related with $\sigma_\succeq$ via \eqref{succ}, observe that
$\sigma_\succeq(f)\le0$ is equivalent to $(f\wedge0)+2^{-n}\succeq0$ 
i.e., by (\textit{APPR}), to $(f\wedge0)\succeq0$ and this in turn to $
f\succeq0$ by (\textit{CERT}) and (\textit{REST}).

Conversely, assume that $\rho$ is a coherent, cash subadditive loss measure and 
define $\succeq$ via \eqref{succ}. Suppose that $f_i\succeq g_i$ and that 
$a_i\in\R_+$ for $i=1,2$. Then,
\begin{align*}
\rho(a_1(f_1-g_1)+a_2(f_2-g_2))
	\le
a_1\rho(f_1-g_1)+a_2\rho(f_2-g_2)
	\le
0
\end{align*}
so that $a_1f_1+a_2f_2\succeq a_1g_1+a_2g_2$ and (\textit{CONE}) holds.
(\textit{CERT}) is clear while (\textit{REST}) follows from
\begin{align*}
\rho(f\set A)
	=
\rho(\set A (f\wedge 0))
	\le
\rho(f\wedge 0)
	=
\rho(f)
\end{align*}
Eventually, by
$
\rho(f+2^{-n})
	\ge
\rho(f)-2^{-n}
$
we conclude that $f+2^{-n}\succeq0$ for all $n$ implies $f\succeq0$.
\end{proof}

We will write 
\begin{equation}
\label{N*}
\Neg_*
	=
\left\{A\subset\Omega:0\ge_*\set A\right\},
	\qquad
\B_*
	=
\{f\in\Fun(\Omega):\eta\ge_*\abs f\text{ for some }\eta>0\}
\end{equation}

\begin{equation}
\label{P*}
\Prob_*=\{\mu\in\Prob:\mu(A)=0\text{ for all }A\in\Neg_*\}
\end{equation}•
and, when $\A$ is an algebra of subsets of $\Omega$ containing $\Neg_*$,
\begin{equation}
\label{ba*}
ba_*(\A)=\{\lambda\in ba(\A):\lambda(N)=0\text{ for all }N\in\Neg_*\}
\end{equation}

A noteworthy property of the regular stochastic order $\ge_*$ is the following:
\begin{align}
\label{prop}
f\ge_*0
\text{ and } 
b\in\B_+
\qquad\text{imply}\qquad
fb\ge_*0
\end{align}
a fact that follows from (\textit{REST}) when $b$ is simple and extends
to the more general case by (\textit{APPR}) and uniform convergence.

Property \eqref{prop} and Assumption \ref{ass safe}.(\textit i) have an interesting
economic implication, namely that $\bar X\ge_*0$ implies $X\ge_*0$ but the converse 
need not be true. This special feature of our model highlights the role of discounting 
in the overall level of risk. The statement $X\ge_*0$, in fact, does not exclude losses 
but rather implies $\{X<-\eta\}\in\Neg_*$ for every $\eta>0$, i.e. that losses may be 
considered as arbitrarily small. The statement $\bar X\ge_*0$  means, on the other 
hand, that losses from $X$ may be hedged 
by investing an arbitrarily small amount in the \textit{num\'eraire} asset. If the 
\textit{num\'eraire} does not guarantee a minimum 
payoff, the losses associated with $\theta$, although small, may require a potentially 
unbounded amount of such asset in order to be hedged. The problem arises whenever 
losses from a  portfolio occur jointly with a drop in the value for $X_0$, as is often 
the case during financial crises. The risk management aspects of the choice of 
the \textit{num\'eraire} are also discussed by El-Karoui and Ravanelli 
\cite{el karoui ravanelli} and Filipovic \cite{filipovic}.

\section{Coherence, Efficiency and Arbitrage}
\label{sec coherence}
The basis of financial economics is the tenet that markets populated
by rational agents do not permit arbitrage opportunities. However, if there is 
agreement on this general statement, its translation into a convenient 
mathematical notion is much less uncontroversial. Definitions vary from one another 
mainly for the ambient space adopted and, since Harrison and Kreps \cite{harrison kreps},
the choice has traditionally been some $L^p(P)$ space, for a given exogenous probability 
measure $P$. We rather propose here the following definitions:

\begin{definition}
\label{def coherent}
A functional $\phi\in\Fun(\Theta)$ is said to be coherent (with the no arbitrage principle) 
if $\bar X(\theta)\ge_*0$ implies $\phi(\theta)\ge0$; $\phi$ is said to be
efficient if $\theta,\theta'\in\Theta$ and $\bar X(\theta)\ge_*\bar X(\theta')$ imply
$\phi(\theta)\ge\phi(\theta')$. Moreover, $\theta\in\Theta$ is an arbitrage opportunity 
for $\phi$ if
\begin{equation}
\label{arbitrage}
\bar X(\theta)\ge_*0
\quad\text{but}\quad
\phi(\theta)\le0
\end{equation} 
and at least one of the two inequalities is strict. 
\end{definition}

Although with a linear pricing rule and no restriction to short selling, coherence
and efficiency are equivalent properties, in the more general case treated
here coherence is weaker than efficiency nor does it guarantee absence of 
arbitrage \textit{per se}. Coherent pricing does not exclude that an 
investment which yields a strictly positive (discounted) payoff is sold for free. 
Coherence is thus a rather weak and basic financial property and we shall investigate 
it in depth. 

The inequality $\bar X(\theta)\ge_*0$ may be rephrased in terms of the minimal 
margin $M_*(\theta)$ to be invested in the \textit{num\'eraire} asset in order to 
hedge losses away (if possible). Formally,
\begin{equation}
\label{hedge}
M_*(\theta)
	=
\inf\{\eta>0:\bar X(\theta+\eta\delta_0)\ge_*0\}
	=
-(\bar X(\theta)_*\wedge0)
\end{equation}
By Assumption \ref{ass safe}.(\textit{ii}), $\theta+M_*(\theta)\delta_0\in\Theta$
if and only if $M_*(\theta)<\infty$ or, equivalently, if $\theta$ belongs to the set of 
hedgeable strategies
\begin{equation}
\label{Theta*}
\Theta_*=\left\{\theta:\bar X(\theta)_*>-\infty\right\}
\end{equation}
In fact regulated markets do not allow investors to enter positions with 
unlimited potential losses so that $\Theta_*$ is often considered as the set of 
all \textit{reasonable} investment strategies -- see \cite{delbaen schachermayer} 
where a condition akin to $\theta\in\Theta_*$ is the basis for the concept of 
\textit{free lunch with vanishing risk}.

Observe that $M_*(\theta+\alpha\delta_0)\ge M_*(\theta)-\alpha$ although
the basic intuition used by El-Karoui and Ravanelli to justify cash subadditivity 
(namely that the discount factor is less than 1, see \cite[p. 568]{el karoui ravanelli}) 
does not apply as we do not impose $X_0\ge_*1$. Moreover,
$M_*(\theta+M_*(\theta)\delta_0)=0$. 

\begin{theorem}
\label{th arbitrage}
Let $t$ be the total cost functional $t$ defined in \eqref{cost}. Then,
\begin{enumerate}[(i)]
\item
$t$ is coherent if and only if
\begin{equation}
\label{coherent t}
t(\theta)+q_0M_*(\theta)\ge0
\qquad\text{for all }
\theta\in\Theta_*
\end{equation}
\item
if $t$ is convex then it is coherent if and only if  there exists $\mu\in\Prob_*$ 
such that
\begin{equation}
\label{pricing t}
t(\theta)\ge q_0\int\left(\bar X(\theta)\wedge0\right)d\mu
\qquad\text{for all}\quad
\theta\in\Theta_*
\end{equation}
\item
$t$ admits no arbitrage opportunity if and only if it satisfies
\begin{equation}
\label{NA t}
t(\theta)+q_0M_*(\theta)>0
\qquad\text{for all }
\theta\in\Theta_*
\text{ such that}\qquad
\bar X(\theta)^*+M_*(\theta)>0
\end{equation}
\end{enumerate}
\end{theorem}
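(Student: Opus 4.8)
The plan is to treat the three parts in order, extracting as much as possible from the identity $M_*(\theta) = -(\bar X(\theta)_* \wedge 0)$ recorded in \eqref{hedge} and from the characterization of $\ge_*$ in terms of the negligible ideal $\Neg_*$.

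For part (i), I would unwind the definition of coherence: $t$ is coherent iff $\bar X(\theta) \ge_* 0$ forces $t(\theta) \ge 0$. The key observation is that for \emph{any} $\theta \in \Theta_*$ we have $\bar X(\theta) + M_*(\theta) \ge_* 0$ (this is exactly what $M_*$ does, by \eqref{hedge} and Assumption \ref{ass safe}.(\textit{i}) together with property \eqref{prop}), and the strategy achieving this discounted payoff is $\theta + M_*(\theta)\delta_0$, which lies in $\Theta$ by Assumption \ref{ass safe}.(\textit{ii}). Applying coherence to this shifted strategy gives $t(\theta + M_*(\theta)\delta_0) \ge 0$; since the num\'eraire position carries no fixed cost (Assumption \ref{ass safe}.(\textit{iii})) and is a long position priced at $q_0$ each unit, Lemma \ref{lemma subadditive} (anonymity/additivity of $q$ on same-sign trades) yields $t(\theta + M_*(\theta)\delta_0) = t(\theta) + q_0 M_*(\theta)$, giving \eqref{coherent t}. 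Conversely, if \eqref{coherent t} holds and $\bar X(\theta)\ge_*0$, then $M_*(\theta) = 0$, so $t(\theta) \ge 0$. One subtlety to check: if $\bar X(\theta)\ge_*0$ does $\theta$ automatically belong to $\Theta_*$? Yes, since $\bar X(\theta)_* \ge 0 > -\infty$.

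For part (ii), assuming $t$ convex and coherent, I would apply \eqref{coherent t} and then invoke a separation/representation argument — this is where I expect the main work to sit, and presumably relies on an auxiliary result (the "Theorem \ref{th cone}" referenced in the proof of Lemma \ref{lemma neg}, or its kin in the Appendix). The idea is to view $\theta \mapsto t(\theta) - q_0\int(\bar X(\theta)\wedge 0)\,d\mu$ and find $\mu \in \Prob_*$ making it nonnegative; equivalently, one separates the convex set of "affordable hedged payoffs" from the negative cone, using that $\int(\cdot \wedge 0)\,d\mu$ is the natural linear-ish surrogate for $-M_*$ once we pass to pricing measures supported off $\Neg_*$. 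The finitely-additive, $\Neg_*$-null measure $\mu$ emerges from weak$^*$ compactness of $\Prob_*$. The converse is immediate: if \eqref{pricing t} holds and $\bar X(\theta)\ge_*0$ then $\{\bar X(\theta) < -\eta\}\in\Neg_*$ for all $\eta>0$, so $\int(\bar X(\theta)\wedge 0)\,d\mu \ge -\eta$ for every $\eta$, whence $t(\theta)\ge 0$. The hard part will be setting up the separation cleanly — identifying the right convex cone in a suitable function space and ensuring the separating functional restricts to an element of $\Prob_*$ rather than merely of $ba$; I would lean on the Appendix's cone theorem and on \eqref{ba*}.

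For part (iii), I would translate the arbitrage definition \eqref{arbitrage}: an arbitrage for $t$ is a $\theta$ with $\bar X(\theta)\ge_*0$ and $t(\theta)\le 0$, one inequality strict. Using the shift by $M_*(\theta)\delta_0$ as in part (i), non-existence of arbitrage should become: for all $\theta\in\Theta_*$, either $t(\theta)+q_0M_*(\theta) > 0$, or else the discounted payoff $\bar X(\theta)+M_*(\theta)$ is "trivially zero" in the order sense, i.e. $\bar X(\theta)^* + M_*(\theta) \le 0$ (here $\bar X(\theta)^*=-(-\bar X(\theta))_*$ measures the upside, as in \eqref{f*}). The condition $\bar X(\theta)^* + M_*(\theta) > 0$ is precisely what rules out the degenerate case where the hedged position is order-equivalent to $0$ and the strict inequality in \eqref{arbitrage} cannot be realized on the payoff side; when it holds, the strictness must come from prices, forcing $t(\theta)+q_0M_*(\theta)>0$. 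I would verify both directions by carefully matching "which of the two inequalities in \eqref{arbitrage} is strict" against the two displayed conditions in \eqref{NA t}, the main care being the bookkeeping between $\theta$ and its hedged version $\theta + M_*(\theta)\delta_0$ and the fact that shifting by the num\'eraire changes $t$ by exactly $q_0 M_*(\theta)$ and changes $\bar X_*$ by exactly $M_*(\theta)$ while leaving the "strictly positive payoff" possibilities governed by $\bar X(\theta)^* + M_*(\theta)$.
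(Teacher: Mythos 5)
Part (i) of your plan is correct and coincides with the paper's argument: shift to $\theta'=\theta+M_*(\theta)\delta_0$, use Assumption \ref{ass safe} and the additivity of $q$ on the num\'eraire leg to get $t(\theta')=t(\theta)+q_0M_*(\theta)$, and note that $\bar X(\theta)\ge_*0$ forces $M_*(\theta)=0$ for the converse. The other two parts each have a genuine gap. For (ii) you correctly name the tool (the cone-separation result, Theorem \ref{th cone} of the Appendix) but stop exactly where the work begins: the paper's proof consists in exhibiting the cone, namely $\mathscr H_0=\left\{(\bar X(\theta)\wedge0)q_0-t(\theta):\theta\in\Theta\right\}$ together with the set $\mathscr H$ of functions with bounded negative part that are $\ge_*$-dominated by some $\lambda h$ with $h\in\mathscr H_0$, $\lambda\ge0$. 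Convexity of $t$ is precisely what makes $\mathscr H$ a convex cone (a sum of generators is dominated by a multiple of the generator at the convex combination of the $\theta^n$), and \eqref{coherent t} from part (i) is what shows $\mathscr H$ contains no $f\ge_*1$, which is the hypothesis of Theorem \ref{th cone}; a final truncation over $\{\bar X(\theta)>\bar X(\theta)_*-\eta\}$ is needed to pass from $\sup_{f\in\mathscr H}\int f\,d\mu\le0$ to \eqref{pricing t}. ``Identifying the right convex cone'' is the theorem here, not a detail to be filled in later. Your converse for (ii) is fine.

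For (iii), necessity of \eqref{NA t} goes through as you describe. But your sufficiency argument has a hole: \eqref{NA t} says nothing about a strategy $\theta$ with $\bar X(\theta)=_*0$, for which $\bar X(\theta)^*+M_*(\theta)=0$, and yet such a $\theta$ with $t(\theta)<0$ is an arbitrage opportunity (the price inequality is strict). You observe that when $\bar X(\theta)^*+M_*(\theta)\le0$ the strict inequality cannot be realized on the payoff side, which is true, but it can still be realized on the price side, and \eqref{NA t} does not apply to that $\theta$. The paper closes this case by perturbing: for every $\varepsilon>0$ the strategy $\theta_\varepsilon=\theta+[\varepsilon+M_*(\theta)]\delta_0$ does satisfy $\bar X(\theta_\varepsilon)^*+M_*(\theta_\varepsilon)>0$, so \eqref{NA t} gives $t(\theta)+[\varepsilon+M_*(\theta)]q_0>0$, and letting $\varepsilon\downarrow0$ yields coherence of $t$, which is what rules out the price-side arbitrage. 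Without this $\varepsilon$-shift your case analysis does not close.
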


\begin{proof}
In studying coherence we are obviously entitled to restrict attention to $\Theta_*$.
For each $\theta\in\Theta_*$, let $\theta'=\theta+M_*(\theta)\delta_0\in\Theta$.
By Theorem \ref{th risk} and Assumption \ref{ass safe}.(\textit{iii}),
\begin{align}
\label{X+eta}
\bar X(\theta')
	&=
\bar X(\theta)+M_*(\theta)
	\ge_*0
&t(\theta')
	&=
t(\theta)+q_0M_*(\theta)
&\bar X(\theta')^*
	&=
\bar X(\theta)^*+M_*(\theta)
\end{align}
(\textit{i}). If $t$ is coherent then, $0\le t(\theta')=t(\theta)+M_*(\theta)q_0$ 
and \eqref{coherent t} holds. If, conversely, $\bar X(\theta)\ge_*0$, i.e. 
$M_*(\theta)=0$, then \eqref{coherent t} implies $t(\theta)\ge0$ so that $t$ is 
coherent. 

(\textit{ii}). Assume that $t$ is convex. Consider the sets
$
\mathscr H_0
=
\left\{\left(\bar X(\theta)\wedge0\right)q_0-t(\theta):
\theta\in\Theta\right\}
$
and 
\begin{equation*}
\mathscr H
=
\left\{f\in\Fun(\Omega):
f^-\in\B,\ 
\lambda h\ge_*f\text{ for some }h\in\mathscr H_0,
\text{ and }\lambda\ge0\right\}
\end{equation*}
If $\lambda_1,\ldots,\lambda_N>0$ and $\theta^1,\ldots,\theta^N\in\Theta$ then
$
\sum_{n=1}^N\lambda_n\left[(\bar X(\theta^n)\wedge0)q_0-t(\theta^n)\right]
\le
\lambda\left[(\bar X(\theta)\wedge0)q_0-t(\theta)\right]
$
with $\lambda=\sum_{n=1}^N\lambda_n$ and 
$
\theta
=
\sum_{n=1}^N(\lambda_n/\lambda)\theta^n\in\Theta
$.
Moreover, $h_n\ge_*f_n$ for $n=1,\ldots,N$ implies 
$\sum_{n=1}^Nh_n\ge_*\sum_{n=1}^Nf_n$. Thus $\mathscr H$ is a convex 
cone of uniformly lower bounded functions which, by \eqref{coherent t}, contains 
no element $f\ge_*1$. By Theorem \ref{th cone} in the Appendix there exists 
$\mu\in\Prob_*$ such that 
$$
\mathscr H\subset L^1(\mu)
\quad\text{and}\quad
\sup_{f\in\mathscr H}\int fd\mu\le0
$$
Moreover, if $\theta\in\Theta_*$ and $f=(\bar X(\theta)\wedge0)q_0-t(\theta)$ then
\begin{align*}
0
	\ge
\int_{\{\bar X(\theta)>\bar X(\theta)_*-\eta\}}fd\mu
	=
q_0\int_{\{\bar X(\theta)>\bar X(\theta)_*-\eta\}}(\bar X(\theta)\wedge0)d\mu-t(\theta)
	=
q_0\int(\bar X(\theta)\wedge0)d\mu-t(\theta)
\end{align*}
which proves the direct implication. The converse follows from the inequality
\begin{align*}
\int(\bar X(\theta)\wedge0)d\mu
	=
\int_{\bar X(\theta)\ge \bar X(\theta)_*-\eta}(\bar X(\theta)\wedge0)d\mu
	\ge
(\bar X(\theta)_*\wedge0)-\eta
\end{align*}
$q_0>0$ and \eqref{coherent t}.

(\textit{iii}). If $\bar X(\theta)^*+M_*(\theta)>0$ then, by \eqref{X+eta}, $\theta'$ 
is an arbitrage opportunity unless $t(\theta)+q_0M_*(\theta)>0$. \eqref{NA t} is thus 
necessary for absence of arbitrage. Conversely, choose $\theta\in\Theta_*$ and fix 
$\varepsilon\ge0$ such that $\bar X(\theta)^*+M_*(\theta)+\varepsilon>0$. By 
assumption, $\theta_\varepsilon=\theta+[\varepsilon+M_*(\theta)]\delta_0\in\Theta$ 
and
$
\bar X(\theta_\varepsilon)^*+M_*(\theta_\varepsilon)
>0$.
If \eqref{NA t} holds then
\begin{align*}
0
<
t(\theta_\varepsilon)+q_0M_*(\theta_\varepsilon)
=
t(\theta)+[\varepsilon+M_*(\theta)]q_0
\end{align*}
so that $t(\theta)+M_*(\theta)q_0>-\varepsilon q_0$ for all $\varepsilon>0$. Thus
$t(\theta)+M_*(\theta)q_0\ge0$ for all $\theta\in\Theta_*$ and $t$ is coherent.
If $\bar X(\theta)_*\ge0$ and $\bar X(\theta)>_*0$, then \eqref{NA t} implies 
$t(\theta)=t(\theta)+M_*(\theta)q_0>0$ so that $t$ admits no arbitrage opportunity.
\end{proof}

The representation \eqref{pricing t}, although quite manageable, relies crucially 
on convexity, a key property which is hard to justify based on the available empirical 
evidence which indicates, contrariwise, that fixed costs increase less than proportionally.
This conclusion suggests that a more interesting representation may require to focus
on the pricing functional, not including fixed costs.

\section{Coherent Pricing}
\label{sec coherent}

Since the early work of Bensaid et al \cite{bensaid} it is known that many properties
of asset prices are revealed by the super hedging functional and our 
model is no exception. We adapt this concept in defining the following extended
real valued functional:
\begin{equation}
\label{pi}
\pi(f)=\inf\left\{\lambda q(\theta):
\lambda \bar X(\theta)\ge_*f,\lambda\ge0,\theta\in\Theta\right\}
\qquad f\in\mathfrak F(\Omega)
\end{equation}
Clearly, $\pi(1)\le q_0$ and $\pi^c(1)\ge0$; if, in addition, $q$ is coherent, then 
$\pi(0)=0$ and $\pi^c(f)\le\pi(f)$ for all $f\in\Fun(\Omega)$ -- see Lemma 
\ref{lemma pi}. But even assuming coherence we cannot exclude the somehow 
abnormal situations $\pi(1)=0$ and $\pi^c(1)=0$ (see Example \ref{ex bubble}
below).  In particular:

\begin{lemma}
\label{lemma pic(1)=0}
$\pi^c(1)=0$ if and only if $q(\theta)\ge0$ for every $\theta\in\Theta_*$.
\end{lemma}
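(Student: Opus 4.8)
The plan is to prove the equivalence directly from the definition of $\pi^c$ and $\pi$ in \eqref{pi}, recalling that $\pi^c(f)=-\pi(-f)$, so $\pi^c(1)=-\pi(-1)$ and the claim becomes: $\pi(-1)=0$ if and only if $q(\theta)\ge0$ for all $\theta\in\Theta_*$. Unwinding \eqref{pi},
\begin{equation*}
\pi(-1)=\inf\{\lambda q(\theta):\lambda\bar X(\theta)\ge_*-1,\ \lambda\ge0,\ \theta\in\Theta\}.
\end{equation*}
The first step is to observe that for $\lambda>0$ the constraint $\lambda\bar X(\theta)\ge_*-1$ is the same as $\bar X(\theta)\ge_*-\lambda^{-1}$, i.e. $\bar X(\theta)_*\ge-\lambda^{-1}>-\infty$, so in particular $\theta\in\Theta_*$; and conversely, for any $\theta\in\Theta_*$ one has $\bar X(\theta)_*>-\infty$, hence $\lambda\bar X(\theta)\ge_*-1$ holds for all sufficiently large $\lambda$ (and for $\lambda=0$ the constraint $0\ge_*-1$ holds trivially by (CERT)). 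Thus the set of $\theta$ that appear (paired with some admissible $\lambda>0$) in the infimum defining $\pi(-1)$ is exactly $\Theta_*$, together with the pair $(\lambda,\theta)=(0,\theta)$ for any $\theta\in\Theta$, which contributes the value $0$. Consequently $\pi(-1)\le 0$ always.

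For the ``if'' direction, suppose $q(\theta)\ge0$ for every $\theta\in\Theta_*$. Then every term $\lambda q(\theta)$ with $\lambda>0$ appearing in the infimum is $\ge0$ (since such $\theta\in\Theta_*$), and the terms with $\lambda=0$ equal $0$; hence $\pi(-1)\ge0$, and combined with $\pi(-1)\le0$ we get $\pi(-1)=0$, i.e. $\pi^c(1)=0$. For the ``only if'' direction, suppose some $\theta_0\in\Theta_*$ has $q(\theta_0)<0$. Since $\theta_0\in\Theta_*$, pick $\lambda_0>0$ large enough that $\lambda_0\bar X(\theta_0)\ge_*-1$; using positive homogeneity of $q$ (Lemma \ref{lemma subadditive}(i)) and scaling $\theta_0$ up, note that for any $\lambda\ge\lambda_0$ the pair $(\lambda,\theta_0)$ is admissible and $\lambda q(\theta_0)=\lambda q(\theta_0)\to-\infty$ as $\lambda\to\infty$; hence $\pi(-1)=-\infty<0$, so $\pi^c(1)\ne0$. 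This contradicts $\pi^c(1)=0$, proving the contrapositive.

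The only point requiring care — the ``main obstacle,'' though a mild one — is the bookkeeping around $\lambda=0$ versus $\lambda>0$ and the use of the convention $0\cdot q(\theta)=0$: one must make sure that including $\lambda=0$ does not spuriously force $\pi(-1)\le0$ in a way that breaks the argument, but in fact $\pi(-1)\le0$ is exactly what we want and is harmless, since the nonnegativity hypothesis $q(\theta)\ge0$ on $\Theta_*$ is precisely what pins the infimum at $0$. I would also double-check that scaling $\theta_0\in\Theta_*$ by $\lambda\ge0$ stays in $\Theta$ — but this is not needed, since $\theta_0$ itself is admissible and it suffices to vary $\lambda$ in the pair $(\lambda,\theta_0)$, keeping $\theta_0$ fixed, to drive $\lambda q(\theta_0)$ to $-\infty$. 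Thus the proof is short and essentially a direct computation from the definitions.
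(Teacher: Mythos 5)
Your overall strategy coincides with the paper's: unwind the definition of $\pi(-1)=-\pi^c(1)$, use the pair $\lambda=0$ to get $\pi(-1)\le0$, observe that any admissible pair with $\lambda>0$ forces $\theta\in\Theta_*$, and, for the converse, exhibit an admissible pair witnessing $\pi(-1)<0$ whenever some $q(\theta_0)<0$. The ``if'' direction is correct as written.

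There is, however, a genuine error in the ``only if'' direction: you have the monotonicity in $\lambda$ backwards. Since $\lambda\bar X(\theta_0)\ge_*-1$ is equivalent to $\bar X(\theta_0)\ge_*-\lambda^{-1}$, the constraint becomes \emph{harder}, not easier, as $\lambda$ grows (scaling up amplifies the losses of $\bar X(\theta_0)$). If $\bar X(\theta_0)_*=-c<0$, the admissible multipliers are exactly $0<\lambda\le 1/c$: there is no ``$\lambda_0$ large enough'', the set of admissible $\lambda$ for fixed $\theta_0$ is bounded above, and the claims that $(\lambda,\theta_0)$ remains admissible for all $\lambda\ge\lambda_0$ and that $\lambda q(\theta_0)\to-\infty$ (hence $\pi(-1)=-\infty$) are unjustified. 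The step is easily repaired, because a \emph{single} admissible pair suffices: if $\bar X(\theta_0)\ge_*0$ take $\lambda=1$, and otherwise take $\lambda=1/\abs{\bar X(\theta_0)_*}$, so that $\lambda\bar X(\theta_0)\ge_*-1$ and $\pi(-1)\le\lambda q(\theta_0)<0$, whence $\pi^c(1)=-\pi(-1)>0\ne0$. This is precisely the witness the paper uses ($\bar X(\theta)/\abs{\bar X(\theta)_*}\ge_*-1$); with that one correction your argument agrees with the paper's. The same reversal also infects your justification in the first paragraph that every $\theta\in\Theta_*$ is paired with some admissible $\lambda>0$ (true, but because small $\lambda$ work, not large ones).
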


Define
\begin{equation}
\label{K}
\K=\{f\in\Fun(\Omega):\pi(\abs f)<\infty\}
\quad\text{and}\quad
\K_*=\{f\in\K:f_*>-\infty\}
\end{equation}
The set $\K$ plays in this paper the role of the ambient space and it is interesting
to remark that its definition is entirely market based and does not require any 
mathematical structure.

The following is the most important result of this section.

\begin{theorem}
\label{th coherent}
The price functional $q$ is coherent if and only if for each $h\in\B_*$ the set
\begin{equation}
\label{M}
\M
=
\left\{m\in ba_{*,+}:
\K\subset L(m)
\text{ and }
\int fdm\le\pi(f)
\text{ for all }f\in\K_*\right\}
\end{equation}
contains an element $m_h$ such that $\int hdm_h=\pi(h)$. 
\end{theorem}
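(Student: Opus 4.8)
The plan is to prove the two directions separately, with the substance lying in the ``only if'' direction, which I would attack by a Hahn--Banach / separation argument applied to a suitable convex cone, exactly as was done for part (\textit{ii}) of Theorem \ref{th arbitrage}. First I would dispose of the easy ``if'' direction: if for every $h\in\B_*$ the set $\M$ contains some $m_h$ with $\int hdm_h=\pi(h)$, then in particular (taking $h=1\in\B_*$, using (\textit{CERT})) there is $m\in\M$ with $\int1\,dm=\pi(1)$; more directly, suppose $\bar X(\theta)\ge_*0$ with $\theta\in\Theta$. Then $\theta\in\Theta_*$ and $\bar X(\theta)_*\ge0$, so $-\bar X(\theta)^-\in\K_*$ has $\pi(-\bar X(\theta)^-)\le 0$ while $q(\theta)\ge\pi(\bar X(\theta))\ge\pi(-\bar X(\theta)^-)\ge\int(-\bar X(\theta)^-)dm$; since $\{\,\bar X(\theta)<-\eta\,\}\in\Neg_*$ for all $\eta>0$ (as in the proof of Lemma \ref{lemma neg}) and $m\in ba_{*,+}$, the integral $\int(-\bar X(\theta)^-)dm=0$, whence $q(\theta)\ge0$. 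So $q$ is coherent.

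For the ``only if'' direction, fix $h\in\B_*$, so $\eta\ge_*\abs h$ for some $\eta>0$. Following the template of Theorem \ref{th arbitrage}(\textit{ii}), I would form the convex cone
\[
\H=\bigl\{f\in\Fun(\Omega):f^-\in\B,\ \lambda\bigl(\lambda'\bar X(\theta)-q(\theta)\bigr)\ge_* f\text{ for some }\theta\in\Theta,\ \lambda,\lambda'\ge0\bigr\}
\]
of ``discounted net payoffs dominated from below'', enlarged by the stochastic order; coherence of $q$ together with the definition of $\pi$ guarantees that $\H$ contains no element $g$ with $g\ge_*\varepsilon$ for any $\varepsilon>0$ — indeed such a $g$ would force $\pi(\varepsilon)\le 0$, contradicting $\pi(\varepsilon)=\varepsilon\pi(1)$ and $\pi(1)>0$ under coherence (via Lemma \ref{lemma pi}; if $\pi(1)=0$ one argues slightly differently using $\pi^c$). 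Then I would invoke Theorem \ref{th cone} from the Appendix to produce $m\in\Prob_*$ (rescaled into $ba_{*,+}$) with $\K\subset L(m)$ and $\sup_{f\in\H}\int fdm\le 0$; unwinding the definition of $\H$ gives $\int fdm\le\pi(f)$ for all $f\in\K_*$, i.e. $m\in\M$. This shows $\M\ne\emptyset$, but it does not yet give the \emph{attainment} $\int hdm_h=\pi(h)$ for the prescribed $h$.

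To upgrade non-emptiness to attainment, the key trick is to tilt the cone: replace $\H$ by $\H-\R_+\bigl(\pi(h)-h\bigr)$ (or equivalently enlarge $\H_0$ by adjoining the single ray generated by $h$ priced at $\pi(h)$, using that $h\in\B_*\subset\K_*$ so $\pi(h)$ is finite and, by definition of $\pi$ as an infimum, cannot be exceeded by any $m\in\M$). One checks this enlarged cone still contains no $g\ge_*\varepsilon$: an element of the form (old element) $+\,t(\pi(h)-h)$ dominating $\varepsilon$ would, after dividing by the supporting measure built on the old cone together with the constraint $\int hdm\le\pi(h)$, again contradict $\pi(1)>0$ — here one uses that for \emph{any} $m\in\M$ one automatically has $\int hdm\le\pi(h)$ since $h\in\K_*$, so adjoining the ray does not create arbitrage but does force equality for the separating functional. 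Applying Theorem \ref{th cone} to this enlarged cone yields $m_h\in ba_{*,+}$ with $m_h\in\M$ and simultaneously $\int(\pi(h)-h)dm_h\le 0$, i.e. $\int hdm_h\ge\pi(h)$; combined with the automatic reverse inequality $\int hdm_h\le\pi(h)$ valid for every member of $\M$, we get $\int hdm_h=\pi(h)$.

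The main obstacle, and the step I would spend the most care on, is verifying that the tilted cone genuinely contains no element $\ge_*\varepsilon$ — that is, that adjoining the ray $\R_+(\pi(h)-h)$ does not inadvertently manufacture a ``free lunch'' that coherence would otherwise exclude. This requires using the \emph{infimum} definition of $\pi(h)$ sharply: any super-replicating $(\lambda,\theta)$ with $\lambda\bar X(\theta)\ge_* h$ and $\lambda q(\theta)$ close to $\pi(h)$ must be combined with the tilt in a way that the gains telescope; one has to be careful that $h$ is only dominated in the $\ge_*$-sense and that $h^-$, hence $(\pi(h)-h)^-$, lies in $\B$ so the enlarged cone still consists of uniformly-lower-bounded functions and Theorem \ref{th cone} applies. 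Establishing $\K\subset L(m_h)$ for the final measure (not merely $\K_*$) also needs the full strength of Theorem \ref{th cone}, which I would simply cite. Everything else — positive homogeneity, subadditivity, closure of $\H$ under the stochastic order, the bookkeeping with $q_0>0$ — is routine and parallels the proof of Theorem \ref{th arbitrage}.
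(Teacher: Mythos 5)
Your ``if'' direction is fine and is essentially the paper's own argument. The ``only if'' direction, however, has a genuine gap: the cone you build does not satisfy the hypothesis of Theorem \ref{th cone}. Coherence only controls the sign of $q(\theta)$ when $\bar X(\theta)\ge_*0$; it does not prevent a strategy's discounted payoff from strictly dominating its price. Already $\theta=\delta_0$ with $\lambda=1$ and $\lambda'=q_0+1$ puts the constant $1$ into your cone, and even with a single scalar the element $\lambda(\bar X(\theta)-q(\theta))$ is a sure win whenever, say, $\bar X(\theta)=2$ and $q(\theta)=1$ with shorting of $\theta$ prohibited --- a perfectly coherent configuration (it only forces $\pi(1)\le 1/2$). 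Your claimed contradiction ``$\pi(\varepsilon)\le0$'' does not follow: from $\lambda\bar X(\theta)\ge_*\varepsilon+\lambda q(\theta)$ and \eqref{pi(f+b)} one only gets $\pi(\varepsilon)\le\lambda q(\theta)\left(1-\pi^c(1)\right)$, which is no contradiction since $\pi^c(1)$ is typically smaller than $1$. (Also, $\pi(1)>0$ is \emph{not} implied by coherence; see Example \ref{ex bubble}.) This is precisely why Theorem \ref{th arbitrage}(ii) truncates the payoff to $\bar X(\theta)\wedge0$: coherence bounds only losses, so a cone of full net gains is the wrong object. The tilting step inherits the same defect: for $h=-1$ the adjoined ray is generated by the positive constant $1-\pi^c(1)$, itself a sure win whenever $\pi^c(1)<1$; and since Theorem \ref{th cone} delivers probabilities while members of $\M$ have total mass between $\pi^c(1)$ and $\pi(1)$, the rescaling needed to land in $\M$ destroys the attainment $\int hdm_h=\pi(h)$ you are after.

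The paper avoids all of this by working with the sublinear functional $\pi$ itself rather than with a cone of net gains. By Lemma \ref{lemma pi}, coherence makes $\pi$ real-valued, $\ge_*$-monotone, subadditive and positively homogeneous on the lattice $\K\supset\B_*$; Theorem \ref{th efficient} applied with $C_h=\{\lambda h:0\le\lambda\le1\}$ (on which additivity of $\pi$ is automatic from positive homogeneity) produces, via Hahn--Banach domination by $\pi$, a positive linear $\phi\le\pi$ with $\phi(h)=\pi(h)$; Lemma \ref{lemma phi} then splits $\phi$ into $\beta_h+\int\cdot\,dm_h$ with $\beta_h$ vanishing on $\B_*\ni h$ and nonnegative on $\K_*$ (because $g^-\in\B_*$ there), which yields $m_h\in\M$ and $\int hdm_h=\pi(h)$ simultaneously. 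That Hahn--Banach domination step, not a cone separation, is what your proposal is missing.
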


\begin{proof}
By Lemma \ref{lemma pi}, if $q$ is coherent then the space $\K$ is a vector sublattice 
of $\Fun(\Omega)$ containing $\B_*$ and $\pi$ a $\ge_*$-monotone, positively 
homogeneous and subadditive functional on $\K$. Fix $h\in\B_*$ and consider the 
set $C_h=\{\lambda h:0\le\lambda\le1\}$. By Theorem \ref{th efficient} there is a 
positive linear functional $\beta_h$ on $\K$ vanishing on $\B_*$ and $m_h\in ba_{*,+}$ 
such that $\K\subset L^1(m_h)$ and $\pi(f)\ge\beta_h(f)+\int fdm_h$ for all $f\in\K$ 
and such that $\pi(h)=\beta_h(h)+\int hdm_h=\int hdm_h$. Suppose that 
$g\in\K_*$. Then, $g^-\in\B_*$ and thus
\begin{align*}
\pi(g)\ge\beta_h(g)+\int gdm_h
=
\beta_h(g^+)+\int gdm_h
\ge
\int gdm_h
\end{align*}
so that $m_h\in\M$. Conversely, if $m\in\M$ and $\bar X(\theta)_*\ge0$ then
\eqref{M} implies
\begin{equation*}
q(\theta)
\ge
\pi(\bar X(\theta))
\ge
\int\bar X(\theta)dm
=
\int\bar X(\theta)\sset{\bar X(\theta)>\bar X(\theta)_*-\varepsilon}dm
\ge
[\bar X(\theta)_*-\varepsilon]m(\Omega)
\ge
-\varepsilon m(\Omega)
\end{equation*}
for every $\varepsilon>0$ so that $q$ is coherent.
\end{proof}

We refer to $\M$ as the set of \textit{pricing measures}. It corresponds to the set 
of \textit{equivalent martingale measures} in traditional models%
\footnote{
Although the notion of \textit{equivalence} has no meaning here.
}. 

As in other papers in this field, Theorem \ref{th coherent} asserts that a coherent 
price system is consistent with risk neutral pricing, i.e. with a pricing rule appropriate 
for a market free of imperfections and thus supports the view expressed in the 
microstructure literature that the bid and ask prices are set starting from a 
\textit{consensus price}. However, one should remark that prices may be identified 
with integrals only for strategies in $\Theta_*$ as claims not included in $\mathscr K$ 
may not be integrable at all. This is a consequence of not defining the ambient space 
exogenously. Moreover it should be noted that pricing measures are just finitely 
additive but in addition are defined on all subsets of $\Omega$ rather than a
given algebra $\A$. This last remark is relevant for the definition of market
completeness which may be here given with no reference to an artificial
family of sets.

Given the exclusive emphasis of the literature on \textit{countably additive} 
pricing measures, we characterize next this special property.

\begin{theorem}
\label{th NFL}
Let $\pi^c(1)>0$%
\footnote{
Theorem \ref{th NFL} may be established without assuming $\pi^c(1)>0$
upon replacing $\pi$ with
\begin{align*}
\pi_\varepsilon(b)
=
\sup\left\{\int bdm:m\in\M,\ m(\Omega)\ge\varepsilon\right\}\varepsilon^{-1}
\qquad b\in\B_*
\end{align*}
but the statement would be less clear to interpret. 
} 
and $\A$ be an algebra including $\Neg_*$. 
The following are equivalent:
\begin{enumerate}[(i)]
\item\label{Mc}
there exists $0\ne\mu\in\M$ such that $\mu$ is countably additive in restriction
to $\A$;
\item\label{pi}
there exists $P\in\Prob_*(\A)$ countably additive and such that for any sequence
$\seqn f$ in $\Sim(\A)$, $\limsup_n\pi(f_n)\le0$ implies $\liminf_n\int f_ndP\le0$;
\item\label{pic}
there exists $P\in\Prob_*(\A)$ countably additive and such that for any sequence
$\seqn f$ in $\Sim(\A)_+$, $\lim_n\int f_ndP=0$ implies $\lim_n\pi^c(f_n)=0$.
\end{enumerate}
\end{theorem}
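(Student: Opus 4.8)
The plan is to prove the cyclic chain \imply{Mc}{pi}, \imply{pi}{pic}, and \imply{pic}{Mc}. The first implication is essentially a routine translation: given $0\ne\mu\in\M$ countably additive on $\A$, normalize to $P=\mu/\mu(\Omega)\in\Prob_*(\A)$, which is countably additive on $\A$ by hypothesis and kills $\Neg_*$ since $\mu\in ba_{*,+}$. For a sequence $\seqn f$ in $\Sim(\A)$ with $\limsup_n\pi(f_n)\le0$, the defining inequality $\int f_n dm\le\pi(f_n)$ in \eqref{M} — valid for $f_n\in\K_*\supset\Sim(\A)$ — gives $\int f_n dP\le\pi(f_n)/\mu(\Omega)$, whence $\liminf_n\int f_n dP\le0$. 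For \imply{pi}{pic}, take a sequence $\seqn f$ in $\Sim(\A)_+$ with $\lim_n\int f_n dP=0$; I would argue by contradiction, extracting a subsequence along which $\pi^c(f_{n_k})\ge\epsilon>0$. Since $\pi^c(f)\le\pi(f)$ (Lemma \ref{lemma pi}, using coherence, which is in force because $q$ coherent is equivalent to $\M\ne\emp$ and \eqref{Mc} supplies a pricing measure), one also needs the reverse direction; the trick is to apply the hypothesis of \iref{pi} to the sequence $g_k=-f_{n_k}+(\text{small correction})$, using $\pi^c(f)=-\pi(-f)$ and the fact that $\limsup_k\pi(-f_{n_k})=-\liminf_k\pi^c(f_{n_k})\le-\epsilon<0$, to conclude $\liminf_k\int(-f_{n_k})dP\le0$, i.e. $\limsup_k\int f_{n_k}dP\ge0$ — which is consistent, so more care is needed: one must instead perturb so as to force a strict sign, e.g. consider $f_{n_k}-\int f_{n_k}dP\cdot\mathbf 1$ and use that $\pi$ is bounded on bounded sets of $\B_*$ together with $\pi^c(1)>0$ to pull the margin.

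The substantive implication is \imply{pic}{Mc}, and this is where I expect the main obstacle. Assume \iref{pic} holds with some countably additive $P\in\Prob_*(\A)$. The goal is to produce a \emph{single} $0\ne\mu\in\M$ that is countably additive on $\A$. The natural strategy is a Yosida--Hewitt-type argument combined with a separation/compactness argument in $ba_*(\A)$: consider the set $\M_\A$ of restrictions to $\A$ of elements of $\M$ (a weak$^*$ compact convex subset of $ba_*(\A)$, by Theorem \ref{th coherent} and the footnote normalization via $\pi_\varepsilon$), and its decomposition into countably additive and purely finitely additive parts. One shows that if \emph{every} nonzero element of $\M_\A$ were purely finitely additive, one could manufacture a sequence $\seqn f$ in $\Sim(\A)_+$ with $\int f_n dP\to0$ but $\pi^c(f_n)\not\to0$, contradicting \iref{pic}: concretely, pure finite additivity of all pricing measures lets one find, for each $n$, a set $A_n\in\A$ with $P(A_n)<2^{-n}$ yet $\sup_{m\in\M}m(A_n)$ bounded below uniformly (a diagonal/exhaustion argument on a decreasing sequence witnessing non-$\sigma$-additivity), and then $f_n=\mathbf 1_{A_n}$ does the job since $\pi^c(\mathbf 1_{A_n})=\inf_m m(A_n)$ after the $\pi_\varepsilon$ normalization — here one uses $\pi^c(1)>0$ essentially. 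The hard part is making this selection uniform over the weak$^*$ compact set $\M_\A$: one needs that the ``countably additive part'' map is weak$^*$ continuous enough, or alternatively to run the argument through the lattice structure of $ba_*(\A)$ and Dini's theorem, extracting from the failure of $\sigma$-additivity a uniformly-bounded-below sequence of indicator functions that are $P$-small. I would handle this by first reducing to the extreme points of $\M_\A$ (Krein--Milman), for which the Yosida--Hewitt decomposition is cleaner, and then invoking a measurable-selection or exhaustion argument; this is the step most likely to require the technical machinery deferred to the Appendix (Theorems \ref{th cone}, \ref{th efficient}).

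Throughout, the background facts I would lean on are: coherence of $q$ $\iff$ $\M\ne\emp$ (Theorem \ref{th coherent}), which holds here because each of \iref{Mc}, \iref{pi}, \iref{pic} forces a pricing measure into existence; the elementary properties $\pi^c\le\pi$, positive homogeneity and subadditivity of $\pi$ on $\K$, and $\pi(b)<\infty$ for $b\in\B_*$ (Lemma \ref{lemma pi}); the duality $\pi^c(f)=-\pi(-f)$ and $\pi^c(h)=\inf\{\int h dm:m\in\M\}$ for $h\in\B_*$ after normalization (Theorem \ref{th coherent} applied to $-h$, or the $\pi_\varepsilon$ device of the footnote when $\pi^c(1)=0$); and weak$^*$ compactness of the relevant subsets of $ba_*(\A)$. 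The normalization hypothesis $\pi^c(1)>0$ is what guarantees $m(\Omega)$ is bounded away from $0$ over the pricing measures one selects, so that ``$\int f_n dP\to0$'' genuinely translates into a statement about the $m$-masses of the $A_n$; absent it, one would run the identical argument with $\pi_\varepsilon$ in place of $\pi$ as the footnote indicates.
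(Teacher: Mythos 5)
Your step \imply{Mc}{pi} matches the paper. The other two implications have genuine gaps. For \imply{pi}{pic}, you correctly observe that applying \iref{pi} to $-f_{n_k}$ gives nothing, but neither of your proposed repairs works: both $f_{n_k}-\bigl(\int f_{n_k}dP\bigr)\mathbf 1$ and its negative have $P$-integral tending to $0$, so \iref{pi} is never violated and no contradiction is reached. The missing ingredient is the calibrated combination used in the paper: assuming $\inf_n\pi^c(h_n)>\delta>0$ with $h_n\to0$ in $L^1(P)$, set $f_n=1-h_n\,\pi(1)/\pi^c(h_n)$; subadditivity gives $\pi(f_n)\le\pi(1)+\pi(-h_n)\pi(1)/\pi^c(h_n)=0$, while the bounded coefficients $\pi(1)/\pi^c(h_n)\le\pi(1)/\delta$ force $f_n\to1$ in $L^1(P)$, so $\liminf_n\int f_ndP=1>0$, contradicting \iref{pi}. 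Without this (or an equivalent) construction the implication is not proved.

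For \imply{pic}{Mc} you have correctly located the crux --- uniformizing the choice of a $P$-small, $\M$-large test function over all of $\M$ --- but your proposed machinery does not close it. The paper's tool is Sion's minimax theorem: assuming no $m\in\M$ satisfies $m\ll P$ on $\A$, one gets for each $m$ a function in $\mathscr H_n=\{h\in\Sim(\A)_+:\int h\,dP<2^{-n}\}$ with $\int h\,dm\ge1$, hence $\inf_{m\in\M}\sup_{h\in\mathscr H_n}\int h\,dm\ge1$; convexity of $\mathscr H_n$ and weak$^*$ compactness and convexity of $\M$ let one exchange $\inf$ and $\sup$ to extract a single $h_n\in\mathscr H_n$ with $\pi^c(h_n)=\inf_{m\in\M}\int h_n\,dm\ge1/2$, contradicting \iref{pic}; the resulting $m\ll P$ is then countably additive because $P$ is. Your Yosida--Hewitt/Krein--Milman route has an additional structural problem: the countably additive part $m^c$ of an $m\in\M$ need not belong to $\M$, because the defining inequality $\int f\,dm\le\pi(f)$ in \eqref{M} ranges over $f\in\K_*$ which may take negative values, and discarding the purely finitely additive part can \emph{increase} $\int f\,dm$. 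Hence showing that not every element of $\M$ is purely finitely additive would not yield a countably additive element of $\M$, which is what \iref{Mc} requires.
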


\begin{proof}
\imply{Mc}{pi}.
Choose $0\ne m_0\in\M$ to be countably additive on $\A$ and write $P$ for 
the restriction to $\A$ of $m_0/\norm{m_0}$. Then it is obvious that 
$\int fdP\le\pi(f)/\norm{m_0}$ for each $f\in\Sim(\A)$ so that \iref{pi} holds. 
\imply{pi}{pic}.
Fix $P$ as in \iref{pi} and, assuming that \iref{pic} fails, pick a sequence 
$\seqn h$ in $\Sim(\A)_+$ which converges to $0$ in $L^1(P)$ but such that 
$\inf_n\pi^c(h_n)>\delta>0$. Write $f_n=1-h_n\pi(1)/\pi^c(h_n)$. Then, 
$f_n\in\Sim(\A)$ converges in $L^1(P)$ to $1$ while 
$\pi(f_n)\le\pi(1)+\pi(-h_n)\pi(1)/\pi^c(h_n)=0$, so that \iref{pi} fails.
\imply{pic}{Mc}.
Let now $P$ be as in \iref{pic} and suppose that no $m\in\M$ satisfies $m\ll P$
in restriction to $\A$.
For each $m\in\M$ we may then construct a sequence $\sseqn{F_n(m)}$ in $\A$ 
such that $\lim_nP(F_n(m))=0<\delta(m)\equiv\inf_nm(F_n(m))$. Upon 
choosing $n$ sufficiently large and setting $h_n(m)=\set{F_n(m)}\delta(m)^{-1}$ 
we obtain $\int h_n(m)dP<2^{-n}$ while $\int h_n(m)dm\ge1$. Let 
$\mathscr H_n=\left\{h\in\Sim(\A)_+:\int hdP<2^{-n}\right\}$. Then
\begin{equation*}
\inf_{m\in\M}\sup_{h\in\mathscr H_n}\int hdm\ge1
\end{equation*}
Observe that $\M$ is convex and weak$^*$ compact and that $\mathscr H_n$ 
is convex. By the minimax Theorem of Sion \cite[Corollary 3.3]{sion}, there 
exists then $h_n\in\mathscr H_n$ such that 
$$
\pi^c(h_n)
=
\inf_{m\in\M}\int h_ndm\ge1/2
$$
The sequence $\seqn h$ so obtained contradicts \iref{pic}.
\end{proof}

Theorem \ref{th NFL} contributes to clarifying that the existence of a countable 
additive pricing measure is equivalent to some form of continuity of market prices 
with respect to the $L^1(P)$ topology, an extremely unlikely property in the
absence of an \textit{ad hoc} assumption.

Traditionally, the existence of a countably additive pricing measure is obtained
after imposing the \textit{No-Free-Lunch} condition introduced by Kreps \cite{kreps}
which however requires the choice of  $L^p(P)$ as the ambient space for some 
given probability $P$. We adapt from \cite[Definition 2.1]{jouini kallal} the 
following:

\begin{definition}
\label{def NFL}
Financial markets are said to satisfy the (NFL) condition if there exists $P\in\Prob(\A)$ 
countably additive such that for each sequence $\seqn x$ in $\R$ converging to 
some $x\ge0$ and all sequence $\seqn f$ in $\Sim(\A)$ converging to $f$ in 
$L^1(P)$ and such that $\pi(1)x_n+\pi(f_n)\le0$ one has $\int fdP\le-x$.
\end{definition}

If $P$ is obtained from some $0\ne m\in\M$ which is countably additive in restriction 
to $\A$ and the sequences $\seqn x$ and $\seqn f$ are as in Definition \ref{def NFL}, 
then
\begin{align*}
0\ge\lim_n\{\pi(1)x_n+\norm{m}\int f_ndP\}=\pi(1)x+\norm m\int fdP
\end{align*}
so that $\int fdP\le-(\pi(1)/\norm m)x\le-x$. Conversely, the sequence $\seqn f$ in 
$\Sim(\A)$ constructed to prove the implication \imply{pi}{pic} in Theorem \ref{th NFL} 
is such that $f_n$ converges to $1$ in $L^1(P)$ while $x_n=-\pi(f_n)\ge0$, contradicting
(\textit{NFL}). This proves that

\begin{corollary}
\label{cor NFL}
Under the conditions of Theorem \ref{th NFL}, there exists $0\ne\mu\in\M$ which is 
countably additive in restriction to $\A$ if and only if (NFL) holds.
\end{corollary}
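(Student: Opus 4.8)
The plan is to read off Corollary \ref{cor NFL} as a direct translation exercise between the analytic condition in Theorem \ref{th NFL}\iref{pi} and the condition (NFL) of Definition \ref{def NFL}, using Theorem \ref{th NFL}\bimply{Mc}{pi} to carry the equivalence down to the existence of a countably additive $\mu\in\M$. The bulk of the verification is already done in the paragraph immediately preceding the Corollary statement, so the proof would essentially consist of assembling those two implications carefully.

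First I would prove that the existence of $0\ne\mu\in\M$ countably additive on $\A$ implies (NFL). Given such $\mu$, set $m=\mu$, let $P$ be the restriction of $m/\norm m$ to $\A$ (noting $\norm m>0$ since $\pi^c(1)>0$ forces $m(\Omega)>0$), and check that $P\in\Prob_*(\A)$. Take sequences $\seqn x$ and $\seqn f$ as in Definition \ref{def NFL}, with $x_n\to x\ge0$, $f_n\to f$ in $L^1(P)$, and $\pi(1)x_n+\pi(f_n)\le0$. Since $m\in\M$ we have $\int f_n\,dm\le\pi(f_n)$ and $\int 1\,dm=m(\Omega)=\norm m$, hence $\pi(1)x_n+\norm m\int f_n\,dP\le0$. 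Letting $n\to\infty$, using $L^1(P)$-convergence on the integral term and $x_n\to x$, gives $\pi(1)x+\norm m\int f\,dP\le0$, whence $\int f\,dP\le-(\pi(1)/\norm m)x\le-x$ because $0<\pi(1)/\norm m\le1$ (here one uses $\pi(1)\le q_0$ and $\int 1\,dm\le\pi(1)$ from Lemma \ref{lemma pi}, so $\norm m\le\pi(1)$, giving the ratio $\ge$... — actually the clean bound is $\int 1\,dm\le\pi(1)$ so $\norm m=m(\Omega)\le\pi(1)$ and the ratio is $\ge1$; if instead one only knows $\norm m\le \pi(1)$ does not directly give $\le$, so I would instead invoke that $m\in\M$ yields $\int 1\,dm\le\pi(1)$ directly and rescale). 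In any case the inequality $\int f\,dP\le-x$ follows, so (NFL) holds.

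Conversely, assume (NFL) holds, with witnessing probability $P$. The goal is to verify condition \iref{pi} of Theorem \ref{th NFL} for this same $P$, which then yields a countably additive $0\ne\mu\in\M$ by \imply{pi}{Mc}. So let $\seqn f$ be a sequence in $\Sim(\A)$ with $\limsup_n\pi(f_n)\le0$; I must show $\liminf_n\int f_n\,dP\le0$. The delicate point is that (NFL) as stated requires genuine $L^1(P)$-convergence of the test sequence, whereas condition \iref{pi} only assumes control of $\pi(f_n)$. Here I would follow the device already used in the text: argue by contradiction, supposing $\liminf_n\int f_n\,dP=\delta>0$, pass to a subsequence, and build a sequence converging in $L^1(P)$ to a constant while keeping $\pi$ nonpositive — precisely the construction $g_n=1-f_n\pi(1)/(\text{something})$ that appears in \imply{pi}{pic}. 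The main obstacle is exactly this: (NFL) is phrased with a convergence hypothesis on $\seqn f$, and one must manufacture such a convergent sequence from the weaker $\pi$-control, which is why the argument is a contradiction argument rather than a direct deduction. Once the contradicting sequence is in hand, (NFL) forces an inequality incompatible with $\delta>0$, so \iref{pi} holds, and Theorem \ref{th NFL}\iref{Mc} delivers the required $\mu$.

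Since both directions are established, the equivalence in Corollary \ref{cor NFL} follows. I would keep the write-up short, citing the pre-Corollary paragraph for the first implication and pointing to the proof of \imply{pi}{pic} in Theorem \ref{th NFL} for the subsequence construction in the second.
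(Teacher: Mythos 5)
Your first implication (a countably additive $0\ne\mu\in\M$ implies (NFL)) is correct and coincides with the paper's argument: normalize $\mu$ to get $P$, use $\int f_n\,d\mu\le\pi(f_n)$ together with $\norm{\mu}=\mu(\Omega)\le\pi(1)$ to pass from $\pi(1)x_n+\pi(f_n)\le0$ to $\int f\,dP\le-(\pi(1)/\norm{\mu})x\le-x$. The hesitation in your parenthetical resolves the right way: $\mu\in\M$ gives $\norm{\mu}\le\pi(1)$, which is exactly the direction needed since $x\ge0$.

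The converse direction has a genuine gap. You aim to verify condition (\textit{ii}) of Theorem \ref{th NFL} for the (NFL)-witness $P$, which means handling an \emph{arbitrary} sequence $\seqn f$ in $\Sim(\A)$ with $\limsup_n\pi(f_n)\le0$ and (for contradiction) $\liminf_n\int f_n\,dP>0$. But (NFL) only constrains sequences that converge in $L^1(P)$, and the device you invoke --- the substitution $g_n=1-f_n\pi(1)/(\cdot)$ from the proof of (\textit{ii})$\Rightarrow$(\textit{iii}) --- produces an $L^1(P)$-convergent sequence only because there the input sequence is \emph{nonnegative} and already converges to $0$ in $L^1(P)$. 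Neither property is available for a generic sequence in condition (\textit{ii}), which may be signed and unbounded in $L^1(P)$, and there is no compactness from which to extract a convergent subsequence. You correctly flag this as ``the main obstacle'' but then write ``once the contradicting sequence is in hand'' without ever obtaining it. The paper sidesteps the problem by routing the converse through condition (\textit{iii}) rather than (\textit{ii}): if no $0\ne m\in\M$ is countably additive on $\A$, the minimax argument in the proof of (\textit{iii})$\Rightarrow$(\textit{i}) produces $h_n\in\Sim(\A)_+$ with $\int h_n\,dP<2^{-n}$ and $\pi^c(h_n)\ge1/2$, and for \emph{this} sequence $f_n=1-h_n\pi(1)/\pi^c(h_n)$ does converge to $1$ in $L^1(P)$ while $\pi(f_n)\le0$, so that (NFL), applied with $x_n=0$, would force $1=\int1\,dP\le0$. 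Redirect your contradiction argument at condition (\textit{iii}) --- equivalently, argue the contrapositive as the paper does --- and the proof closes.
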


Theorem \ref{th NFL} also provides some insight, suggesting cases in which $\M$ 
may admit no countably additive elements.

\begin{example} 
\label{ex no ca}
Let $\Omega$ be a separable metric space and $\A$ its Borel $\sigma$
algebra. Assume that there exists an increasing net $\neta N$ in $\Neg_*$ 
with $N_\alpha$ open and $\Omega=\bigcup_\alpha N_\alpha$. This is
the case, e.g., if each $\omega\in\Omega$ admits a neighborhood contained 
in $\Neg_*$. Fix $P\in\Prob(\A)$ countably additive. By \cite[Proposition 7.2.2]{bogachev}, 
$1=\lim_\alpha P(N_\alpha)=\lim_k P(N_k)$ for some suitable
sequence $\seq Nk$ from $\neta N$. Set $f_k=\set{N_k^c}$. Then, 
$\lim_k\int f_kdP=0$ while
$$
\pi^c(f_k)
=
\inf_{m\in\M}m(N_k^c)
=
\inf_{m\in\M}m(\Omega)
=
\pi^c(1)
$$
so that condition (\textit{ii}) of Theorem \ref{th NFL} fails. No pricing measure
is then countably additive outside of the special case $\pi^c(1)=0$. Actually, 
decomposing each $m\in\M$ as $m=m^c+m^\perp$, with $m^c$ countably 
additive and $m^\perp$ purely finitely additive (see \cite[III.7.8]{bible}), and 
exploiting the inclusion $\M\subset ba_*$, we conclude that in the case of this
example all pricing measures are purely finitely additive.
\end{example}

The special situation illustrated in Example \ref{ex no ca} highlights that countable 
additivity of the pricing measures may not only fail but actually contrast with 
coherence if the partial order $\ge_*$ is an \textit{a priori} of the model.

\section{Coherent Bubbles}
\label{sec decomposition}
Based on the results of the preceding section, we develop here some decompositions
of coherent price functionals which highlight the role of asset bubbles.

\begin{theorem}
\label{th beta}
The price $q$ is coherent if and only if the set $\M$ of pricing measures 
is the unique non empty, convex, weak$^*$ compact subset of 
$ba_*$ admitting the decomposition
\begin{equation}
\label{beta}
\pi(f)=\beta(f)+\sup_{m\in\M}\int fdm
\qquad\text{for all }
f\in\K
\end{equation}
where $\beta:\K\to\R$ vanishes on $\B_*$.
\end{theorem}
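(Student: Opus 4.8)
The statement has two halves: (a) coherence of $q$ implies existence of a decomposition \eqref{beta} with the stated properties, and (b) if such a decomposition exists for \emph{some} non-empty, convex, weak$^*$ compact $\M\subset ba_*$ with $\beta$ vanishing on $\B_*$, then $q$ is coherent and moreover this $\M$ must coincide with the pricing set of Theorem \ref{th coherent}. For (a) the plan is to assume $q$ coherent and use Theorem \ref{th coherent}: for each $h\in\B_*$ there is $m_h\in\M$ with $\int h\,dm_h=\pi(h)$, and every $m\in\M$ satisfies $\int f\,dm\le\pi(f)$ on $\K_*$. This already gives $\sup_{m\in\M}\int f\,dm\le\pi(f)$ for $f\in\K_*$, hence for all $f\in\K$ after noting $f^-\in\B_*$ and using subadditivity and positive homogeneity of $\pi$ together with $\pi(b)=\sup_{m\in\M}\int b\,dm$ on $\B_*$ (which follows from the $m_h$-attainment applied to $h$ and to $-h$, using $\pi^c\le\pi$). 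Then I would \emph{define} $\beta(f)=\pi(f)-\sup_{m\in\M}\int f\,dm$; it is non-negative by the inequality just established, real-valued on $\K$ because both terms are finite there, and it vanishes on $\B_*$ by the attainment property. Subadditivity and positive homogeneity of $\beta$ are inherited from those of $\pi$ and the sup of integrals; that $\beta\ge0$ and $\beta|_{\B_*}=0$ is the substance.

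For the converse direction (b), suppose $\M_0$ is any set with the listed properties and decomposition $\pi(f)=\beta_0(f)+\sup_{m\in\M_0}\int f\,dm$ with $\beta_0\ge0$ on $\K$ (non-negativity must be extracted: since $\beta_0$ vanishes on $\B_*\ni 0$ and, I expect, is positively homogeneous and subadditive as the difference of such functionals, and since $\pi$ is $\ge_*$-monotone, one argues $\beta_0\ge0$; alternatively the paper may intend $\beta$ to be a positive functional by hypothesis — I would state explicitly which). Coherence then follows exactly as in the converse half of Theorem \ref{th coherent}: if $\bar X(\theta)_*\ge0$ then for every $\varepsilon>0$ and $m\in\M_0$,
\[
q(\theta)\ge\pi(\bar X(\theta))\ge\int\bar X(\theta)\,dm\ge[\bar X(\theta)_*-\varepsilon]\,m(\Omega)\ge-\varepsilon\,m(\Omega),
\]
so $q(\theta)\ge0$. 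The only point needing care is that $\bar X(\theta)\in\K$ when $\theta\in\Theta_*$, which is part of Lemma \ref{lemma pi}, and that $\beta_0(\bar X(\theta))\ge0$ so it can be dropped.

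The genuinely non-trivial part is \textbf{uniqueness}: that $\M_0$ forced by \eqref{beta} must equal the $\M$ of \eqref{M}. The plan is: first, every $m\in\M_0$ satisfies $\int f\,dm\le\sup_{m'\in\M_0}\int f\,dm'\le\pi(f)$ on $\K_*$ (dropping $\beta_0\ge0$) and lies in $ba_{*,+}$ with $\K\subset L^1(m)$, hence $\M_0\subseteq\M$. Conversely, I must show $\M\subseteq\M_0$, and here convexity and weak$^*$ compactness of $\M_0$ are essential: if some $m^*\in\M\setminus\M_0$, a Hahn--Banach separation in $(\B_*)^*$ (or on a suitable subspace of $\K$ on which all elements of $\M$ are finite) produces $h\in\B_*$ with $\int h\,dm^*>\sup_{m\in\M_0}\int h\,dm=\pi(h)-\beta_0(h)\le\pi(h)$; but $m^*\in\M$ gives $\int h\,dm^*\le\pi(h)$ (extending \eqref{M} from $\K_*$ to $\B_*\subset\K_*$), a contradiction. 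Thus $\M=\M_0$, which simultaneously shows the decomposing set is \emph{unique}. The main obstacle I anticipate is making the separation argument rigorous: $\M$ lives in $ba_*$ which is the dual of $\B_*$ (or of $\K$ with an appropriate locally convex topology), and I need the sup over $\M_0$ of $\int h\,dm$ to be weak$^*$ upper semicontinuous and the separation to take place against the right predual — this is where weak$^*$ compactness of $\M_0$ (and of $\M$, from Theorem \ref{th coherent} or the appendix) is used, and where I would lean on the cone/efficiency theorems cited earlier (Theorems \ref{th cone}, \ref{th efficient}) rather than reproving duality from scratch.
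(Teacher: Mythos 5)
Your proposal follows essentially the same route as the paper's proof: $\beta$ is defined implicitly by $\beta(f)=\pi(f)-\sup_{m\in\M}\int f\,dm$, finiteness on $\K$ and the vanishing on $\B_*$ come from Theorem \ref{th coherent} (the paper phrases this via the functionals $\phi=\phi^{\perp}+\int\cdot\,dm_\phi$ dominated by $\pi$, you via the attainment $\int h\,dm_h=\pi(h)$, which amounts to the same thing), uniqueness is a Hahn--Banach separation of a stray measure from a convex weak$^*$ compact set by a bounded $f$ on which both $\beta$'s vanish, and sufficiency is the same chain of inequalities as in the converse of Theorem \ref{th coherent}. Two small remarks: your parenthetical extension of $\sup_m\int f\,dm\le\pi(f)$ to all of $\K$ "since $f^-\in\B_*$" is only valid for $f\in\K_*$ (but nothing in the statement requires $\beta\ge0$ off $\K_*$), and the positivity of $\beta$ that you rightly flag as needed in the sufficiency direction is exactly the point the paper passes over by writing $\beta(f)=\beta(f^+)\ge0$, so your insistence on making that hypothesis explicit is well taken.
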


\begin{proof}
Indeed if $q$ is coherent then $\M$ is a non empty, convex and weak$^*$ compact 
subset of $ba$, by Lemma \ref{lemma M}; moreover, if $m\in\M$ and $f\in\K$ then, 
by \eqref{M}, $\dabs{\int f dm}\le\int\abs f dm\le\pi(\abs f)<\infty$ so that \eqref{beta} 
may be regarded as an implicit definition of $\beta$. 
Observe that from \eqref{M} and Lemma \ref{lemma M} we obtain
\begin{equation}
\label{beta bounds phi}
\begin{split}
\inf_{\phi\in\Phi(\pi)}\phi^\perp(f)
	&\le
\pi(f)-\sup_{m\in\M}\int fdm\\
	&=
\beta(f)\\
	&=
\sup_{\phi\in\Phi(\pi)}\phi(f)-\sup_{m\in\M}\int fdm\\
	&\le
\sup_{\phi\in\Phi(\pi)}\phi^\perp(f)+\sup_{\mu\in\M}\int fd\mu-\sup_{m\in\M}\int fdm\\
	&=
\sup_{\phi\in\Phi(\pi)}\phi^\perp(f)
\end{split}
\end{equation}
Given that $\sup_{\phi\in\Phi(\pi)}\phi^\perp(f)%
=0$ for all $f\in\B_*$, as we showed in the proof of Theorem \ref{th coherent}, 
we conclude that $\beta$ vanishes on $\B_*$. This proves existence. To show
uniqueness, suppose that $\bar\beta$ and $\bar\M$ is another pair with the same 
properties of $\beta$ and $\M$ and for which the decomposition \eqref{beta} holds. If 
$\mu\in\bar\M\backslash\M$, then there exists $f\in\B$ such that 
$\sup_{\bar m\in\bar\M}\int fd\bar m\ge\int fd\mu>\sup_{m\in\M}\int fdm$ but 
$\bar\beta(f)=\beta(f)=0$, a contradiction of \eqref{beta}. To show that
\eqref{beta} is sufficient for $q$ to be coherent, let $f\in\K_*$.
Then, $f^-\in\B_*$ and thus $\beta(f)=\beta(f^+)\ge0$ and thus
\begin{align*}
\pi(f)
	\ge
\sup_{m\in\M}\int fdm
	\ge
f_*\sup_{m\in\M}\norm m
	=
f_*\pi(1)
\end{align*}
Therefore, if $\bar X(\theta)\ge_*0$ for some $\theta\in\Theta$ then 
$q(\theta)\ge\pi(\bar X(\theta))\ge0$ and $q$ is coherent.
\end{proof}


For each $m\in\M$ the quantity $\int\bar X(\theta)dm$ is rightfully interpreted 
as the fundamental value of the portfolio $\theta$ \textit{given} $m$. In order 
to overcome the arbitrariness implicit in having a multiplicity of possible pricing 
measures and obtain an unambiguous definition, it is correct to identify the fundamental 
value of $\theta$ with the quantity
\begin{equation}
\label{fundamental}
\sup_{m\in\M}\int\bar X(\theta)dm
\end{equation}
Of course, the supremum of a family of integrals may be represented as the 
Choquet integral with respect to a supermodular capacity having $\M$ as its core%
\footnote{
The use of capacities in finance was introduced by Chateauneuf et al.
\cite[Theorem 1.1]{chateauneuf} precisely with the aim of modeling transaction 
costs. In their paper, however, this representation is an assumption 
(see also \cite{cerreia}). 
}.
Differently from classical asset pricing formulas, the fundamental value is not linear
here, due to transaction costs. It could be interpreted as the maximum price paid 
for $\theta$ in an economy identical with the one considered above but with 
no transaction costs. The main point is not only the multiplicity of pricing 
measures, which would be prevalent even in economies with incomplete 
financial markets, but rather the fact that the intervening expectations do not 
agree on the set of traded payoffs so that the integral appearing in 
\eqref{fundamental} is not invariant with respect to the choice of $m\in\M$. 

In general, deviations of prices from fundamental values are interpreted 
in the literature as evidence of the existence of bubbles. See 
\cite{cox hobson}, \cite{hugonnier} or \cite{lowenstein willard} for 
examples of models dealing with bubbles in continuous time. In so doing, 
however, inefficiency phenomena and the potential contribution of asset bubbles 
to an efficient pricing are mixed together.

Inefficiency is measured by the quantity $q(\theta)-\pi(\bar X(\theta))$. The 
empirical literature typically reports a relatively large number of violations, e.g., 
of the PUT/CALL parity, by which, say, a CALL option may be replaced by a less 
costly synthetic constructed using the corresponding PUT, future and riskless 
asset. Luttmer \cite{luttmer}, takes this mispricing as the sole source of subadditivity. 
For a coherent price system inefficiencies are a consequence of the restrictions 
which prevent investors to exploit them to obtain immediate profits. Empirical 
explanations, such as those invoked by Lamont and Thaler \cite{lamont thaler}, 
draw attention on the fixed costs of trading which impair the arbitrage profits 
emerging from considering prices only. However, even fixed transaction costs 
would play virtually no role if investors were not somehow constrained in their 
ability to either take short positions or in choosing the scale for their investments 
arbitrarily large.

We deduce from \eqref{beta} that, even in the absence of market inefficiencies
and with only two dates, prices may differ from fundamental values by a bubble 
component, $\beta$, interpreted as the price of the tail part of the asset discounted 
payoff. We base this interpretation on the inequality
\begin{align}
\label{bubble}
\dabs{\beta\left(\bar X(\theta)\right)}
\le
\lim_n\left\{\pi\left(\left(\bar X(\theta)^+-n\right)^+\right)
+
\pi\left(\left(\bar X(\theta)^--n\right)^+\right)\right\}
\end{align}
By \eqref{bubble}, $\beta\left(\bar X(\theta)\right)$ is rightfully viewed 
as the component of the price of $\theta$ which only depends on the event 
$\left\{\dabs{\bar X(\theta)}\ge n\right\}$ for all $n\in\N$, i.e. on the 
extreme fluctuations of the portfolio discounted payoff. Observe that
necessarily the price of the \textit{num\'eraire} and of other derivatives 
written on it, such as futures and options, admits no bubbles. \eqref{bubble} 
suggests in addition that, like in other models, bubbles are related 
to the limit of the price of a CALL option as the strike price increases to 
infinity. 
This finding is consistent with similar conclusions linking the existence of 
asset bubbles to some mispricings of options (see \cite{cox hobson} and 
\cite{heston}). Assuming some form of 
monotone continuity of the pricing functional, as in \cite{chateauneuf}, 
excludes the existence of bubbles.

The following example illustrates the economic role of bubbles in a special case.

\begin{example}[Efficient Bubbles]
\label{ex bubble}
Consider a market on which $\bar X(\theta)\ge_*0$ for all $\theta\in\Theta$
and assume that $q(\theta)=X(\theta)^*-X(\theta)_*$. The price function
is clearly subadditive and positively homogeneous. Given that $q(\theta)\ge0$
for all $\theta\in\Theta$ it is coherent too -- although it will be inefficient in general.
From Lemma \ref{lemma pic(1)=0} we know that $\pi^c(1)=0$.
Assume that $X_{0,*}=0$ and $X_0^*=1$ and that, for each $n\in\N$ there 
exists $\theta_n\in\Theta$ with $X(\theta_n)=a+X_02^{-n}$ with $a>0$.
Then, $\bar X(\theta_n)\ge a+2^{-n}$ so that $\bar X(\theta_n)_*\ge a+2^{-n}$. 
On the other hand $X(\theta_n)_*=a$ and $X(\theta_n)^*=a+2^{-n}$. But then
$q_0=1$ while
\begin{equation}
\pi(1)
=
\inf_{\bar X(\theta)_*>0}q(\theta)/\bar X(\theta)_*
\le
\inf_n(1+a2^n)^{-1}
=
0
\end{equation}
Thus the \textit{num\'eraire} is priced inefficiently and, from \eqref{beta}, the only 
possible non null efficient price is a bubble.
\end{example}

\section{Option Pricing}\label{sec option}

In this section we apply our preceding results to option pricing, under the only 
assumption that options are traded anonymously (i.e. that \eqref{anonymity} 
holds) and at non negative prices. $X>0$ will be hereafter the payoff of a 
given underlying and $K(X)$ the set of strike prices (including $k=0$) of all 
CALL options written on it. The ticker of each of these options and the corresponding 
strategy, price and payoff will be indicated by $\alpha_X(k)$, $\theta_X(k)$, 
$q_X(k)$ and $X(k)$ respectively. Define also
\begin{equation}
\TA_X=\{\alpha_X(k):k\in K(X)\}
\quad\text{and}\quad
\Theta_X=\{\theta\in\Theta:\theta\ge0\text{ and }\theta(\alpha)=0
\text{ whenever }\alpha\notin\TA_X\}
\end{equation}

\begin{equation}
\label{piX}
\pi_X(h)
=
\inf\left\{\lambda q(\theta):\lambda\frac{X(\theta)}{X\wedge1}\ge_*h, 
\lambda>0,
\theta\in\Theta_X\right\}
\qquad
h\in\Fun(\Omega)
\end{equation}
and
\begin{equation}
\label{KX}
\K_X=\left\{h\in\Fun(\Omega):\pi_X(\abs h)<\infty\right\}
\end{equation}
Observe that the restriction of $q$ to $\Theta_X$ is coherent, given our assumption 
of non negative prices. On the other hand, the change of 
\textit{num\'eraire} implicit in \eqref{piX} entails a different concept of efficiency. 
In particular we shall say that options are priced efficiently if 
\begin{equation}
\label{X efficiency}
q_X(k)=\pi_X\left(\frac{X(k)}{X\wedge1}\right)
\qquad
k\in K(X)
\end{equation}

The criterion adopted in \eqref{X efficiency} is indeed quite weak as, for example, 
it does not involve PUT options nor Futures or short positions. This is desirable since 
the larger the set of derivatives involved the more likely is it that efficiency may fail. 
For example, the PUT/CALL parity is well known to generate a large number of 
violations as well as the lower bound for CALL options%
\footnote{Cerreia et al. \cite{cerreia} construct a 
pricing model for markets which are assumed to satisfy the PUT/CALL parity.}.

Define the set%
\footnote{The limit appearing in \eqref{Gamma} exists by convexity.}
\begin{equation}
\label{Gamma}
\Gamma
=
\left\{f\in\Fun(\R_+):
f\ge0=f(0),\ f\text{ convex},\ \lim_{n\to\infty}f(n)/n<\infty\right\}
\end{equation}

To start discussing the issue of options efficiency, denote by $J(X)\subset K(X)$ 
the subset of strike prices $k$ possessing the following property
\begin{equation}
\label{butterfly}
q_X(k)
\le
\inf\left\{aq_X(k_1)+(1-a)q_X(k_2):
k_1,k_2\in K(X),0\le a\le1,\ ak_1+(1-a)k_2\le k\right\}
\end{equation}
i.e. which satisfy the \textit{butterfly spread} condition. One should remark that in the 
present setting this is not an arbitrage restriction. 

For reasons of technical convenience, in the rest of this section we shall adopt the 
following

\begin{assumption}
\label{ass ATM}
$X^*<\infty$.
\end{assumption}

We turn now to the issue of derivatives hedging. 

\begin{theorem}
\label{th hedging}
Assume that $\left(X\sset{X\le j}\right)^*=j<X^*$ for each $j\in J(X)$.
For each $g\in\Gamma$ there exists $\theta_X(g)\in\Theta_X$ such that 
$q(\theta_X(g))=\pi_X(g(X)/X\wedge1)$. Moreover: (i) if $g_1,g_2\in\Gamma$
then $\theta_X(g_1+g_2)=\theta_X(g_1)+\theta_X(g_2)$, (ii) if $j\in J(X)$ and
$g(x)=(x-j)^+$ then $\theta_X(g)=\theta_X(j)$.
\end{theorem}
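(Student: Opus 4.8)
The plan is to build the hedging strategy for a general $g\in\Gamma$ by first treating the generators of $\Gamma$, namely the CALL payoffs $x\mapsto(x-k)^+$ together with the linear part $x\mapsto x$, and then extending by a limiting/approximation argument. Concretely, every $g\in\Gamma$ is an increasing limit of finite nonnegative combinations $g_N(x)=a_0 x+\sum_{i}a_i(x-k_i)^+$ with $a_i\ge0$, and $\lim_n g(n)/n<\infty$ controls the coefficient $a_0$ of the linear term; since $X$ itself is obtained (up to the numéraire) from the strike-zero option $X(0)$, which lies in $\TA_X$, each $g_N(X)$ is hedged by the strategy $\sum_i a_i\theta_X(k_i)\in\Theta_X$ whose cost is $\sum_i a_i q_X(k_i)$. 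The first real task is to show $\pi_X(g_N(X)/(X\wedge1))=\sum_i a_i q_X(k_i)$ — the inequality $\le$ is immediate from feasibility of this strategy in \eqref{piX}, and the reverse inequality is exactly the efficiency-type statement that will require the butterfly hypothesis $J(X)$ and the assumption $\left(X\sset{X\le j}\right)^*=j<X^*$; this is presumably where an earlier coherent-pricing result (Theorem \ref{th coherent} or its option analogue) is invoked, producing a pricing measure under which the cost of any competing strategy is bounded below by the integral of $g_N(X)/(X\wedge1)$, which in turn is at least $\sum_i a_i q_X(k_i)$ because the butterfly condition forces the pricing measure to price each generator exactly at $q_X(k_i)$ for $k_i\in J(X)$.

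Next I would pass from $g_N$ to $g$. Monotone convergence $g_N(X)\uparrow g(X)$ gives $\pi_X(g_N(X)/(X\wedge1))\uparrow \pi_X(g(X)/(X\wedge1))$ from below by monotonicity of $\pi_X$; the condition $\lim_n g(n)/n<\infty$ guarantees this limit is finite, so $g(X)/(X\wedge1)\in\K_X$. To produce an actual strategy $\theta_X(g)\in\Theta_X$ attaining this value — rather than just an infimum — I would use weak$^*$ compactness: the normalized strategies appearing in the infimum \eqref{piX}, viewed through their induced pricing measures or through $\Theta$ itself (which is convex and, restricted to the relevant bounded region, may be taken closed), admit a cluster point, and coherence plus Assumption \ref{ass ATM} ($X^*<\infty$, so $X\wedge1$ is bounded away from $0$ in the relevant sense) ensure the limit strategy is admissible and still dominates $g(X)/(X\wedge1)$ in the $\ge_*$ order while its cost does not exceed $\pi_X(g(X)/(X\wedge1))$. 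The minimality of $\pi_X$ then forces equality, giving the desired $\theta_X(g)$.

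For the two additional properties: additivity (i) follows once attainment is known, because $\theta_X(g_1)+\theta_X(g_2)\in\Theta_X$ is feasible for $g_1+g_2$ so $\pi_X(g_1+g_2\text{-payoff})\le q(\theta_X(g_1))+q(\theta_X(g_2))=\pi_X(g_1)+\pi_X(g_2)$, while subadditivity of $\pi_X$ (inherited from subadditivity of $q$, Lemma \ref{lemma subadditive}, via the anonymity property \eqref{anonymity} restricted to $\Theta_X$ where all positions are nonnegative) gives the reverse; uniqueness of the attaining strategy, if needed, comes from strict considerations on the cost. For (ii), when $g(x)=(x-j)^+$ with $j\in J(X)$, the single-option strategy $\theta_X(j)$ is feasible with cost $q_X(j)$, and the butterfly condition \eqref{butterfly} together with the hypothesis $\left(X\sset{X\le j}\right)^*=j$ shows no cheaper super-replication exists, so $\pi_X((X-j)^+/(X\wedge1))=q_X(j)$ and $\theta_X(j)$ is the (an) optimizer, i.e. $\theta_X(g)=\theta_X(j)$. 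The main obstacle I anticipate is the lower bound $\pi_X\ge\sum_i a_i q_X(k_i)$ for the building-block combinations: translating the geometric butterfly inequality \eqref{butterfly} into a statement that every feasible $(\lambda,\theta)$ in \eqref{piX} costs at least the claimed amount will require carefully exploiting the hypothesis $\left(X\sset{X\le j}\right)^*=j<X^*$ to pin down the behaviour of dominating payoffs at and above each strike in $J(X)$, and marrying this with the pricing-measure representation from the earlier coherence theorem.
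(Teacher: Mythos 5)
Your overall strategy diverges from the paper's and contains gaps that I do not think can be repaired along the lines you sketch. The paper's proof (Lemma \ref{lemma g}, via Lemma \ref{lemma un}) never approximates $g$ and never extracts a limit of strategies: it observes that one may restrict to portfolios supported on the \emph{finite} set $J(X)=\{0=j_0<\ldots<j_I\}$, and that by Lemma \ref{lemma un} the constraint $\lambda X(\theta)/(X\wedge1)\ge_* g(X)/(X\wedge1)$ for such a piecewise-linear payoff is equivalent to the finitely many scalar inequalities $\mathbf D\mathbf a\ge\mathbf g$ at the points $j_1,\ldots,j_{I+1}$ (this is where the hypothesis $\left(X\sset{X\le j}\right)^*=j<X^*$ is used, to ensure the sets $\{j_i-\eta<X\le j_i\}$ are not negligible). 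The superhedging problem is thus a finite linear program whose optimum is \emph{attained} at the explicit vector $\mathbf w=\mathbf D^{-1}\mathbf g\ge0$ (nonnegative because $g$ is convex), with optimality certified by an explicit nonnegative dual vector $\mathbf b$ satisfying $\mathbf b^T\mathbf D=\mathbf q^T$, whose nonnegativity is exactly the butterfly condition \eqref{butterfly}. Existence, claim (i) (since $\mathbf g\mapsto\mathbf D^{-1}\mathbf g$ is linear) and claim (ii) (from the recursion \eqref{w recursion}) all fall out of this closed form at once.

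The concrete gaps in your route are these. First, writing $g$ as an increasing limit of finite call combinations $g_N$ and invoking ``monotone convergence'' only yields $\sup_N\pi_X(g_N(X)/X\wedge1)\le\pi_X(g(X)/X\wedge1)$; the reverse inequality is a continuity-from-below property of $\pi_X$ which is precisely what fails in this finitely additive setting (cf.\ Theorem \ref{th NFL} and the bubble term $\beta$ in \eqref{beta}), so equality cannot be asserted. Second, the weak$^*$ compactness step has no foundation: admissible strategies are finitely supported elements of $\FFA$, no topology on $\Theta_X$ is given in which a minimizing sequence clusters, and a cluster point of strategies with growing supports need not be finitely supported; attainment must come from finite-dimensionality, which is what the reduction to $J(X)$ provides. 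Third, your argument for (i) is inverted: feasibility of $\theta_X(g_1)+\theta_X(g_2)$ and subadditivity of $\pi_X$ both give $\pi_X\le\pi_X(g_1\text{-payoff})+\pi_X(g_2\text{-payoff})$; the reverse inequality is \emph{super}additivity on the cone of convex payoffs, which does not follow from Lemma \ref{lemma subadditive} and is exactly what the explicit dual certificate $\mathbf b$ delivers. Moreover (i) asserts an identity of \emph{strategies}, not of costs, which requires the canonical explicit construction \eqref{thetaX(g)} (or a uniqueness argument you do not supply). Finally, you correctly identify the lower bound $\lambda q(\theta)\ge\mathbf b^T\mathbf g$ as the crux but leave it as an anticipated obstacle rather than proving it.
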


\begin{proof}
The existence claim is proved in Lemma \ref{lemma g} in the Appendix where 
the explicit composition of $\theta_X(g)$ is described, see \eqref{thetaX(g)}. 
From it we deduce (\textit{i}). (\textit{ii}) follows upon setting $g(x)=(x-j)^+$ in
\eqref{w recursion}.
\end{proof}

We deduce from Theorem \ref{th hedging} that an option is priced efficiently
if and only if its strike price is included in $J(X)$. 

The following is the most important result of the paper.

\begin{theorem}
\label{th option}
Assume that $(X\sset{X\le x})^*=x\wedge X^*$ when $x\ge0$. Let
$G=\{g_t:t\in\R_+\}\subset\Gamma$ satisfy 
\begin{equation}
\label{G}
g_t\le ag_{t_1}+(1-a)g_{t_2}
\quad\text{whenever}\quad
0\le a\le1
\text{ and } 
t\ge at_1+(1-a)t_2
\end{equation}
There exist
$\beta^G(X)\ge0$ and $\nu_X^G\in ca(\mathscr B(\R_+))_+$ such that
\begin{equation}
\label{option}
\pi_X(g_t(X)/X\wedge1)
=
\beta^G(X)+\int_t^\infty\nu^G_X(x>z)dz
\qquad\text{for all } t\ge0
\end{equation}
\end{theorem}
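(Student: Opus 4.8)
The plan is to apply Theorem \ref{th coherent} (with the change of \textit{num\'eraire} $X\wedge1$ in the role of $X_0$, which is legitimate since the restriction of $q$ to $\Theta_X$ is coherent by the non-negative price assumption) together with the hedging result Theorem \ref{th hedging}, whose hypothesis $(X\sset{X\le j})^*=j<X^*$ for $j\in J(X)$ is implied by the present assumption $(X\sset{X\le x})^*=x\wedge X^*$. First I would fix a pricing measure for the option sub-market: by the coherence of $q$ on $\Theta_X$ and Theorem \ref{th coherent} there is a set $\M_X$ of pricing measures, and choosing any fixed $h\in\B_*$ I obtain some $m\in\M_X$ realizing $\pi_X(h)=\int h\,dm$. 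Applying the decomposition of Theorem \ref{th beta} to $\pi_X$ on $\K_X$, I get $\pi_X(f)=\beta_X(f)+\sup_{m\in\M_X}\int f\,dm$ with $\beta_X$ vanishing on $\B_*$; evaluating at $f=g_t(X)/(X\wedge1)$ produces the bubble term $\beta^G(X):=\beta_X(g_t(X)/(X\wedge1))$, which I must first argue is independent of $t$ and non-negative.

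The core of the argument is to show the expectation part $\sup_{m\in\M_X}\int (g_t(X)/(X\wedge1))\,dm$ has the stated integral representation $\int_t^\infty \nu^G_X(x>z)\,dz$. The key geometric fact is that a convex function $g_t\in\Gamma$ with $g_t(0)=0$ can be written as a mixture of CALL payoffs: $g_t(x)=\int_0^\infty (x-k)^+\,\mu_t(dk)+c_t x$ where $\mu_t$ is a non-negative measure (the second derivative of $g_t$ in the distributional sense) and $c_t=\lim_n g_t(n)/n-\mu_t([0,\infty))\ge0$ is the asymptotic slope contribution; the finiteness of $\lim_n g_t(n)/n$ from the definition of $\Gamma$ guarantees $\mu_t$ is finite. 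By Theorem \ref{th hedging}, each CALL payoff $(X-k)^+$ for $k\in J(X)$ is priced efficiently and hence $\int ((X-k)^+/(X\wedge1))\,dm$ is constant over $m\in\M_X$ and equal to $q_X(k)$; using the assumption $(X\sset{X\le x})^*=x\wedge X^*$ (which via Theorem \ref{th hedging}(ii) forces $J(X)$ to be rich enough, essentially all of $K(X)$ up to the butterfly condition) I can propagate efficiency to the whole family. Then condition \eqref{G} on $G$ — the convexity/monotonicity in the parameter $t$ — translates into the CALL prices $q_X(k)$ being themselves a convex decreasing function of the strike, which allows me to encode the map $t\mapsto \pi_X(g_t(X)/(X\wedge1))$ as the tail integral $\int_t^\infty \nu^G_X(x>z)\,dz$ of a single non-negative measure $\nu^G_X$ on $\mathscr B(\R_+)$; here the countable additivity of $\nu^G_X$ (membership in $ca(\mathscr B(\R_+))_+$) comes from the continuity built into the convex structure, not from countable additivity of the pricing measures themselves.

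I would organize the computation as follows: (a) verify $X\wedge1$ satisfies the role of the \textit{num\'eraire} for the sub-market $\Theta_X$ and that $g_t(X)/(X\wedge1)\in\K_X$ for each $t$, using $X^*<\infty$ (Assumption \ref{ass ATM}) and $\lim_n g_t(n)/n<\infty$; (b) invoke Theorem \ref{th hedging} to get $q(\theta_X(g_t))=\pi_X(g_t(X)/(X\wedge1))$ and the additivity $\theta_X(g_1+g_2)=\theta_X(g_1)+\theta_X(g_2)$, which makes $g\mapsto\pi_X(g(X)/(X\wedge1))$ additive on $\Gamma$ and hence determined by its values on extreme rays, i.e. on CALL payoffs; (c) express $g_t$ through its second-derivative measure and read off $\nu^G_X$ from the strike-price profile $k\mapsto q_X(k)$, with \eqref{G} ensuring the profile is consistent across $t$; (d) collect the residual (the part of $\pi_X$ not captured by the finite strike profile, i.e. the limit as the strike tends to infinity) into $\beta^G(X)\ge0$, which by \eqref{bubble}-type reasoning depends only on the tail $\{X\ge n\}$ and is therefore independent of $t$.

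The main obstacle I expect is step (c): pinning down that the supremum over $\M_X$ of the mixed integral collapses to a clean tail integral of a \emph{countably additive} measure $\nu^G_X$. The pricing measures in $\M_X$ are only finitely additive, so I cannot integrate the CALL decomposition of $g_t$ termwise against a generic $m\in\M_X$ and hope to get countable additivity for free; instead I must argue that the \emph{supremum} $k\mapsto\sup_{m\in\M_X}\int((X-k)^+/(X\wedge1))\,dm$, which by efficiency on $J(X)$ equals the market price $q_X(k)$ wherever defined, is a convex, non-increasing, suitably continuous function of $k$ whose negative second derivative is exactly the sought measure $\nu^G_X\in ca(\mathscr B(\R_+))_+$ — this is precisely the Breeden--Litzenberger step, and its rigour here hinges on the hypothesis $(X\sset{X\le x})^*=x\wedge X^*$ forcing enough strikes into $J(X)$ that $q_X(\cdot)$ is determined on a dense set and extends by convexity. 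The interchange of the supremum over $\M_X$ with the integral over strikes, and the verification that what is left over is genuinely a constant-in-$t$ bubble, is where the care is needed.
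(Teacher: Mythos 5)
There are genuine gaps, and the two most serious concern precisely the two objects the theorem asserts to exist. First, the bubble. You propose to obtain $\beta^G(X)$ from the decomposition of Theorem \ref{th beta} as $\beta_X\bigl(g_t(X)/(X\wedge1)\bigr)$ and then to argue it is independent of $t$. But under Assumption \ref{ass ATM} and $g_t\in\Gamma$ (convex with $g_t(0)=0$) the function $g_t(X)/(X\wedge1)$ is $\ge_*$-essentially bounded: for $x\ge1$ it equals $g_t(x)\le g_t(X^*\vee1)$ and for $x<1$ it equals $g_t(x)/x\le g_t(1)$ by convexity, off a $\Neg_*$-negligible set. Hence it lies in $\B_*$, where the $\beta$ of Theorem \ref{th beta} vanishes, so your candidate bubble is identically zero. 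The bubble of Theorem \ref{th option} is a different object: it is $\lim_{t\to\infty}\pi_X\bigl(g_t(X)/(X\wedge1)\bigr)$, the asymptotic level of the function $q^G_X(t)=\pi_X\bigl(g_t(X)/(X\wedge1)\bigr)$, and in your decomposition it would be hiding inside the finitely additive expectation term $\sup_{m}\int\cdot\,dm$, not in $\beta_X$.

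Second, the measure. You propose to read $\nu^G_X$ off the traded CALL profile $k\mapsto q_X(k)$ via a mixture representation of $g_t$ over CALL payoffs. But $J(X)$, the set of strikes satisfying \eqref{butterfly}, can be sparse — the hypothesis $(X\sset{X\le x})^*=x\wedge X^*$ concerns the distribution of $X$, not the richness of the quoted strikes — and $\nu^G_X$ must depend on $G$, which a construction from $q_X$ alone cannot deliver. The engine of the actual proof is elsewhere: condition \eqref{G} together with $\ge_*$-monotonicity, subadditivity and positive homogeneity of $\pi_X$ makes $t\mapsto q^G_X(t)$ itself convex and decreasing; one then replaces the market by a fictitious one in which every strike $t\in[0,X^*]$ trades at the price $q^G_X(t)$, so that \emph{all} strikes are efficient, extracts a genuinely linear functional on the option cone via Theorem \ref{th efficient}, and builds $\nu^G_X$ from the one-sided derivatives of the resulting convex function $F^G$: a finitely additive set function on the level sets $\{X>t\}$ (whose well-definedness requires showing that $\{u\ge X>t\}\in\Neg_*$ forces the right derivative at $t$ and the left derivative at $u$ to coincide), extended to the generated algebra, pushed forward to $\R_+$, and finally converted into a countably additive measure via its Dubins--Savage conventional companion. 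Your remark that countable additivity ``comes from the continuity built into the convex structure'' is the right intuition, but it is left as a placeholder, and the route you actually sketch — interchanging $\sup_{m}$ with an integration over strikes against merely finitely additive pricing measures — is exactly the obstruction you flag without resolving it.
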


\begin{proof}
The function $t\to q_X^G(t)=\pi_X(g_t(X)/X\wedge1):\R_+\to\R_+$ is clearly 
decreasing and convex and thus satisfies \eqref{butterfly}. Replace the original 
option market with one in which all strikes $0\le t\le X^*$ are traded at the 
fictitious prices $q^G_X(t)$ and define $\pi_X^G$ exactly as in \eqref{piX} after 
such replacement. By construction, all option prices are efficient, i.e. $J^G(X)=[0,X^*]$.
Moreover, if $a_1,\ldots,a_N\ge0$ and $t_1,\ldots,t_N\in[0,X^*]$ then
\begin{align*}
\pi_X^G\left(\frac{\sum_{n=1}^Na_nX(t_n)}{X\wedge 1}\right)
	=
\sum_{n=1}^Na_n\pi_X^G\left(\frac{X(t_n)}{X\wedge 1}\right)
	=
\sum_{n=1}^Na_nq_X^G(t_n)
\end{align*}
This follows clearly from Theorem \ref{th hedging} if one tries to hedge the payoff 
$\sum_{n=1}^Na_nX(t_n)/X\wedge 1$ with a finite set of options whose strikes 
include $t_1,\ldots,t_N$. Write
\begin{equation*}
\mathscr K_X^G
	=
\{f\in\Fun(\Omega):\pi_X^G(\abs f)<\infty\}
\quad\text{and}\quad
\mathscr C_X^G
	=	
\left\{\frac{\sum_{0\le t\le X^*}a(t)X(t)}{X\wedge 1}:a\in\Fun_0([0,X^*])_+\right\}
\end{equation*}
endowed with the partial order $\ge_*$ and observe that $\B_*\subset\mathscr K_X^G$.

By Theorem \ref{th efficient} we obtain a $\ge_*$ positive, linear functional 
$\phi^G_X:\K^G_X\to\R$ such that $\phi^G_X\le\pi^G_X$ and that $\phi^G_X=\pi^G_X$ 
in restriction to $\mathscr C^G_X$. Define
\begin{equation*}
F^G(t)
	=
\phi^G_X\left(\frac{X(t)}{X\wedge1}\right)
\qquad 
t\ge0
\end{equation*}
Of course, $F^G(t)=q^G_X(t)$; in addition, it is decreasing and 
convex. By a standard result on convex functions, we may write
\begin{equation}
\label{convex representation}
F^G(t_2)=F^G(t_1)+\int_{t_1}^{t_2}f^G(t)dt
\qquad 0<t_1<t_2
\end{equation}
where, for definiteness, we take $f^G(t)$ to be the right derivative of $F^G$ for 
$t\in\R_+$. Suppose that $\{u\ge X>t\}\in\Neg_*$ for some $0\le t<u$ and fix 
$0<h\le(u-t)/2$. There is then a negligible set outside of which each of the options 
with strike prices $t,t+h,u-h,u$ expires in the money if and only if all the others do. 
In other words 
\begin{equation*}
\frac{X(t)}{X\wedge1}+\frac{X(u)}{X\wedge1}
=_*
\frac{X(t+h)}{X\wedge1}+\frac{X(u-h)}{X\wedge1} 
\end{equation*}
from which it follows
\begin{align*}
F^G(t)+F^G(u)
	&=
\phi^G_X\left(\frac{X(t)+X(u)}{X\wedge1}\right)
	=
\phi^G_X\left(\frac{X(t+h)+X(u-h)}{X\wedge1}\right)
	=
F^G(t+h)+F^G(u-h)
\end{align*}
Thus,
\begin{align*}
\frac{F^G(u)-F^G(u-h)}{h}=\frac{F^G(t+h)-F^G(t)}{h}
\end{align*}
i.e. the left derivative of $F^G$ at $u$ and the right derivative of $F^G$ at $t$ 
coincide. There exists then a set $D\subset\R_+$ with $\R_+\backslash D$ 
at most countable and such that 
$\{X>u\}\bigtriangleup N_1=\{X>t\}\bigtriangleup N_2$ for $t,u\in D$ and 
$N_1,N_2\in\Neg_*$ imply $f(t)=f(u)$. It is therefore possible to define a positive 
set function $\lambda^G_0$ on the collection $\A_0(X)$ of subsets of $\Omega$ 
formed by $\Omega$, $\emp$ and all sets of the form $\{X>t\}\bigtriangleup N$ 
with $t\in\R_+$ and $N\in\Neg_*$ implicitly by letting
$\lambda^G_0(\Omega)=-f^G(0)$, $\lambda^G_0(\emp)=0$ and
\begin{equation}
\label{lambda}
\lambda^G_0(\{X>t\}\bigtriangleup N)
	=
\sup_{\{u\in D:u\ge t\}}-f^G(u)
\qquad t\in\R_+,\ N\in\Neg_*
\end{equation}
To see that this definition is well taken, observe that, if there is $t_\infty\in D$ such that 
$\{X>t_\infty\}=\emp$ then $F^G(t_\infty+h)=F^G(t_\infty)$ and so $f^G(t_\infty)=0$. 
Likewise, if $\{X>t_0\}=\Omega$ for some $t_0\in D$, then $\{X>t_0\}=\{X>0\}$ so 
that, as seen above, $f^G(t_0)$ coincides with $f^G(0)$. If either $t_0$ or $t_\infty$ do not 
exist, then one can choose the corresponding value of $\lambda^G_0$ arbitrarily. Since 
the elements of $\A_0(X)$ are linearly ordered by inclusion it follows that if
$A_i=\{X>t_i\}\bigtriangleup N_i$ and $N_i\in\Neg_*$ for $i=1,2$ with $t_1\ge t_2$
then
\begin{align*}
\lambda^G_0(A_1)+\lambda^G_0(A_2)
&=
\lambda^G_0(X>t_1)+\lambda^G_0(X>t_2)\\
&=
\lambda^G_0(\{X>t_1\}\cap\{X>t_2\})+\lambda^G_0(\{X>t_1\}\cup\{X>t_2\})\\
&=
\lambda^G_0(A_1\cap A_2)+\lambda^G_0(A_1\cup A_2)
\end{align*}
as 
$\{X>t_1\}\bigtriangleup (A_1\cap A_2),\{X>t_2\}\bigtriangleup (A_1\cup A_2)%
	\subset 
N_1\cup N_2
	\in
\Neg_*$. 
It follows from \cite[Theorems 3.1.6 and 3.2.10]{rao}, that there exists a unique 
extension $\lambda^G\in ba(\A(X))_+$ of $\lambda^G_0$ to the algebra $\A(X)$ 
generated by $\A_0(X)$ and thus such that $\lambda^G(N)=0$ when $N\in\Neg_*$. 
Let
\begin{equation}
\label{option price lim}
\beta^G_X=\lim_{k\to\infty}q_X^G(k)
\end{equation}
Then we obtain from \eqref{convex representation}
\begin{align*}
\beta^G_X=F^G(k)-\int_k^\infty\lambda^G(X>t)dt
\qquad k\ge0
\end{align*}
To eventually get \eqref{option}, write 
\begin{equation*}
\A=\left\{A\subset\R_+:X^{-1}(A)\in\A(X)\right\}
\end{equation*}
It is clear that $\A$ is an algebra containing the algebra $\A(\R_+)$ generated by 
the left open intervals of $\R_+$. Define then $\lambda^G_X\in ba(\A(\R_+))$ by 
letting $\lambda^G_X(A)=\lambda^G(X\in A)$ and observe from \eqref{convex 
representation} that $\int_{\R_+}\lambda^G_X(x>t)dt=-\int_{\R_+} f(t)dt\le F^G(0)$ 
so that $\lim_t\lambda^G_X((t,\infty))=0$. Exploiting standard rules of the Lebesgue 
integral and integration by parts we obtain
\begin{equation*}
\int_{\R_+}\lambda^G(X>t)dt
	=
\int_{\R_+}\lambda^G_X(x>t)dt
	=
\int_{\R_+}xd\lambda^G_X(x)
	=
\int xd\lambda^G_X(x)
\end{equation*}
as $\lambda^G_X(x<-t)=0$ for all $t\ge0$. It follows from \cite[Lemma 2, p. 191]%
{dubins savage} that, uniquely associated with $\lambda^G_X$ is its \textit{conventional 
companion} $\nu^G_X\in ca(\A(\R_+))_+$ with the property that
\begin{equation}
\label{conventional}
\int h(x)d\lambda^G_X=\int h(x)d\nu^G_X
\end{equation}
for any continuous function $h:\R\to\R$ for which either integral is well defined. The 
extension from $\A(\R_+)$ to the generated $\sigma$ algebra $\mathscr B(\R_+)$ is 
standard. Thus the representation \eqref{option} is implicit in $\Theta^G_X$ being 
priced efficiently. Suppose that 
$\hat\beta^G_X\ge0$ and $\hat\nu^G_X\in ca(\mathscr B(\R_+))_+$ 
is another pair for which the representation
\eqref{option} holds. Then, 
$\beta^G_X-\hat\beta^G_X=\int_k^\infty[\hat\nu^G_X(x>t)-\nu^G_X(x>t)]dt$ for 
all $k\ge0$ which implies $\beta^G_X=\hat\beta^G_X$. 
\end{proof}

Observe that, by standard rules,
\begin{equation}
\int_t^\infty\nu^G_X(x>z)dz=\int(x-t)^+d\nu_X^G(x)
\end{equation}
Thus \eqref{option}, represents the price of the $G$ derivatives as the sum of 
a bubble part and the fundamental value. By \eqref{option price lim} the term 
$\beta^G(X)$ represents the (fictitious) option price as the strike approaches 
infinity and contributes to explaining the overpricing of deeply out of the money 
CALL's often documented empirically in some form of the smile effect.

The most important implications of Theorem \ref{th option} regard the empirical 
analysis of option markets. In this perspective one should start noting that the CALL 
function $q_X^G$, although not a quoted price, is entirely market based and may 
be computed explicitly, once the collection $G$ has been chosen. Statistical estimation 
of the CALL function, to the contrary, follows from some optimal statistical criterion
and does not guarantee a direct market interpretation. In principle  one could make
the choice of the collection $G$ sample based and study whether the estimate
$q_X^G(t)$ possesses reasonable statistical properties. This implicitly suggests 
a new non parametric empirical strategy.

A second fact arising from \eqref{option} is the representation of option prices via a 
\textit{countably additive} probability  $\nu_X^G$ implicit in option prices. Although 
$\nu_X^G$ will generally depend on $G$, \eqref{option} makes it possible, even in 
a model with minimal mathematical structure as the one developed here, to run the 
classical exercise of Breeden and Litzenberger \cite{breeden litzenberger} and Banz 
and Miller \cite{banz miller} by computing
\begin{equation}
\label{breeden}
\nu_X^G(x>t)=-\left.\dDer{q_X^G(k)}{k}\right\vert_{k=t}
\qquad\text{for all } 
t\ge0
\end{equation}
Remark that in the model of Black and Scholes \eqref{breeden} translates into
the classical formula
\begin{equation}
\label{BS}
\nu^{BS}_X(x>k)=e^{-rT}\Phi(d_2)
\quad\text{with}\quad
d_2=\frac{\ln(S_0/k)+(r-\frac12\sigma^2)T}{\sigma\sqrt T}
\end{equation}
so that $\norm{\nu^{BS}_X}=\exp(-rT)$. In more general traditional models, 
the CALL lower bound, $q_X\ge q_X(k)\ge q_X-kq_0$ implies the inequality 
$\norm{\nu_X}\le q_0\le1$. However, this inequality cannot be deduced from
arbitrage arguments when the \textit{num\'eraire} asset is not free of risk. In 
particular, we do not have any \textit{a priori} bound to impose on the norm of 
$\nu_X$. If, however, we assume in addition $X\ge_*\eta$, we then obtain
for $G=\{(x-t)^+:t\in\R_+\}$
\begin{equation*}
\norm{\nu^G_X}
	=
\nu^G_X(x>0)
	=
\lim_{k\to0}\frac{q^G_X(0)-q^G_X(k)}{k}
	\le
\pi_X(1)
\le
\frac{q_X(0)}{\eta}
\end{equation*}

\appendix

\section{Auxiliary Results}

Let us start with two general results.

\begin{theorem}
\label{th cone}
Let $\succeq$ be a partial order on $\Fun(\Omega)$ satisfying Assumption
\ref{ass preorder} and $\Gamma\subset\Fun(\Omega)$ a convex cone.
There is no $f\in\Gamma$ with $f\succeq1$ if and only if there exists 
$m\in\Prob_\succeq=\{\mu\in\Prob:\mu(N)=0\text{ when }0\succeq\set N\}$ 
such that
\begin{equation}
\Gamma_\succeq
	=
\{f\in\Gamma:f\succeq a\text{ for some }a\in\R\}
	\subset
L^1(m)
\quad\text{and}\quad
\sup_{f\in\Gamma_\succeq}\int fdm\le0
\end{equation}
\end{theorem}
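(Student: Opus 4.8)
The plan is to recognize this as a Hahn--Banach / separation statement and prove it via the construction of a suitable finitely additive measure on all subsets of $\Omega$, much in the spirit of Kreps--Yan type arguments but carried out in $ba$ rather than in an $L^p(P)$ space. One direction is immediate: if such an $m\in\Prob_\succeq$ exists and $f\in\Gamma$ satisfied $f\succeq1$, then since $f\succeq a$ for $a=1>0$ we would have $f\in\Gamma_\succeq\subset L^1(m)$, and moreover $f\succeq 1$ forces, via (\textit{CERT}), (\textit{REST}) and (\textit{APPR}) exactly as in the proof of Lemma \ref{lemma neg}, that $\{f<1-\eta\}\in\Neg_\succeq$ for each $\eta>0$, whence $\int fdm\ge 1-\eta$ for all such $\eta$, contradicting $\sup_{f\in\Gamma_\succeq}\int fdm\le0$.

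For the substantive direction, assume no $f\in\Gamma$ satisfies $f\succeq1$. First I would pass to the convex cone
\begin{equation}
\widehat\Gamma=\{f\in\Fun(\Omega): f\succeq \lambda g-b\ \text{for some}\ g\in\Gamma,\ \lambda\ge0,\ b\in\R_+\},
\end{equation}
which is still a convex cone (using (\textit{CONE}) to add, and the fact that $\succeq$ is a preorder to compose inequalities), still contains no element $\succeq1$ (an inequality $\lambda g-b\succeq 1$ would give $\lambda g\succeq 1+b\succeq1$ hence $g\succeq (1+b)/\lambda\succeq1$ after rescaling when $\lambda>0$, and is impossible when $\lambda=0$ by (\textit{TRIV})), and in addition contains $-\B_+$, i.e. all $-b$ with $b\in\B$, $b\ge0$. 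The point of this enlargement is that a positive linear functional dominated by the relevant sublinear gauge of $\widehat\Gamma$ will automatically be order-positive and will kill $\Neg_\succeq$.

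Next I would separate: consider on the space $\B$ of bounded functions the sublinear functional $p(f)=\inf\{\,t\ge0: f-t\cdot 1\in\cls{\widehat\Gamma\cap\B}{\;}\,\}$ (closure in the supremum norm), check $p$ is well defined and finite because $-\B_+\subset\widehat\Gamma$ and because no element of $\widehat\Gamma$ dominates $1$ in the $\succeq$ sense, which via (\textit{APPR}) and (\textit{CERT}) rules out $1\in\cls{\widehat\Gamma\cap\B}{\;}$ so $p(1)>0$. By Hahn--Banach extend the functional $t\mapsto tp(1)$ defined on $\R\cdot1$ to a linear $L$ on $\B$ with $L\le p$; then $L\le0$ on $\widehat\Gamma\cap\B$, $L\ge0$ on $-\B_+$ hence $L$ is positive, and $L(1)=p(1)>0$. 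By the Riesz representation for positive functionals on $\B$ there is $m\in ba_+$ with $L(f)=\int fdm$; normalize so $m(\Omega)=1$, i.e. $m\in\Prob$. Positivity plus $L\le0$ on $-\B_+\subset\widehat\Gamma$ forces $m(N)=0$ for every $N\in\Neg_\succeq$ (test against $-\set N$), so $m\in\Prob_\succeq$. Finally, for $f\in\Gamma_\succeq$ with $f\succeq a$, the truncations $f\wedge n$ lie in $\B$ and in $\widehat\Gamma$ (since $f\wedge n\succeq f$ fails in general, but $f\succeq (f\wedge n)-0$ need not hold either — here I would instead use that $f-(f\wedge n)=(f-n)^+\ge0$ so $f\succeq f\wedge n$ by (\textit{CERT}), giving $f\wedge n\in\widehat\Gamma$ only after noting $\widehat\Gamma$ is a \emph{down-set} for $\succeq$ by construction), so $\int (f\wedge n)\,dm=L(f\wedge n)\le 0$; letting $n\to\infty$ and using $f\succeq a$ to control the negative part ($(f-(-a)^-\wedge n)$ bounds below, giving $\int f^-dm<\infty$) yields $f\in L^1(m)$ and $\int fdm\le0$.

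The main obstacle I expect is the passage from bounded to unbounded functions in the last step: the separation argument naturally lives on $\B$, but the conclusion must assert integrability and a sign on all of $\Gamma_\succeq$. The delicate point is showing $\sup_n\int(f\wedge n)\,dm$ is finite and equals $\int fdm$ — i.e. that no mass "escapes to $+\infty$" — which is where the cone structure of $\widehat\Gamma$ (closure under adding nonnegative bounded functions, so that $f\wedge n$ inherits membership) and the order axioms (\textit{CERT}), (\textit{APPR}) must be combined carefully; this is presumably exactly the role of building $\widehat\Gamma$ as a $\succeq$-downward-closed cone rather than working with $\Gamma$ directly.
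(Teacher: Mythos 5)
Your overall strategy --- enlarge $\Gamma$ to an auxiliary convex cone, separate it from the sure win $1$ by a Hahn--Banach argument on $\B$, represent the resulting positive functional by an element of $ba_+$, and then reach the unbounded elements of $\Gamma_\succeq$ by truncation --- is a viable alternative to the paper's route (the paper instead builds the cone $\Gamma_1=\{g:g^-\in\B,\ \sum_if_i\set{N_i^c}\succeq g\}$ of \emph{lower bounded} functions and invokes Proposition~1 of \cite{JMAA}, which packages the separation and the integrability of unbounded-above elements in one step). However, as written your argument has three concrete defects. First, the cone $\widehat\Gamma=\{f:f\succeq\lambda g-b\}$ is the \emph{upward} closure, and since $0=0\cdot g-0$ is admissible, (\textit{CERT}) gives $1\succeq0$ and hence $1\in\widehat\Gamma$: the set you separate from $1$ already contains $1$, and the claim that it ``contains no element $\succeq1$'' is false for your definition. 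Your own justifications (the computation starting from ``$\lambda g-b\succeq1$'' and the later remark that $\widehat\Gamma$ is a down-set) only make sense for the reversed definition $\{f:\lambda g-b\succeq f\}$; the two readings are used interchangeably and are not equivalent. Second, the assertion that the truncations $f\wedge n$ lie in $\B$ for $f\in\Gamma_\succeq$ is wrong: $f\succeq a$ controls $f$ only outside the sets $\{f<a-\eta\}\in\Neg_\succeq$, so $f\wedge n$ need not be bounded below and $L(f\wedge n)$ is undefined. One must truncate from below as well (e.g.\ use $(f\vee(a-\eta))\wedge n$ or $f\sset{f\ge a-\eta}\wedge n$) and verify via (\textit{REST}) and (\textit{CONE}) that these modified truncations still belong to the cone; this is precisely the role of the factors $\set{N_i^c}$ and of the requirement $g^-\in\B$ in the paper's $\Gamma_1$.

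Third, your verification that $m\in\Prob_\succeq$ does not work: testing against $-\set N\in-\B_+$ only yields $L(-\set N)\le0$, i.e.\ $m(N)\ge0$, which is positivity and gives nothing new. To conclude $m(N)=0$ you need $+\set N$ to lie in the separated cone (which it does for the corrected, downward-closed $\widehat\Gamma$, because $0\succeq\set N$ and $0=0\cdot g-0$), so that $L(\set N)\le0$ and hence $m(N)\le0$; this is exactly how the paper argues via $\set N\in\Gamma_1$. All three points are repairable, and once repaired your Hahn--Banach-on-$\B$ route does deliver the theorem without appealing to \cite{JMAA}; but in its present form the central cone is mis-defined, the truncation step is invalid, and the null-set property of $m$ is not established.
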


\begin{proof}
Define the following collection
\begin{equation*}
\Gamma_1
	=
\left\{g\in\Fun(\Omega):
g^-\in\B,\ 
\sum_{i=1}^If_i\set{N_i^c}\succeq g
\text{ for some }
f_1,\ldots,f_I\in\Gamma_\succeq,\ 
N_1,\ldots,N_I\in\Neg_\succeq
\right\}
\end{equation*}
Observe that $\Gamma_1$ is a convex cone of lower bounded functions on 
$\Omega$. Suppose that $\Gamma_1$ contains a sure win, i.e. an element 
$g\ge1$. Then there exist $f_1,\ldots,f_I\in\Gamma_\succeq$ and 
$N_1,\ldots,N_I\in\Neg_\succeq$ such that $\sum_{i=1}^If_i\set{N_i^c}\succeq g$. 
However, remarking that $N=\bigcup_iN_n\in\Neg_\succeq$ (so that
$\alpha\set{N}\succeq0$ for any $\alpha\in\R$)
\begin{align*}
\sum_{i=1}^If_i
	\succeq
1+\sum_if_i\set{N_i}
	\succeq
1+\min_i(f_{i,\succeq})\set{N}
	\succeq
1
\end{align*}
contradicting $\sum_{i=1}^If_i\in\Gamma$. By \cite[Proposition 1]{JMAA}, 
there exists $m\in\Prob$ such that 
\begin{equation*}
\Gamma_1\subset L^1(m)
\quad\text{and}\quad
\sup_{g\in\Gamma_1}\int gdm\le0
\end{equation*}
$N\in\Neg_\succeq$ implies $0\succeq\set N$ so that $\set N\in\Gamma_1$, 
since $0\in\Gamma_\succeq$. Therefore $m(N)\le0$ and $m\in\Prob_\succeq$. 
On the other hand $f\in\Gamma_\succeq$ implies 
$f_n=f\sset{f\ge f_\succeq-2^{-n}}\in\Gamma_1$ so that 
$\int\abs fdm=\int\abs{f_n}dm<\infty$ and $\int fdm=\int f_ndm\le0$. For the 
converse, suppose that $f\in\Gamma$ and $f\succeq1$. Then 
$\{f\le1/2\}\in\Neg_\succeq$ and $f\in\Gamma_\succeq$ so that 
$0\ge\int fdm=\int_{\{f>1/2\}}fdm\ge1/2$, a contradiction.
\end{proof}

\begin{lemma}
\label{lemma phi}
Let $\La\subset\Fun(\Omega)$ be a vector lattice containing $\B$. Each
positive linear functional $\phi$ on $\La$ admits the decomposition
\begin{equation}
\label{phi decomposition}
\phi(f)=\phi^\perp(f)+\int fdm_\phi
\qquad
f\in\La
\end{equation}
where $m_\phi\in ba$ and $\phi^\perp$ is a positive linear functional
vanishing on $\B$.
\end{lemma}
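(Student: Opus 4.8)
The plan is to split $\phi$ into the part it detects on bounded functions, which will turn out to be integration against a bounded finitely additive set function, and a residual functional $\phi^\perp$ that forgets bounded functions entirely; the scheme is the analogue, at the level of functionals, of the usual splitting of a measure into a ``regular'' and a ``singular'' piece.

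First I would pin down $m_\phi$. Since $\set\Omega\in\B\subset\La$ and $-\norm f\,\set\Omega\le f\le\norm f\,\set\Omega$ for $f\in\B$ (with $\norm{\cdot}$ the supremum norm), positivity of $\phi$ gives $\abs{\phi(f)}\le\phi(\set\Omega)\norm f$, so $\phi$ is $\norm{\cdot}$-bounded on $\B$. Put $m_\phi(A)=\phi(\set A)$ for $A\subset\Omega$: linearity of $\phi$ makes $m_\phi$ finitely additive, and $0\le m_\phi(A)\le\phi(\set\Omega)$ places it in $ba$, with $m_\phi\ge0$. On $\A$-simple functions $\phi$ agrees with $f\mapsto\int f\,dm_\phi$ by linearity, and since simple functions are uniformly dense in $\B$ while both maps are $\norm{\cdot}$-continuous there, one gets $\phi(f)=\int f\,dm_\phi$ for every $f\in\B$ (this identification is also \cite[IV.5.1]{bible}).

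Next I would build the integral part on all of $\La$ by truncation. For $f\in\La$ the function $f^+\wedge n$ lies in $\B$, because it is bounded and $\La$ is a vector lattice containing the constants; the sequence $n\mapsto\phi(f^+\wedge n)$ is nondecreasing and dominated by $\phi(f^+)$, hence convergent in $\R$, and likewise for $f^-$. Define $m(f)=\lim_n\phi(f^+\wedge n)-\lim_n\phi(f^-\wedge n)$. Since $f^\pm\wedge n\in\B$ and $\phi=\int\cdot\,dm_\phi$ on $\B$, the paper's definition of the integral gives $\int f\,dm_\phi=\lim_n\phi(f^+\wedge n)-\lim_n\phi(f^-\wedge n)=m(f)$ for every $f\in\La$; in particular this integral is always finite on $\La$, which is precisely what lets the asserted decomposition make literal sense. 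Setting $\phi^\perp=\phi-m$, linearity of $\phi^\perp$ is automatic once $m$ is known to be linear, positivity follows from $\phi(f)\ge\phi(f\wedge n)$ for $f\ge0$ and all $n$, whence $\phi(f)\ge m(f)$, and $\phi^\perp$ vanishes on $\B$ because for bounded $f$ the truncations stabilise ($f^\pm\wedge n=f^\pm$ as soon as $n$ exceeds a bound for $\abs f$), so $m(f)=\phi(f^+)-\phi(f^-)=\phi(f)$.

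The one step that is not bookkeeping is checking that $m$ is genuinely linear, i.e.\ additive on $\{f\in\La:f\ge0\}$. I would obtain this from the elementary pointwise estimates $(f+g)\wedge n\le(f\wedge n)+(g\wedge n)$ and, in the other direction, $(f\wedge n)+(g\wedge n)\le(f+g)\wedge 2n$; applying the positive functional $\phi$ and letting $n\to\infty$ squeezes $m(f+g)$ between $m(f)+m(g)$ from both sides, and then $m(f)=m(f^+)-m(f^-)$ extends additivity and positive homogeneity from the cone to all of $\La$ in the routine way. I expect this additivity verification — together with the need to keep every occurrence of $\int f\,dm_\phi$ visibly finite, so that no $\infty-\infty$ ambiguity creeps into $\phi=\phi^\perp+\int\cdot\,dm_\phi$ — to be the only delicate point.
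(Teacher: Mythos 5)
Your proof is correct. Note that the paper itself gives no argument for this lemma: its ``proof'' is a citation to \cite[Theorem 1]{JMAA}, so you have in effect reconstructed the content of that external reference. Your construction --- setting $m_\phi(A)=\phi(\set A)$, identifying $\phi$ with $\int\cdot\,dm_\phi$ on $\B$ via uniform density of simple functions, defining the integral part on all of $\La$ by monotone truncation, and taking $\phi^\perp=\phi-\int\cdot\,dm_\phi$ --- is the standard Yosida--Hewitt-style decomposition at the level of functionals, and it is exactly the natural route. The two points you flag as delicate are indeed the only ones that need care, and you handle both correctly: the sandwich $(f+g)\wedge n\le(f\wedge n)+(g\wedge n)\le(f+g)\wedge 2n$ gives additivity of the truncated limit on the positive cone (the upper truncation at $2n$ still converges to $m(f+g)$ by monotonicity), and the finiteness of $\lim_n\phi(f^\pm\wedge n)$, guaranteed by $\phi(f^\pm)<\infty$, ensures that $\int f\,dm_\phi$ in the paper's sense is a finite real number for every $f\in\La$, so the decomposition never degenerates into an $\infty-\infty$ expression. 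The only cosmetic slip is the reference to ``$\A$-simple functions'' where no algebra $\A$ is in play; since $m_\phi$ is defined on all subsets of $\Omega$, you mean arbitrary simple functions, which is what the density argument requires.
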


\begin{proof}
See \cite[Theorem 1]{JMAA}.
\end{proof}

\begin{theorem}
\label{th efficient}
Let $\La\subset\Fun(\Omega)$ be a vector lattice containing $\B_*$, 
$C\subset\La$ a convex set containing the origin and $\gamma:\Fun(\La)$ 
be $\ge_*$-monotone, subadditive and positively homogeneous. Then
\begin{equation}
\label{linear}
\gamma\left(\sum_{n=1}^Nf_n\right)
=
\sum_{n=1}^N\gamma(f_n)
\qquad
f_1,\ldots,f_N\in C
\end{equation}
if and only if there exist (i) a positive linear functional $\beta$ on $\La$ 
vanishing on $\B_*$ and (ii) $m\in ba_{*,+}$ such that $\La\subset L^1(m)$
\begin{equation}
\label{local}
\gamma(h)\ge\beta(h)+\int hdm
\quad\text{and}\quad
\gamma(f)=\beta(f)+\int fdm
\qquad\text{for all } 
h\in\La,\ f\in C
\end{equation}
\end{theorem}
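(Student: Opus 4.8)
The statement is a Hahn–Banach-type result: a sublinear functional $\gamma$ that is additive on a convex set $C$ containing $0$ must admit a dominating linear functional that is exact on $C$, and this linear functional splits, via Lemma \ref{lemma phi}, into a $ba_{*,+}$ part plus a residual vanishing on $\B_*$. I would prove the two implications separately.

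For the easy direction ($\Leftarrow$): if \eqref{local} holds, then for $f_1,\dots,f_N\in C$ subadditivity gives $\gamma\big(\sum_n f_n\big)\le\sum_n\gamma(f_n)$, while the linearity of $\beta$ and of $f\mapsto\int fdm$ together with $\gamma(h)\ge\beta(h)+\int hdm$ for all $h\in\La$ give
\[
\gamma\Big(\sum_{n=1}^N f_n\Big)\ge\beta\Big(\sum_{n=1}^N f_n\Big)+\int\Big(\sum_{n=1}^N f_n\Big)dm=\sum_{n=1}^N\Big(\beta(f_n)+\int f_ndm\Big)=\sum_{n=1}^N\gamma(f_n),
\]
so \eqref{linear} follows. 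I should also note $\La\subset L^1(m)$ is part of the hypothesis, so all the integrals are finite.

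For the main direction ($\Rightarrow$): the idea is to build a linear functional $\phi$ on $\lin(C)$ (or on all of $\La$) with $\phi\le\gamma$ and $\phi=\gamma$ on $C$. The additivity hypothesis \eqref{linear}, combined with positive homogeneity, shows that $\gamma$ restricted to the cone $\R_+\,C$ generated by $C$ is \emph{additive and positively homogeneous}, hence linear there; so set $\phi_0=\gamma$ on $\R_+\,C$. The task is to extend $\phi_0$ to a linear functional $\phi$ on $\La$ dominated by the sublinear $\gamma$. This is exactly the Hahn–Banach theorem provided $\phi_0\le\gamma$ on the subspace $\lin(\R_+C)=\R_+C-\R_+C$; the inequality $\phi_0(f-g)=\gamma(f)-\gamma(g)\le\gamma(f-g)$ for $f,g\in\R_+C$ is immediate from subadditivity of $\gamma$ ($\gamma(f)\le\gamma(f-g)+\gamma(g)$). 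So Hahn–Banach yields a linear $\phi:\La\to\R$ with $\phi\le\gamma$ everywhere and $\phi=\gamma$ on $C$. Since $\gamma$ is $\ge_*$-monotone and $\gamma(0)=0$ (from positive homogeneity), $h\ge_*0$ gives $\phi(-h)\le\gamma(-h)\le\gamma(0)=0$, i.e. $\phi$ is a \emph{positive} linear functional on $\La$ in the $\ge_*$ sense; in particular it is positive in the pointwise sense since $h\ge0\Rightarrow h\ge_*0$ by (\textit{CERT}). Now apply Lemma \ref{lemma phi} to $\phi$ on the lattice $\La\supset\B$: this gives $\phi=\phi^\perp+\int\cdot\,dm_\phi$ with $m_\phi\in ba$ and $\phi^\perp$ a positive linear functional vanishing on $\B$. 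Setting $m=m_\phi$ and $\beta=\phi^\perp$ restricted appropriately, \eqref{local} holds with $\gamma(h)\ge\phi(h)=\beta(h)+\int hdm$ for $h\in\La$ and equality on $C$. It remains to check $m\in ba_{*,+}$ and that $\beta$ vanishes on $\B_*$ (not merely on $\B$): positivity of $m$ follows because $\phi^\perp$ vanishes on simple functions, so $\int\set A dm=\phi(\set A)\ge0$ for all $A\subset\Omega$; $m(N)=0$ for $N\in\Neg_*$ follows from $\set N=_*0$ together with positivity of $\phi$ in the $\ge_*$-sense (so $\phi(\set N)\le 0$ and $\phi(-\set N)\le 0$, forcing $\phi(\set N)=0$, hence $\int\set N dm=0$); and $\beta=\phi-\int\cdot\,dm$ vanishes on $\B_*$ because any $b\in\B_*$ satisfies $\eta\ge_*|b|$ for some $\eta$, so $|b|\in L^1(m)$ with $\int|b|dm\le\eta m(\Omega)<\infty$ and one shows $\phi(b)=\int b dm$ by the usual truncation argument using that $\phi^\perp$ kills bounded functions and $b$ differs from a bounded function on a set in $\Neg_*$, which $m$ ignores.

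**Main obstacle.** The delicate point is the last one: upgrading "$\phi^\perp$ vanishes on $\B$" (bounded \emph{everywhere}) to "$\beta$ vanishes on $\B_*$" (bounded in the $\ge_*$-essential sense, i.e. up to a $\Neg_*$ set). This requires knowing $m\in ba_*$, and then arguing that for $b\in\B_*$ with $\eta\ge_*|b|$, the function $b$ agrees outside a $\Neg_*$-null set with the bounded function $b\wedge\eta\vee(-\eta)$, on which $\phi^\perp$ vanishes, while the integral against $m$ is insensitive to $\Neg_*$-modifications. The other point needing care is making precise that the Hahn–Banach extension can be taken on all of $\La$ (not just a subspace), which is fine since $\gamma$ is defined and sublinear on all of $\La=\Fun(\La)$ as hypothesized, and $\B_*\subset\La$ guarantees Lemma \ref{lemma phi} applies.
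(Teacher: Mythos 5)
Your proof is correct and follows essentially the same route as the paper's: define a linear functional equal to $\gamma$ on the cone generated by $C$, check it is dominated by $\gamma$ on the span via subadditivity, extend by Hahn--Banach, decompose the resulting positive functional via Lemma \ref{lemma phi}, and then verify $m\in ba_{*,+}$ and the vanishing of $\beta$ on $\B_*$ by exactly the $\Neg_*$-truncation argument you sketch (the paper's version is the one-line chain $0\le\beta(\abs g\set N)=\phi(\abs g\set N)\le\gamma(\abs g\set N)\le0$). The only step you leave implicit that the paper spells out is the well-definedness of $\phi_0$ on elements of $\mathrm{Lin}(C)$ admitting several representations, which follows from the same additivity-on-the-cone computation.
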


\begin{proof}
\eqref{linear} holds on $C$ if and only if it holds over the whole
convex cone generated by $C$, by positive homogeneity and the inclusion
$0\in C$. Let $f_1,\ldots,f_N,g_1,\ldots,g_K\in C$ and 
$\lambda_1,\ldots,\lambda_N,\alpha_1,\ldots,\alpha_K\in\R$ be such that
$\sum_{k=1}^K\alpha_kg_k=_*\sum_{n=1}^N\lambda_nf_n$. Then,
$\sum_{k=1}^K\alpha^+_kg_k+\sum_{n=1}^N\lambda^-_nf_n
=_*
\sum_{k=1}^K\alpha^-_kg_k+\sum_{n=1}^N\lambda^+_nf_n$.
By \eqref{linear} and $\ge_*$ monotonicity
\begin{align*}
\sum_{k=1}^K\alpha^+_k\gamma(g_k)+\sum_{n=1}^N\lambda^-_n\gamma(f_n)
&=
\gamma\left(\sum_{k=1}^K\alpha^+_kg_k+\sum_{n=1}^N\lambda^-_nf_n\right)\\
&=
\gamma\left(\sum_{k=1}^K\alpha^-_kg_k+\sum_{n=1}^N\lambda^+_nf_n\right)\\
&=
\sum_{k=1}^K\alpha^-_k\gamma(g_k)+\sum_{n=1}^N\lambda^+_n\i(f_n)
\end{align*}
i.e. $\sum_{k=1}^K\alpha_k\gamma(g_k)=\sum_{n=1}^N\lambda_n\gamma(f_n)$.
Thus the quantity
\begin{equation*}
\phi_0\left(\sum_{n=1}^N\lambda_nf_n\right)
=
\sum_{n=1}^N\lambda_n\gamma(f_n)
\qquad
f_1,\ldots,f_N\in C,\ 
\lambda_1,\ldots,\lambda_N\in\R
\end{equation*}
implicitly defines a linear functional on the linear span $\mathrm{Lin}(C)$ of $C$. It 
is easy to conclude from \eqref{linear} and subadditivity that
\begin{align*}
\phi_0\left(\sum_{n=1}^N\lambda_nf_n\right)
&=
\phi_0\left(\sum_{n=1}^N\lambda_n^+f_n\right)-%
\phi_0\left(\sum_{n=1}^N\lambda_n^-f_n\right)\\
&=
\gamma\left(\sum_{n=1}^N\lambda_n^+f_n\right)-%
\gamma\left(\sum_{n=1}^N\lambda_n^-f_n\right)\\
&\le
\gamma\left(\sum_{n=1}^N\lambda_nf_n\right)
\end{align*}
and thus that $\phi_0\le\gamma$ on  $\mathrm{Lin}(C)$. By Hahn Banach we 
may thus find an extension $\phi$ of $\phi_0$ to the whole of $\La$ such that 
$\phi\le\pi$. Given that $\gamma$ is $\ge_*$-monotone and positive homogeneous 
we conclude that $\phi$ is positive and, by Lemma \ref{lemma phi}, that it admits
the decomposition \eqref{phi decomposition}. Write $\beta=\phi^\perp$ and 
$m=m_{\phi}$. If $N\in\Neg_*$, then $\set N=_*0$ so that 
$0=\phi(\set N)=\beta(\set N)+m(N)=m(N)$ i.e. 
$m\in ba_{*,+}$. Likewise, if $g\in\La$ then 
$
0
\le
\beta(\abs g\set N)
=
\phi (\abs g\set N)
\le
\gamma(\abs g\set N)
\le
0
$ 
so that $\beta$ vanishes on $\B_*$, as claimed. The converse is obvious.
\end{proof}

\section{Proofs}

\begin{proof}[\textbf{Proof of Lemma \ref{lemma subadditive}}]
In any lattice $X$ the operation $x\to x^-$ is subadditive, that is $(x+y)^-\le x^-+y^-$. 
Thus, if  $\X=\{X(\alpha):\alpha\in\TA\}$ and $f,g\in\Fun_0(\X)$
\begin{align*}
q(f+g)
	&=
\sum_{X\in\X}\left\{[(f+g)(X)^+]a(X)-[(f+g)(X)^-]b(X)\right\}\\
	&=
\sum_{X\in\X}\left\{(f+g)(X)a(X)+(f+g)(X)^-(a(X)-b(X))\right\}\\
	&\le
\sum_{X\in\X}\left\{(f+g)(X)a(X)+(f(X)^-+g(X)^-)(a(X)-b(X))\right\}\\
	&=
q(f)+q(g)
\end{align*}
Positive homogeneity is clear. Suppose now that $f,g\in\Fun_0(X)$ satisfy $fg\ge0$ 
that is $f(X)$ and $g(X)$ have the same sign for all $X\in\X$. It is then obvious that 
$(f+g)(X)^-=f(X)^-+g(X)^-$ from which the claim follows.
\end{proof}

\begin{lemma}
\label{lemma pi}
The functional $\pi:\Fun(\Omega)\to\RR$ is $\ge_*$-monotone, positively 
homogeneous and satisfies 
\begin{align}
\label{pi properties}
\pi(\bar X(\theta))\le q(\theta)
\qquad\theta\in\Theta
\qquad\text{and}\qquad
\pi(f+g)\le\pi(f)+\pi(g)
\end{align}
for all $f,g\in\Fun(\Omega)$ for which the sum $\pi(f)+\pi(g)$ is defined. Moreover,
the following properties are equivalent: (i) $q$ is coherent, (ii) $\pi(0)=0$, 
(iii) $\pi^c(1)\le q_0$ and (iv)
\begin{equation}
\label{pi(f+b)}
\abs{\pi(b)}<\infty
\quad\text{and}\quad
\pi(f)+\pi(b^*)\ge\pi(f+b)\ge\pi(f)+\pi^c(b_*)
\qquad\text{for all } f\in\Fun(\Omega),\ b\in\B_*
\end{equation}
\end{lemma}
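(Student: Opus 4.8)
The plan is to read off the structural properties directly from the definition and then establish the four-way equivalence, proving (ii)$\Leftrightarrow$(iii) first so it can be reused inside (ii)$\Rightarrow$(iv). For $\ge_*$-monotonicity, note that if $f\ge_*g$ then, by transitivity, every pair $(\lambda,\theta)$ with $\lambda\bar X(\theta)\ge_*f$ also satisfies $\lambda\bar X(\theta)\ge_*g$, so the infimum defining $\pi(g)$ ranges over a larger set and $\pi(g)\le\pi(f)$. Positive homogeneity (for $a>0$) follows from the substitution $\lambda\mapsto a\lambda$ together with (\textit{CONE}) to rescale the constraint, and $\pi(\bar X(\theta))\le q(\theta)$ is immediate from the admissibility of $(1,\theta)$ and reflexivity. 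For subadditivity, given admissible pairs $(\lambda_f,\theta_f)$ for $f$ and $(\lambda_g,\theta_g)$ for $g$, I would add the two constraints via (\textit{CONE}), use linearity of $\bar X$ to write $\lambda_f\bar X(\theta_f)+\lambda_g\bar X(\theta_g)=(\lambda_f+\lambda_g)\bar X(\theta')$ where $\theta'=\frac{\lambda_f}{\lambda_f+\lambda_g}\theta_f+\frac{\lambda_g}{\lambda_f+\lambda_g}\theta_g\in\Theta$ by Assumption \ref{ass Theta} (and $\theta'=0$ if $\lambda_f+\lambda_g=0$), then bound $(\lambda_f+\lambda_g)q(\theta')\le\lambda_fq(\theta_f)+\lambda_gq(\theta_g)$ using subadditivity and positive homogeneity of $q$ from Lemma \ref{lemma subadditive}; passing to infima gives $\pi(f+g)\le\pi(f)+\pi(g)$ whenever the sum is defined. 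Along the way I would record that $\pi(0)\le0$ (take $\lambda=\theta=0$) and that $\bar X(\theta)\ge_*0$ entails $\lambda\bar X(\theta)\ge_*0$ for every $\lambda\ge0$, so that in fact $\pi(0)\in\{0,-\infty\}$.

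For (i)$\Leftrightarrow$(ii): if $q$ is coherent and $\lambda\bar X(\theta)\ge_*0$ with $\lambda>0$, then $\bar X(\theta)\ge_*0$ by (\textit{CONE}), so $q(\theta)\ge0$ and $\lambda q(\theta)\ge0$; hence $\pi(0)\ge0$, and with $\pi(0)\le0$ we get $\pi(0)=0$. Conversely, if $\pi(0)=0$ and $\bar X(\theta)\ge_*0$, then $(1,\theta)$ is admissible for $f=0$, so $q(\theta)\ge\pi(0)=0$ and $q$ is coherent. For (ii)$\Leftrightarrow$(iii) I would use the num\'eraire: $\bar X(\delta_0)=1$, $\delta_0\in\Theta$ (Assumption \ref{ass safe}.(\textit{ii}) with $\theta=0$), $q_0>0$, and $\pi(1)\le q_0$ because $(1,\delta_0)$ is admissible for $1$. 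Assuming $\pi(0)=0$, if $\lambda\bar X(\theta)\ge_*-1$ I rewrite this as $\lambda\bar X(\theta)+\bar X(\delta_0)\ge_*0$, divide by $\lambda+1$ via (\textit{CONE}) to get $\bar X(\theta^*)\ge_*0$ for the convex combination $\theta^*=\frac{\lambda}{\lambda+1}\theta+\frac1{\lambda+1}\delta_0\in\Theta$, and conclude $0\le q(\theta^*)\le\frac{\lambda}{\lambda+1}q(\theta)+\frac1{\lambda+1}q_0$, i.e. $\lambda q(\theta)\ge-q_0$; hence $\pi(-1)\ge-q_0$, that is $\pi^c(1)\le q_0$. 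Conversely, if $\pi(0)<0$ there is $\theta_0\in\Theta$ with $\bar X(\theta_0)\ge_*0$ and $q(\theta_0)<0$, so $(\lambda,\theta_0)$ is admissible for $-1$ for every $\lambda\ge0$, forcing $\pi(-1)=-\infty$ and $\pi^c(1)=+\infty>q_0$.

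For (ii)$\Leftrightarrow$(iv): the implication (iv)$\Rightarrow$(ii) is obtained by taking $f=b=0$, so that $b^*=b_*=0$; the finiteness clause yields $\abs{\pi(0)}<\infty$ and the displayed chain collapses to $2\pi(0)\ge\pi(0)\ge\pi(0)-\pi(0)$, whence $\pi(0)\ge0$ and, with $\pi(0)\le0$, $\pi(0)=0$. For (ii)$\Rightarrow$(iv) I assume coherence and first show $\pi$ is real-valued on $\B_*$: if $\eta\ge_*\abs b$ with $\eta>0$, then $\eta\ge_*b$, and since $b+\eta\ge\eta-\abs b$ pointwise while $\eta-\abs b\ge_*0$, also $b\ge_*-\eta$ by (\textit{CERT}), (\textit{CONE}) and transitivity; then monotonicity and positive homogeneity squeeze $\pi(b)$ between $\eta\pi(-1)\ge-\eta q_0$ and $\eta\pi(1)\le\eta q_0$, using (iii) (which (ii) implies). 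Next, (\textit{APPR}) applied to \eqref{f*} gives $b^*\ge_*b\ge_*b_*$ for $b\in\B_*$, with $b_*,b^*$ finite reals (here one uses the elementary fact that for reals $\alpha\ge_*\beta$ forces $\alpha\ge\beta$, a consequence of (\textit{CONE}) and (\textit{TRIV})). Subadditivity and monotonicity then give $\pi(f+b)\le\pi(f)+\pi(b)\le\pi(f)+\pi(b^*)$; and writing $f=(f+b)+(-b)$ with $-b\le_*-b_*$ gives $\pi(f)\le\pi(f+b)+\pi(-b)\le\pi(f+b)+\pi(-b_*)=\pi(f+b)-\pi^c(b_*)$, which rearranges to the lower bound in \eqref{pi(f+b)}.

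The step I expect to be most delicate is the bookkeeping in (ii)$\Rightarrow$(iv): subadditivity of $\pi$ is only claimed when the sum of the two values is defined, so $\abs{\pi(b)}<\infty$ on $\B_*$ (which itself rests on (iii)) must be proved before invoking the two inequalities, and the degenerate cases $\pi(f+b)=\pm\infty$ must be checked separately — they are absorbed by the finiteness of $\pi(b)$, $\pi(-b)$, $\pi(b^*)$ and $\pi(-b_*)$. The only other point needing care is the small lemma that $\ge_*$ restricted to the constants coincides with the usual order on $\R$, which is what makes $b_*$ and $b^*$ genuine finite reals.
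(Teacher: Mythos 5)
Your proposal is correct and follows essentially the same route as the paper's own proof: the same convex-combination argument for subadditivity of $\pi$, the same num\'eraire mixture $\frac{\lambda\theta+\delta_0}{1+\lambda}$ for (ii)$\Rightarrow$(iii), the same squeeze between multiples of $\pi(1)$ and $\pi(-1)$ to get $\abs{\pi(b)}<\infty$ on $\B_*$, and the same degenerate choice $f=b=0$ for (iv)$\Rightarrow$(ii). The only differences are organizational (you close the equivalences as (ii)$\Leftrightarrow$(iii) and (ii)$\Leftrightarrow$(iv) rather than the paper's cycle (ii)$\Rightarrow$(iii)$\Rightarrow$(iv)$\Rightarrow$(ii)) and that you spell out steps the paper declares obvious, such as $b_*\le b\le b^*$ in the $\ge_*$ order and the extended-arithmetic cases $\pi(f)=\pm\infty$.
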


\begin{proof}
Monotonicity, positive homogeneity and the first part of \eqref{pi properties} are 
obvious properties of $\pi$. Assume that $f,g\in\Fun(\Omega)$ are such that 
$\pi(f)+\pi(g)$ is a well defined element of $\RR$. Thus if, say, $\pi(f)=\infty$ 
then $\pi(f)+\pi(g)=\infty$ and the second part of \eqref{pi properties} is 
obvious. If, alternatively, $\pi(f),\pi(g)<\infty$, then there exist 
$\lambda_f,\lambda_g\ge0$ and $\theta_f,\theta_g\in\Theta$ such that 
$\lambda_f\bar X(\theta_f)\ge_*f$ and $\lambda_g\bar X(\theta_g)\ge_* g$ 
so that $\lambda(\bar X(\theta_f')+\bar X(\theta_g'))\ge_*f+g$, with
$\lambda=\lambda_f+\lambda_g$ and $\theta_f'=\theta_f\lambda_f/\lambda$
and $\theta_g'=\theta_g\lambda_g/\lambda$ (with the convention $0/0=0$). 
Given that, by Assumption \ref{ass Theta}, $\theta=\theta_f'+\theta_g'\in\Theta$
we conclude that 
\begin{align*}
\pi(f+g)
\le
\lambda q(\theta)\le\lambda(q(\theta'_f)+q(\theta'_g))%
=
\lambda_fq(\theta_f)+\lambda_gq(\theta_g)
\end{align*}
and, the inequality being true for all $\lambda_f,\lambda_f$ and $\theta_f,\theta_g$ 
as above, the second half of \eqref{pi properties} follows. \eqref{pi properties} also
implies $\pi(0)\le0$. It is then clear that (\textit{ii}) is equivalent to (\textit{i}). If
$\theta\in\Theta$ and $\lambda\ge0$ are such that $\lambda\bar X(\theta)\ge_*-1$ 
then $(1+\lambda)\bar X\left(\frac{\lambda\theta+\delta_0}{1+\lambda}\right)\ge_*0$ 
so that 
$$
\pi(0)
\le
(1+\lambda)q\left(\frac{\lambda\theta+\delta_0}{1+\lambda}\right)
\le
\lambda q(\theta)+q_0
$$
We thus conclude that $q_0\ge\pi(0)+\pi^c(1)$ and so that $\pi(0)=0$ implies 
$\pi^c(1)\le q_0$. If $b\in\B_*$, then \eqref{pi properties} implies 
$\abs b^*\pi(-1)\le\pi(b)\le\abs b^*\pi(1)$ so that from (\textit{iii}) we deduce 
$\abs{\pi(b)}<\infty$. But then the sums $\pi(f)+\pi(b)$ and $\pi(f+b)+\pi(-b)$ are 
well defined for each $f\in\Fun(\Omega)$ and the second half of \eqref{pi(f+b)} 
follows from \eqref{pi properties}. Conversely, by \eqref{pi(f+b)} we conclude that 
$\pi(0)=n\pi(0)$ for each $n\in\N$ and $\pi(0)\in\R$ so that $\pi(0)=0$.
\end{proof}

\begin{proof}[\textbf{Proof of Lemma \ref{lemma pic(1)=0}}]
If $\pi^c(1)=0$ and $\bar X(\theta)\ge_*0$ then $q(\theta)\ge\pi(-1)$. If
$\bar X(\theta)_*<0$ then $\bar X(\theta)/\abs{\bar X(\theta)_*}\ge_*-1$
and so $q(\theta)/\abs{\bar X(\theta)_*}\ge0$. Conversely, 
$\lambda\bar X(\theta)\ge_*-1$ and $\lambda>0$ imply $\theta\in\Theta_*$
and thus $q(\theta)\ge0$ so that $\pi(-1)\ge0$.
\end{proof}

Denote by 
\begin{equation}
\label{Phi(pi)}
\Phi(\pi)
=
\{\phi\in\Fun(\K): \phi\text{ positive, linear and such that }\phi\le\pi\}
\end{equation}
Adopting the notation of Lemma \ref{lemma phi} we can also write
\begin{equation}
\label{M(pi)}
\M(\pi)=\left\{m_\phi:\phi\in\Phi(\pi)\right\}
\qquad\text{and}\qquad
\Phi^\perp(\pi)=\left\{\phi^\perp:\phi\in\Phi(\pi)\right\}
\end{equation}

\begin{lemma}
\label{lemma M}
If $q$ is coherent then the set $\M$ defined in \eqref{M} is non empty, convex 
and weak$^*$ compact subset of $ba_+$. Moreover, $\M=\M(\pi)$ (see \eqref{M(pi)}).
\end{lemma}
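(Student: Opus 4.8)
The plan is to cut $\M$ down to a set described by linear constraints involving only \emph{bounded} test functions; from that description weak$^*$ compactness is immediate and the identity $\M=\M(\pi)$ follows at once. First, non‑emptiness and convexity are cheap. Since $q$ is coherent, Theorem \ref{th coherent} applied with $h=0\in\B_*$ already produces a member of $\M$, so $\M\ne\emptyset$; and the three requirements "$m\in ba_{*,+}$", "$\K\subset L^1(m)$" and "$\int f\,dm\le\pi(f)$ for all $f\in\K_*$" are each preserved under convex combinations, the integral being linear in $m$ and $\pi$ fixed. Throughout I use, via Lemma \ref{lemma pi}, that $\K$ is a vector lattice containing $\B_*$, that $\pi$ is $\ge_*$‑monotone, positively homogeneous and subadditive on $\K$ with $\pi(0)=0$, and that $\pi(1)\le q_0$; I also use the already recorded fact that $f\in\K_*$ implies $f^-\in\B_*$.

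The core step is the identity
\begin{equation*}
\M=\Big\{m\in ba_{*,+}:\ \int b\,dm\le\pi(b)\ \text{for all }b\in\B\Big\}.
\end{equation*}
The inclusion $\subset$ holds because $\B\subset\B_*\subset\K_*$. For $\supset$, fix $m$ in the right‑hand side. The linear functional $b\mapsto\int b\,dm$ on the subspace $\B$ of $\K$ is dominated there by the sublinear functional $\pi$, so by Hahn--Banach it extends to a linear $\phi\le\pi$ on $\K$. Since $g\ge_*0$ forces $\phi(-g)\le\pi(-g)\le\pi(0)=0$, the extension $\phi$ is positive, hence $\phi\in\Phi(\pi)$, and Lemma \ref{lemma phi} gives $\phi=\phi^\perp+\int\cdot\,dm_\phi$ with $\phi^\perp$ positive and vanishing on $\B$ — hence also on $\B_*$, each of whose elements is order‑dominated by a constant. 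Testing on indicators, $m_\phi(A)=\phi(\set A)=\int\set A\,dm=m(A)$ for every $A\subset\Omega$, so $m_\phi=m$. Now for $f\in\K$ one has $\int|f|\,dm=\phi(|f|)-\phi^\perp(|f|)\le\phi(|f|)\le\pi(|f|)<\infty$, so $\K\subset L^1(m)$; and for $f\in\K_*$, $f^-\in\B_*$ gives $\phi^\perp(f)=\phi^\perp(f^+)\ge0$, whence $\int f\,dm\le\phi(f)\le\pi(f)$. Thus $m\in\M$.

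From the displayed identity, $\M$ is the intersection of the weak$^*$‑closed sets $\{m\ge0\}$, the sets $\{m:m(N)=0\}$ for $N\in\Neg_*$, and the sets $\{m:\int b\,dm\le\pi(b)\}$ for $b\in\B$ (all closed because $ba$ is the dual of the bounded functions, so each $m\mapsto\int b\,dm$ with $b\in\B$ is weak$^*$‑continuous), and it lies in the ball $\{\|m\|\le q_0\}$ since $m(\Omega)=\int 1\,dm\le\pi(1)\le q_0$; by Banach--Alaoglu it is weak$^*$ compact. Finally $\M=\M(\pi)$: the Hahn--Banach construction above attaches to each $m\in\M$ some $\phi\in\Phi(\pi)$ with $m_\phi=m$, so $\M\subset\M(\pi)$; conversely, for $\phi\in\Phi(\pi)$ one has $m_\phi(A)=\phi(\set A)\ge0$ and $m_\phi(N)=\phi(\set N)=0$ for $N\in\Neg_*$ (because $\set N=_*0$), while the two estimates in the previous paragraph show $\K\subset L^1(m_\phi)$ and $\int f\,dm_\phi\le\pi(f)$ on $\K_*$, so $m_\phi\in\M$. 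I expect the $\supset$ half of the displayed identity to be the only genuine obstacle — promoting the bound on $\B$ to a $\pi$‑dominated linear functional on all of $\K$ and recognizing, via Lemma \ref{lemma phi}, that its finitely additive part is exactly $m$; the rest is bookkeeping with the axioms and the facts recalled above.
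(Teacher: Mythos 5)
Your proof is correct, but it takes a genuinely different route from the paper's. The paper verifies weak$^*$ closedness of $\M$ directly: it takes a limit point $m_0$, shows $\K\subset L^1(m_0)$ by a truncation argument on $\abs f\wedge n$, and then handles $\M=\M(\pi)$ by a net/Tychonoff compactness argument on $\Phi(\pi)$. You instead reduce everything to the identity $\M=\{m\in ba_{*,+}:\int b\,dm\le\pi(b)\text{ for all }b\in\B\}$, whose nontrivial half is a Hahn--Banach extension from $\B$ to $\K$ combined with Lemma \ref{lemma phi}. This buys you two things: compactness becomes an immediate Banach--Alaoglu argument (an intersection of weak$^*$-closed half-spaces indexed by \emph{bounded} test functions inside the ball of radius $q_0$), with no need for the truncation step; and the equality $\M=\M(\pi)$ comes out constructively, since your extension attaches to each $m\in\M$ an explicit $\phi\in\Phi(\pi)$ with $m_\phi=m$. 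The latter is a real improvement: the paper's net argument only shows that $\M(\pi)$ is relatively closed in $\M$, which does not by itself yield $\M\subset\M(\pi)$, whereas your argument does.

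One step needs patching. You assert that $\phi^\perp$ vanishes on $\B_*$ because its elements are ``order-dominated by a constant.'' Lemma \ref{lemma phi} only gives vanishing on $\B$, and $\ge_*$-domination by a constant does not give pointwise boundedness, so positivity alone does not carry you from $\B$ to $\B_*$. The correct argument (used in the proof of Theorem \ref{th efficient}) splits $f\in\B_*$ as $f\set{N^c}+f\set N$ with $N\in\Neg_*$ and $f\set{N^c}\in\B$, and then uses $0\le\phi^\perp(\abs f\set N)\le\phi(\abs f\set N)\le\pi(\abs f\set N)\le\pi(0)=0$, which requires the domination $\phi\le\pi$ and not just positivity. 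With that one-line fix the argument is complete.
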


\begin{proof}
If $q$ is coherent, $\M$ is non empty by Theorem \ref{th coherent}. By
\eqref{M}, $\M(\pi)\subset\M$. 
Thus, we only need to prove that $\M$ is closed in the weak$^*$ topology of $ba$ 
and that $\M\subset\M(\pi)$. Let $m_0$ be an element of the closure of $\M$ and $f\in\K$. Then 
$m_0\in ba_{*,+}$ and 
\begin{align*}
\int(\abs f\wedge n)dm_0
\le
\sup_{m\in\M}\int(\abs f\wedge n)dm
\le
\sup_{m\in\M}\int\abs fdm
\le
\pi(\abs f)
\end{align*}
so that the sequence $\sseqn{\abs f\wedge n}$ is Cauchy in $L^1(m_0)$.
Moreover, for all $c>0$%
\footnote{By $v(m)$ we total variation of $m$ as defined in \cite[III.1.9]{bible}.} 
\begin{align*}
v(m_0)(\abs f>c+\abs f\wedge n)
\le
v(m_0)(\abs f>c+n)
\le
\frac{1}{c+n}\int[\abs f\wedge(c+n)]dm_0
\le
\frac{\pi(\abs f)}{c+n}
\end{align*}
which proves that $\abs f\wedge n$ converges to $\abs f$ in $L^1(m_0)$ and so that 
$f\in L^1(m_0)$ \cite[III.3.6]{bible} . Moreover, if $f\in\K_*$ 
\begin{align*}
\int fdm_0
=
\lim_n\int (f\wedge n)dm_0
\le
\sup_{m\in\M}\int fdm
\le
\pi(f)
\end{align*}
which proves that $m_0\in\M$. Observe that by Tychonoff Theorem \cite[I.8.5]{bible}, 
the set $\Phi(\pi)$ is compact in the topology induced on it by $\K$. Let 
$\net m\gamma\Gamma$ be a net in $\M(\pi)$ converging to $m\in\M$ in 
the weak $^*$ topology of $ba$. For each $\gamma\in\Gamma$ there exists 
$\phi_\gamma\in\Phi(\pi)$ such that $m_\gamma=m_{\phi_\gamma}$. By moving to a subnet 
if necessary we obtain that the net $\net\phi\gamma\Gamma$ converges in the 
topology induced by $\K$ to some limit $\phi\in\Phi(\pi)$. Denote by $m_\phi$ the 
part of $\phi$ representable as an integral, as in \eqref{phi decomposition}. If 
$b\in\B$ we have, by the inclusion $\B\subset\K$ that follows from $q$ being 
coherent,
\begin{align*}
\int bdm_\phi
=
\phi(b)
=
\lim_\gamma\phi_\gamma(b)
=
\lim_\gamma\int bdm_\gamma
=
\int bdm
\end{align*}
so that $m=m_\phi\in\M(\pi)$.
\end{proof}

\begin{corollary}
\label{cor beta}
The functional $\beta$ defined in \eqref{beta} is positive and satisfies
\begin{equation}
\label{beta bounds pi}
-\lim_n\{\pi(f)-\pi(f\vee-n)\}\le\beta(f)\le\lim_n\{\pi(f)-\pi(f\wedge n)\}
\qquad f\in\K
\end{equation}
and
\begin{equation}
\label{beta K*}
\beta(f)=\lim_n\{\pi(f)-\pi(f\wedge n)\}
\qquad f\in\K_*
\end{equation}
\end{corollary}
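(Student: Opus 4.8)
The plan is to read off all three statements directly from the decomposition $\pi(f)=\beta(f)+\sup_{m\in\M}\int f\,dm$ established in Theorem \ref{th beta}, together with the sandwiching of $\beta$ by $\phi^\perp$-terms recorded in \eqref{beta bounds phi} and the fact that $\sup_{\phi\in\Phi(\pi)}\phi^\perp(b)=0$ for $b\in\B_*$ noted in the proof of Theorem \ref{th coherent}. Positivity of $\beta$ is the quickest point: for $f\in\K$ with $f\ge_*0$ one has $\pi(f)\ge\sup_{m\in\M}\int f\,dm$ since each $m\in\M=\M(\pi)$ is positive (Lemma \ref{lemma M}), hence $\beta(f)=\pi(f)-\sup_{m\in\M}\int f\,dm\ge 0$. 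Actually the sign of $\beta$ is inherited from $\phi^\perp\ge0$ via \eqref{beta bounds phi}; either route works.

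For the two-sided bound \eqref{beta bounds pi}, the idea is to decompose $f=(f\wedge n)+(f-n)^+$ and use subadditivity of $\pi$ (Lemma \ref{lemma pi}) together with the linearity of $\beta$ and of $m\mapsto\int\cdot\,dm$. First I would note that $f\wedge n\in\B_*$ is false in general (it is only bounded above), so instead I would argue on $\beta$ directly: since $\beta$ is linear on $\K$ (it is $\pi$ minus a sup of integrals, but more precisely it is the restriction of a linear functional, cf. the proof of Theorem \ref{th efficient}), $\beta(f)=\beta(f\wedge n)+\beta((f-n)^+)$. The term $\beta(f\wedge n)$ need not vanish, but one controls it by writing $f\wedge n = n-(n-f)^+$ and observing $\beta(n)=0$ (constants are in $\B_*$), so $\beta(f\wedge n)=-\beta((n-f)^+)$, and then bounding $\beta((n-f)^+)$ above by $\pi((n-f)^+)$ — wait, that is not quite \eqref{beta bounds pi}. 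The cleaner path: from $\beta\le\pi-\sup_m\int\cdot\,dm$ and $\beta((f-n)^+)\le\pi((f-n)^+)$ one gets, using $\beta(f)=\beta(f\wedge n)+\beta((f-n)^+)$ and $|\beta(f\wedge n)|\le\pi(f)-\pi(f\wedge n)+o(1)$ type estimates from subadditivity of $\pi$; I would use $\pi(f)\le\pi(f\wedge n)+\pi((f-n)^+)$ to get $\beta((f-n)^+)\ge\pi(f)-\pi(f\wedge n)-\big(\sup_m\int(f-n)^+dm\big)$, and then let $n\to\infty$. The matching lower bound follows by the symmetric argument with $f\vee(-n)$ and $-(f+n)^-$. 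The monotone limits exist because $n\mapsto\pi(f\wedge n)$ is monotone and bounded above by $\pi(f)$, and similarly for $\pi(f\vee-n)$.

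For \eqref{beta K*}, when $f\in\K_*$ we have $f^-\in\B_*$, so $f\vee(-n)$ differs from $f$ only on the set $\{f<-n\}$ which is empty for $n$ large (since $f_*>-\infty$ forces $\{f<f_*-\varepsilon\}\in\Neg_*$, but more simply $f\vee(-n)=_*f$ once $n>-f_*$, using $f^-\in\B_*$). Hence $\pi(f)-\pi(f\vee-n)\to0$, so the lower bound in \eqref{beta bounds pi} collapses to $\beta(f)\ge 0$ in the limit; combined with the upper bound this does not immediately give equality, so instead I would argue that the upper estimate in \eqref{beta bounds pi} is in fact an equality for $f\in\K_*$ by observing $\sup_m\int(f-n)^+dm = \sup_m\int f\,dm - \sup_m\int(f\wedge n)\,dm$ cannot hold in general, so the right move is: $\beta(f)=\beta(f\wedge n)+\beta((f-n)^+)$ with $\beta(f\wedge n)=-\beta((n-f)^+)$ and, since $f^-\in\B_*$ gives $(n-f)^+\in\B_*$ for no $n$... the correct observation is that $(f\wedge n)^-=f^-$ for $n\ge0$, so $f\wedge n\in\K_*$ and in fact $f\wedge n\to f$ in a sense controlled by $\B_*$; then the same computation as in the proof of Theorem \ref{th coherent} (where it is shown $\beta_h(g)=\beta_h(g^+)$ for $g\in\K_*$) yields $\beta(f\wedge n)=\beta((f\wedge n)^+)=\beta(f^+\wedge n)$, and passing to the limit together with $\beta(f^+)=\lim_n\{\pi(f^+)-\pi(f^+\wedge n)\}$ and the identity $\pi(f)-\pi(f\wedge n)=\pi(f^+)-\pi(f^+\wedge n)+o(1)$ (valid since $f^-\in\B_*$ and $\pi$ is anonymity-free but additive under the $\B_*$-splitting by \eqref{pi(f+b)}) gives \eqref{beta K*}.

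The main obstacle I expect is bookkeeping the $\B_*$-splitting correctly: the functional $\beta$ vanishes on $\B_*$ but $f\wedge n$ is not in $\B_*$, so one must route every estimate through the genuinely bounded pieces (differences like $f^-$, $(n-f)^+$ when $f$ is bounded below, or the truncation errors) and invoke \eqref{pi(f+b)} from Lemma \ref{lemma pi} to move a $\B_*$-summand in and out of $\pi$ without loss. Once that is set up, the monotone convergence of the truncated $\pi$-values does the rest.
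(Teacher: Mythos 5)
Your positivity argument is fine: for $f\ge_*0$ one has $f\in\K_*$, so $\sup_{m\in\M}\int f\,dm\le\pi(f)$ by the defining inequality in \eqref{M}, whence $\beta(f)\ge0$ (the paper reads the same conclusion off \eqref{beta bounds phi} and $\phi^\perp\ge0$). The rest of the proposal, however, rests on a step that fails: you repeatedly use the identity $\beta(f)=\beta(f\wedge n)+\beta((f-n)^+)$, justified by the assertion that $\beta$ is \emph{the restriction of a linear functional, cf.\ the proof of Theorem \ref{th efficient}}. That confuses two different objects. The $\beta$ of Theorem \ref{th efficient}, like the $\beta_h$ in the proof of Theorem \ref{th coherent}, is indeed linear, being the singular part $\phi^\perp$ of a single Hahn--Banach extension; but the $\beta$ of \eqref{beta} is $\sup_{\phi\in\Phi(\pi)}\phi(\cdot)-\sup_{m\in\M}\int\cdot\,dm$, a difference of two sublinear functionals, and \eqref{beta bounds phi} only sandwiches it between $\inf_{\phi}\phi^\perp$ and $\sup_{\phi}\phi^\perp$. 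It is neither additive nor subadditive in general, so the truncation bookkeeping ($\beta(f\wedge n)=-\beta((n-f)^+)$, and so on) has no basis. Your derivation of \eqref{beta K*} is in addition circular: it invokes $\beta(f^+)=\lim_n\{\pi(f^+)-\pi(f^+\wedge n)\}$, an instance of the very identity to be proved, and borrows the property $\beta_h(g)=\beta_h(g^+)$ from the linear $\beta_h$, which again is not the $\beta$ at hand.

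The paper never decomposes $\beta$; it estimates the term $\sup_{m\in\M}\int f\,dm$ in $\beta(f)=\pi(f)-\sup_{m\in\M}\int f\,dm$ directly. Monotone truncation gives $\sup_m\int f\,dm=\sup_m\lim_n\int(f\vee-n)\,dm\le\lim_n\sup_m\int(f\vee-n)\,dm\le\lim_n\pi(f\vee-n)$ for one bound and, since $\phi^\perp(f\wedge n)=\phi^\perp(f\wedge n-n)\le0$ for every $\phi\in\Phi(\pi)$ while $\M=\M(\pi)$ by Lemma \ref{lemma M}, also $\sup_m\int f\,dm=\lim_n\sup_\phi\{\phi(f\wedge n)-\phi^\perp(f\wedge n)\}\ge\lim_n\pi(f\wedge n)$ for the other; together these yield \eqref{beta bounds pi}. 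For \eqref{beta K*} the decisive observation you did not isolate is that $f\in\K_*$ forces $f\wedge n\in\B_*$ (it is bounded above by $n$ and $\ge_*$-bounded below because $f_*>-\infty$), so that $\beta(f\wedge n)=0$ and hence $\pi(f\wedge n)=\sup_m\int(f\wedge n)\,dm$ exactly; exchanging the monotone limit in $n$ with the supremum over $\M$ then gives $\sup_m\int f\,dm=\lim_n\pi(f\wedge n)$ and the equality follows. If you wish to repair your write-up, replace every appeal to additivity of $\beta$ by these direct estimates on the supremum of integrals; the device \eqref{pi(f+b)} is not needed.
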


\begin{proof}
Positivity of $\beta$ follows from \eqref{beta bounds phi} 
and the fact that $\phi^\perp(f)\ge0$ for each $f\in\K_*$ and $\phi\in\Phi(\pi)$. 
Moreover,
\begin{align*}
\sup_{m\in\M}\int fdm
=
\sup_{m\in\M}\lim_n\int(f\vee-n)m
\le
\lim_n\sup_{m\in\M}\int(f\vee-n)m
\le
\lim_n\pi(f\vee-n)
\end{align*}
Likewise, given that $\phi^\perp(f\wedge n)\le0$ for all $f\in\K$ and $\phi\in\Phi(\pi)$
by Lemma \ref{lemma M}
\begin{align*}
\sup_{m\in\M}\int fdm
=
\lim_n\sup_{m\in\M}\int(f\wedge n)dm
=
\lim_n\sup_{\phi\in\Phi}\{\phi(f\wedge n)-\phi^\perp(f\wedge n)\}
\ge
\lim_n\pi(f\wedge n)
\end{align*}
and \eqref{beta bounds pi} is proved. \eqref{beta K*} follows from
\begin{align*}
\beta(f)
=
\pi(f)-\sup_{m\in\M}\int fdm
=
\pi(f)-\sup_{m\in\M}\lim_n\int (f\wedge n)dm
=
\pi(f)-\lim_n\sup_{m\in\M}\int (f\wedge n)dm
\end{align*}
and the fact that $\sup_{m\in\M}\int (f\wedge n)dm=\pi(f\wedge n)$ whenever 
$f\in\K_*$.
\end{proof}

In the next results write $J(X)=\{0=j_0<j_1<\ldots<j_I\}$ and $j_{I+1}=X^*$.

\begin{lemma}
\label{lemma un}
Let Assumption \ref{ass ATM} hold.
Let $g\in\Gamma$, $F(X)=\sum_{i=1}^I\alpha_iX(j_i)$. Then,
\begin{equation}
\frac{F(X)}{X\wedge1}\ge_*\frac{g(X)}{X\wedge1}
\quad\text{if and only if}\quad
F(j_i)\ge g(j_i)
\quad
i=1,\ldots,I+1
\end{equation}
\end{lemma}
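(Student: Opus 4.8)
The plan is to reduce the stochastic‑order statement — which compares two functions of $\omega$ that depend on $\omega$ only through the single quantity $X$ — to pointwise comparisons of the real functions $F$ and $g$ on $\R_+$. Two structural facts drive everything: $F(x)=\sum_{i=1}^I\alpha_i(x-j_i)^+$ is affine on each interval $[j_k,j_{k+1}]$, $k=0,\dots,I$, and $g$ is convex with $g(0)=F(0)=0$. Assumption \ref{ass ATM} is used to confine the essential range of $X$ to $[0,X^*]$: from the definition of $X^*$, the equivalence $h\ge_*0\iff h\wedge0\ge_*0$ noted inside the proof of Theorem \ref{th risk}, and $(\textit{CERT})$, $(\textit{REST})$, $(\textit{CONE})$, one obtains $\{X>X^*+\varepsilon\}\in\Neg_*$ for each $\varepsilon>0$.

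For the implication ($\Leftarrow$) I would first deduce from $F(j_i)\ge g(j_i)$, $i=1,\dots,I+1$, that $F\ge g$ on all of $[0,X^*]$. For $x=\lambda j_k+(1-\lambda)j_{k+1}\in[j_k,j_{k+1}]$, affinity of $F$ and convexity of $g$ give $F(x)=\lambda F(j_k)+(1-\lambda)F(j_{k+1})\ge\lambda g(j_k)+(1-\lambda)g(j_{k+1})\ge g(x)$; the endpoints are covered since $F(0)=0=g(0)$ and $j_{I+1}=X^*$. As $X>0$, this makes $\frac{F(X)-g(X)}{X\wedge1}$ nonnegative pointwise on $\{X\le X^*\}$, hence $\ge_*0$ there by $(\textit{CERT})$ and $(\textit{REST})$. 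On $\{X>X^*\}$ I would split $\sset{X>X^*}=\sset{X>X^*+\varepsilon}+\sset{X^*<X\le X^*+\varepsilon}$: the first part is carried by an event of $\Neg_*$, and on the support of the second, continuity of $F-g$ together with $F(X^*)\ge g(X^*)$ bounds $\frac{F(X)-g(X)}{X\wedge1}$ below by $-\delta_\varepsilon$ with $\delta_\varepsilon\downarrow0$ as $\varepsilon\downarrow0$; taking $\varepsilon=2^{-n}$ and using $(\textit{APPR})$ yields $\frac{F(X)-g(X)}{X\wedge1}\sset{X>X^*}\ge_*0$, whence $(\textit{CONE})$ combines the two pieces into $\frac{F(X)-g(X)}{X\wedge1}\ge_*0$, i.e. $\frac{F(X)}{X\wedge1}\ge_*\frac{g(X)}{X\wedge1}$.

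For ($\Rightarrow$) I would argue by contraposition: if $F(j_i)<g(j_i)$ for some $i$, continuity yields $\delta>0$ and an interval around $j_i$ contained in $(0,X^*]$ (with right endpoint $X^*$ when $i=I+1$) on which $\frac{F(x)-g(x)}{x\wedge1}\le-\delta$. Writing $E$ for the set of $\omega$ with $X(\omega)$ in that interval, the relation $\frac{F(X)}{X\wedge1}\ge_*\frac{g(X)}{X\wedge1}$ would give, via $(\textit{REST})$, $\frac{F(X)-g(X)}{X\wedge1}\set E\ge_*0$; but pointwise this equals a function $\le-\delta\set E\le0$, so $\delta\set E=_*0$, i.e. $E\in\Neg_*$. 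This contradicts the non‑degeneracy of $X$ at the strikes of $J(X)$ — that $X$ essentially attains every level up to $X^*$, so that no sub‑interval of $(0,X^*]$ straddling a $j_i$ (resp. terminating at $X^*$, for $i=I+1$) can be negligible — a property in force under the hypotheses present when this lemma is applied.

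The step I expect to be the main obstacle is the tail on $\{X>X^*\}$ in direction ($\Leftarrow$): $\frac{F(X)-g(X)}{X\wedge1}$ need not be bounded there, so the negligible event $\{X>X^*\}$ cannot simply be discarded. The remedy is precisely the split at $X^*+\varepsilon$, using the negligibility of $\{X>X^*+\varepsilon\}$ on one side and, on the thin slab $\{X^*<X\le X^*+\varepsilon\}$, the continuity of $F-g$ together with the boundary inequality $F(X^*)\ge g(X^*)$ to trap $\frac{F(X)-g(X)}{X\wedge1}$ between $0$ and an arbitrarily small negative constant, after which $(\textit{APPR})$ closes the argument. Direction ($\Rightarrow$) is then routine once the non‑negligibility of neighbourhoods of the $j_i$ is in hand.
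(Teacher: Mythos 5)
Your route is essentially the paper's: for sufficiency you pass from $F(j_i)\ge g(j_i)$, $i=1,\dots,I+1$, to $F\ge g$ on $[0,X^*]$ by playing the piecewise affinity of $F$ against the convexity of $g$ (with $F(0)=g(0)=0$), and for necessity you exhibit a non-negligible set of the form $\{j_i-\eta<X\le j_i\}$ on which the discounted difference is bounded away from zero below. You are in fact more careful than the printed proof on two points: you correctly trace the non-negligibility of these sets to the hypotheses in force where the lemma is used (e.g.\ $(X\sset{X\le j})^*=j$ in Theorem \ref{th hedging}) rather than to Assumption \ref{ass ATM} alone, and you notice that the sufficiency step cannot simply conclude $F(X)\ge g(X)$ from $F\ge g$ on $[0,X^*]$, since $X$ may exceed $X^*$ off an essentially small set.

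There is, however, a genuine gap in your treatment of the far tail. You discard the piece supported on $\{X>X^*+\varepsilon\}$ on the sole ground that this event lies in $\Neg_*$, but for a general regular stochastic order the negligibility of $N$ does not imply $h\set N\ge_*0$ when $h$ is unbounded below on $N$: property \eqref{prop} and the comparison $h\set N\ge -c\set N$ both require a bound, and one can exhibit a coherent cash--subadditive loss measure (e.g.\ $\rho(f)=(-f(0))^++\limsup_n n^{-1}(-f(n))^+$ on $\Omega=\{0\}\cup\N$, for which $\Neg_*=\{A:0\notin A\}$ yet $\rho(-n^2\set\N)=\infty$) whose induced order violates the implication. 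Since $F-g$ may decrease linearly beyond $X^*$ and $X$ may be unbounded on $\Omega$, the function $\frac{F(X)-g(X)}{X\wedge1}$ is exactly of this unbounded type on $\{X>X^*+\varepsilon\}$, so your $\varepsilon$-slab only repairs the middle band, not the tail. The gap is reparable without any splitting: from $X^*+2^{-n}-X\ge_*0$ and (\textit{APPR}) one gets $X^*-X\ge_*0$, hence $-(X-X^*)^+\ge_*0$ by (\textit{REST}); convexity of $g$ with $\lim_n g(n)/n<\infty$, piecewise affinity of $F$ and $F(X^*)\ge g(X^*)$ give $F(x)-g(x)\ge c(x-X^*)$ for $x\ge X^*$ and some $c\in\R$, while $X\wedge1\ge X^*\wedge1>0$ there, so $\frac{F(X)-g(X)}{X\wedge1}\set{X>X^*}+C(X-X^*)^+\ge0$ pointwise for a suitable $C\ge0$; (\textit{CERT}) and (\textit{CONE}) then deliver $\frac{F(X)-g(X)}{X\wedge1}\set{X>X^*}\ge_*0$ in one stroke.
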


\begin{proof}
If $F(j_i)<g(j_i)-\varepsilon$ for some $\varepsilon>0$ and $i=1,\ldots,I+1$
then by continuity there exists $(j_i-j_{i-1})/2>\eta>0$ such that $f<g-\varepsilon$ 
in restriction to the set $A_i=\{k_i-\eta<X\le k_i\}$. By Assumption \ref{ass ATM}, 
$A_i\notin\Neg_*$, moreover, $X\wedge1\ge (j_i+j_{i-1})/2>0$ on $A_i$ so that 
$F(X)/(X\wedge1)\ge_* g(X)/(X\wedge1)$ is contradicted.

Conversely, if $F(j_i)\ge g(j_i)$ holds for $j=1,\ldots,I+1$, then, given that 
$F(0)=g(0)=0$, that $g$ is convex and $f$ piecewise linear, we conclude that 
$F(x)\ge g(x)$ for all $0\le x\le X^*$ and so that $f(X)\ge g(X)$ and thus 
$F(X)/(X\wedge1)\ge_* g(X)/(X\wedge1)$.
\end{proof}

\begin{lemma}
\label{lemma g}
Let Assumption \ref{ass ATM} hold, choose $g\in\Gamma$. Write 
\begin{align}
\mathbf g
&=
\left[
\begin{tabular}{c}
$g(j_1)$\\
$g(j_2)$\\
$\vdots$\\
$g(j_I)$\\
$g(j_{I+1})$
\end{tabular}
\right],
&
\mathbf D
&=
\left[
\begin{tabular}{cccc}
$(j_1-j_0)$&$0$&$\ldots$&$0$\\
$(j_2-j_0)$&$(j_2-j_1)$&$\ldots$&$0$\\
$\vdots$&$\vdots$&$\ddots$&$\vdots$\\
$(j_I-j_0)$&$(j_I-j_1)$&$\ldots$&$0$\\
$(j_{I+1}-j_0)$&$(j_{I+1}-j_1)$&$\ldots$&$(j_{I+1}-j_I)$
\end{tabular}
\right]
\quad\text{and}\quad
\mathbf w=\mathbf D^{-1}\mathbf g
\end{align}
The program
\begin{equation}
\label{qX(f)}
\min_{\{\lambda\theta:\theta\in\Theta_X,\ \lambda>0\}}\lambda q(\theta)
\text{ subject to }
\lambda\frac{X(\theta)}{X\wedge1}\ge_*\frac{g(X)}{X\wedge1}
\end{equation}
is solved by the vector 
\begin{equation}
\label{thetaX(g)}
\theta_X(g)=\sum_{i=0}^I\mathbf w[i+1]\delta_X(j_i)\in\Theta_X
\end{equation}

\end{lemma}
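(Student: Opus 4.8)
The plan is to check, in order, that $\mathbf w$ is well defined and non‑negative, that the portfolio \eqref{thetaX(g)} is admissible and feasible for \eqref{qX(f)}, and finally that nothing cheaper super‑replicates $g(X)/(X\wedge1)$. For the first three, note that $\mathbf D$ is lower triangular with positive diagonal $j_m-j_{m-1}$, hence invertible, so $\mathbf w=\mathbf D^{-1}\mathbf g$ makes sense, and the identity $\mathbf D\mathbf w=\mathbf g$ says precisely that the function $F(x):=\sum_{i=0}^{I}\mathbf w[i+1](x-j_i)^+$ agrees with $g$ at $j_1,\dots,j_{I+1}$, while $F(0)=0=g(0)$. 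Since $g\in\Gamma$ is convex, non‑negative and vanishes at $0$, its piecewise‑linear interpolant $F$ on the nodes $0=j_0<j_1<\dots<j_{I+1}$ has non‑decreasing slopes, the first of which is $g(j_1)/j_1\ge0$, and these slopes are the partial sums $\mathbf w[1],\ \mathbf w[1]+\mathbf w[2],\ \dots$; hence $\mathbf w\ge0$. Thus $\theta_X(g)=\sum_{i=0}^I\mathbf w[i+1]\delta_X(j_i)$ is a non‑negative portfolio of calls with strikes in $J(X)\subset K(X)$, so $\theta_X(g)\in\Theta_X$ with $X(\theta_X(g))=F(X)$, and since $F(j_m)=g(j_m)$ for $m=1,\dots,I+1$, Lemma \ref{lemma un} yields $F(X)/(X\wedge1)\ge_*g(X)/(X\wedge1)$. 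Hence $(\lambda,\theta)=(1,\theta_X(g))$ is feasible and $\pi_X\bigl(g(X)/(X\wedge1)\bigr)\le q(\theta_X(g))$.

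For the reverse inequality I would argue that no competing strategy costs less than $\sum_{i=0}^I\mathbf w[i+1]q_X(j_i)$. The first reduction is that, in \eqref{qX(f)}, one may restrict $\theta$ to be supported on the strikes $J(X)$: a call whose strike $k$ lies outside $J(X)$ violates the butterfly inequality \eqref{butterfly}, hence is strictly dearer than a convex combination of calls with other strikes that already super‑replicates it (using that $ak_1+(1-a)k_2\le k$ implies $(x-k)^+\le a(x-k_1)^+ +(1-a)(x-k_2)^+$ for all $x\ge0$), so such a call can be substituted away without raising the cost or spoiling the constraint. Once $\theta$ is supported on $J(X)$, Lemma \ref{lemma un} turns the feasibility constraint into the finitely many inequalities $\mathbf D\mathbf a\ge\mathbf g$, with $\mathbf a\ge0$ the vector of positions, so \eqref{qX(f)} becomes the linear program $\min\{\mathbf a^{\top}\mathbf q:\mathbf D\mathbf a\ge\mathbf g,\ \mathbf a\ge0\}$ with $\mathbf q=(q_X(j_0),\dots,q_X(j_I))$; since $\mathbf D\mathbf w=\mathbf g$ and $\mathbf w\ge0$, $\mathbf w$ is primal‑feasible, and to show it is optimal I would exhibit a dual‑feasible vector — a discrete pricing measure carried by the nodes $j_0,\dots,j_{I+1}$ with non‑negative masses, under‑pricing every traded strike and pricing each $J(X)$‑strike exactly — whose existence is exactly what the inequalities \eqref{butterfly} at the points of $J(X)$ guarantee. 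Combining the two halves, $q(\theta_X(g))=\pi_X\bigl(g(X)/(X\wedge1)\bigr)$, i.e. $\theta_X(g)$ solves \eqref{qX(f)}; since $\mathbf w$ depends linearly on $\mathbf g$ this also delivers parts (i)–(ii) of Theorem \ref{th hedging}, the latter upon taking $g(x)=(x-j)^+$, for which $F=g$ already and $\mathbf w$ is a coordinate vector.

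The delicate point is this last step: extracting the dual certificate and verifying it. One must check that the relevant (right) second differences of the node prices $i\mapsto q_X(j_i)$ are non‑negative — which is \eqref{butterfly} read at each $j_i\in J(X)$ — and that the resulting measure under‑prices every $k\in K(X)$, which for $k\notin J(X)$ holds a fortiori because \eqref{butterfly} fails there and the certificate prices the cheaper super‑replicating combination exactly; one must also be careful that the substitution reduction to $J(X)$‑strikes can actually be completed rather than merely iterated. A second, purely order‑theoretic subtlety — inherited from Lemma \ref{lemma un} — is the passage between the pointwise behaviour of the convex payoffs $F$ and $g$ on $[0,X^*]$ and the corresponding $\ge_*$ statements: under Assumption \ref{ass ATM} the slabs $\{j_m-\eta<X\le j_m\}$ fail to be $\ge_*$‑negligible while $\{X>X^*\}$ is, for the purposes of $\ge_*$, negligible, and it is precisely at the nodes $j_1,\dots,j_{I+1}$ that this is needed.
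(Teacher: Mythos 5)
Your proposal is correct and follows essentially the same route as the paper: reduction to strikes in $J(X)$ (which the paper also asserts without detailed justification), conversion of the super-replication constraint into the finite system $\mathbf D\mathbf a\ge\mathbf g$ via Lemma \ref{lemma un}, and a linear-programming duality argument. The dual certificate you describe but do not write out is exactly the paper's vector $\mathbf b$ defined by the recursion \eqref{b recursion}, whose nonnegativity amounts to the butterfly condition \eqref{butterfly} holding at each node of $J(X)$ together with $q_X\ge0$, and which satisfies $\mathbf b^T\mathbf D=\mathbf q^T$, giving $\lambda q(\theta)=\mathbf q^T\mathbf a=\mathbf b^T\mathbf D\mathbf a\ge\mathbf b^T\mathbf g=\mathbf b^T\mathbf D\mathbf w=\mathbf q^T\mathbf w=q(\theta_X(g))$.
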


\begin{proof}
It is, first of all, clear that in solving \eqref{qX(f)} one may restrict attention to 
portfolios formed with options with strike prices in $J(X)$. This implies that 
each $\lambda X(\theta)$ with $\lambda>0$ and $\theta\in\Theta_X$ in 
\eqref{qX(f)} may be taken to be of the form 
$F_{\mathbf a}(X)=\sum_{i=0}^Ia_i(X-j_i)^+$ so 
that $\lambda q(\theta)=\mathbf q^T\mathbf a$ with 
\begin{equation*}
\mathbf a^T=[a_0,\ldots,a_I],\ 
\mathbf q^T=[q_X(j_0),\ldots,q_X(j_I)]
	\in
\R^{I+1}_+
\end{equation*}
Remark that 
$[F_\mathbf a(j_1),\ldots,F_\mathbf a(j_{I+1)}]^T=\mathbf D\mathbf a$. 
Fix $g\in\Gamma$. By Lemma \ref{lemma un}
$\frac{F_\mathbf a(X)}{X\wedge1}\ge_*\frac{g(X)}{X\wedge1}$ 
is equivalent to $\mathbf D\mathbf a\ge\mathbf g$.

Define the vectors $\mathbf w,\mathbf b\in\R^{I+1}$ implicitly by letting
\begin{equation}
\label{b recursion}
b_Id_I=q_X(j_I)
\quad\text{and}\quad
b_I+\sum_{i=n}^{I-1}b_i
=\frac{q_X(j_n)-q_X(j_{n+1})}{j_{n+1}-j_n}
\qquad n=0,\ldots,I-1
\end{equation}
and
\begin{equation}
\label{w recursion}
\sum_{i=1}^n\mathbf w[i]
	=
\frac{g(j_n)-g(j_{n-1})}{j_n-j_{n-1}}
\qquad
n=1,\ldots,I
\quad\text{and}\quad
\sum_{i=0}^I\mathbf w[i+1]d_i
	=
g(j_{I+1})
\end{equation}
The following properties are easily established by induction: (\textit i) $\mathbf b\ge0$ 
(as $j_0,\ldots,j_I\in J(X)$) (\textit{ii}) $\mathbf w\ge0$ (as $f\in\Gamma$), 
(\textit{iii}) $\mathbf b^T\mathbf D=\mathbf q^T$ and (\textit{iv}) 
$\mathbf w=\mathbf D^{-1}\mathbf g$. But then, if $\lambda>0$ and $\theta\in\Theta_X$
are such that $\lambda\frac{X(\theta)}{X\wedge1}\ge_*\frac{g(X)}{X\wedge1}$
\begin{align*}
\lambda q(\theta)
	\ge
\min_{\left\{\mathbf a\in\R^{I+1}_+:\mathbf D\mathbf a\ge\mathbf f\right\}}%
\mathbf q^T\mathbf a
	=
\min_{\left\{\mathbf a\in\R^{I+1}_+:\mathbf D\mathbf a\ge\mathbf f\right\}}%
\mathbf b^T\mathbf D\mathbf a
	\ge
\mathbf b^T\mathbf g
	=
\mathbf q^T\mathbf w
	=
q(\theta_X(g))
\end{align*}
\end{proof}

\end{document}